\newlength\aftertitskip     \newlength\beforetitskip
\newlength\interauthorskip  \newlength\aftermaketitskip
\newtheorem{theorem}{Theorem}[section]
\newtheorem{lemma}[theorem]{Lemma}
\newtheorem{corollary}[theorem]{Corollary}
\newtheorem{proposition}[theorem]{Proposition}
\theoremstyle{definition}
\newtheorem{definition}[theorem]{Definition}
\newtheorem{remark}[theorem]{Remark}
\numberwithin{equation}{section}
\title{Spreading and Structural Balance on Signed Networks}
\author{Yu Tian\thanks{Nordita, Stockholm University and KTH Royal Institute of Technology, SE-106 91 Stockholm, Sweden
	(yu.tian@su.se).}
\and Renaud Lambiotte\thanks{Mathematical Institute, University of Oxford, OX2 6GG Oxford, UK
	(renaud.lambiotte@maths.ox.ac.uk).}}
	\date{}
\begin{document}
\maketitle

% REQUIRED < 250 words
\begin{abstract}
	Two competing types of interactions often play an important part in shaping system behavior, such as activatory or inhibitory functions in biological systems. Hence, signed networks, where each connection can be either positive or negative, have become popular models over recent years. However, the primary focus of the literature is on the unweighted and structurally unbalanced ones, where all cycles have an even number of negative edges. Hence here, we first introduce a classification of signed networks into balanced, antibalanced or strictly unbalanced ones, and then characterize each type of signed networks in terms of the spectral properties of the signed weighted adjacency matrix. In particular, we show that the spectral radius of the matrix with signs is smaller than that without if and only if the signed network is strictly unbalanced. These properties are important to understand the dynamics on signed networks, both linear and nonlinear ones. Specifically, we find consistent patterns in a linear and a nonlinear dynamics theoretically, depending on their type of balance. We also propose two measures to further characterize strictly unbalanced networks, motivated by perturbation theory. Finally, we numerically verify these properties through experiments on both synthetic and real networks. 
	%The results can also contribute to a better understanding of the problems that are closely related to the dynamics, e.g., the influence maximization.
\end{abstract}

% REQUIRED
% \begin{keywords}
	% Signed networks, structural balance and antibalance, weighted adjacency matrix, linear and nonlinear dynamics, random walks
	% \end{keywords}

% REQUIRED
% \begin{MSCcodes}
	% 05C22, 05C50, 05C81, 37E25, 39A06, 91D30, 94C15
	% \end{MSCcodes}
% 05C22 - signed and weighted graphs
% 05C50 - graphs and linear algebra
% 05C81 - random walks on graphs
% 37E25 - dynamical systems involving maps of trees and graphs
% 39A06 - linear difference equations
% 82B20 - lattice systems (Ising, dimer, Potts, etc.) and systems on graphs arising in equilibrium statistical mechanics
% 82C20 - dynamic lattice systems (kinetic Ising, etc.) and systems on graphs in time-dependent statistical mechanics
% 91D30 - social networks; opinion dynamics
% 92E10 - molecular structure (graph-theoretic methods, methods of differential topology, etc.)
% 94C15 - applications of graph theory to circuits and networks

\section{Introduction}
% \begin{itemize}
	%     \item Signed networks, together with dynamics on them, are interesting and find many applications.
	%     \item Structural balance is a good way to consider signed networks. 
	%     \item The development and our contributions. 
	%     \item Outline.
	% \end{itemize}
% dynamics on networks -> signed networks
The study of dynamics on networks has attracted much research interest recently due to its applications in engineering, physics, biology, and social sciences, e.g., \cite{Couzin_2005_animal,Jadbabaie_2003_coordination,lambiotte_2022_dynamics,porter_2016_dynamics}. In particular, one can distinguish two general classes of models: linear models and nonlinear models. With roots traceable back to topics such as the ``Gambler’s ruin” problem \cite{Ore_1960_pascal}, the spreading of disease \cite{Mollison_1977_epidemic} and random-walk processes on networks \cite{Masuda2017RW}, linear dynamical processes have been a popular class of models to understand diffusion in various contexts. Meanwhile, nonlinear models have also been analyzed extensively to incorporate more complexity of the dynamical processes \cite{Guibeault_Complcontag_2018,Srivastava_bifur_2010}. A fundamental idea in this area is that by characterizing the underlying network structure between agents, collective dynamics can be predicted or controlled in a systematic manner. Several important results have been established, e.g., that the modular structure can effectively simplify the description of dynamical systems \cite{lambiotte_2022_dynamics}.

Simple networks where there is only one single type of connection can describe a system reasonably well in many cases, but the coexistence of two competing types of interactions can become essential to shape the system behavior, e.g., activatory or inhibitory functions in biological systems, trustful or mistrustful connections in social or political networks, and cooperative or antagonistic relationships in the economic world \cite{Altafini_2012_opinion,Facchetti_2011_large,Marvel_2010_signedContDyn,tian_2022_thesis,tian_2021_role}. Therefore, \textit{signed networks}, where connections can be either positive or negative, have become important ingredients of models in many research fields over recent years. Also in mathematics, signed networks play an important role in various branches, such as group theory, topology and mathematical physics \cite{Bieche_1980_spin,Bilu_2006_spectGap,Cameron_1977_cohomological,Cameron_1994_groups,Cameron_1986_classes}.

% structural balanced and related results.
A central notion in the study of signed networks is that of structural balance \cite{Easley_2010_networks,Facchetti_2011_large,Kunegis_2009_Zoo,SzellEtal_2010_multirelational}. This concept has initially been motivated by problems in social psychology \cite{cartwrightharary_1956_gbalance,harary_1953_balance} and has stimulated new methods for analyzing social networks \cite{Kunegis_2009_Zoo,Symeonidis_2013_multiway,Wu_2012_eigenspace}, biological networks \cite{Symeonidis_2013_biology} and so on. In particular, a signed network is structurally balanced if and only if all its cycles are so-called positive, which can be characterized in terms of the smallest eigenvalue of the signed (normalized) Laplacian \cite{Kunegis_signspect_2010,zaslavsky_1982_signed}. Researchers have shown that in the case of structurally balanced networks, the behavior of the dynamics is largely predictable, and can resort to the corresponding dynamical systems theory, such as consensus dynamics \cite{Altafini_2012_opinion}. 

% the state of the art and where we fit
However, the dynamical properties when the underlying signed networks are not structurally balanced are relatively unknown. In addition, when considering structural properties, a majority of works focus on unweighted signed networks. For this reason, in this paper, we consider dynamics on signed networks where edges can be weighted, and investigate the whole range of situations when the signed network may be balanced, antibalanced, and strictly unbalanced. Our first contribution is to characterize each type of signed networks in terms of the spectral properties of the signed weighted adjacency matrix, and in particular, we show that the spectral radius of the matrix with signs is smaller than that without if and only if the signed network is strictly unbalanced. Then, we exploit this result to understand both linear and nonlinear dynamics on networks, through a linear dynamics model where the coupling matrix is the weighted adjacency matrix (``linear adjacency dynamics" hereafter) and the extended linear threshold model \cite{Tian_info_2021}, appropriately generalized to signed networks in this paper. The two examples are important models in various contexts, e.g., in information propagation. Our second contribution is to show consistent patterns of these two separate dynamics on signed networks depending on their type of balance. We also propose two measures to further characterize strictly unbalanced networks, motivated by perturbation theory. Finally, the results are numerically verified in both synthetic and real networks.

% The outline
This paper is organized as follows. In section \ref{sec:preliminary}, we review the important concepts in signed networks, including the signed Laplacians, structural balance, and two basic rules in defining dynamics on signed networks. Specifically, we explain in detail how to extend random walks to signed networks. The main results are covered in section \ref{sec:main_results}. Specifically, in subsection \ref{sec:structprop}, we first discuss the classification of signed networks in subsection \ref{sec:struct-classif}, and then characterize the spectral properties in each type in subsection \ref{sec:struct-spectrum}. Based on the understanding of the structure, in subsection \ref{sec:dynamicprop}, we further characterize the dynamical properties in terms of both the linear adjacency dynamics in subsection \ref{sec:linear_adj_dynamics} and the extended linear threshold model in subsection \ref{sec:elt}. With the help of the classification, these two different classes of models can exhibit similar performances. As a slightly modified example of the linear adjacency dynamics, we also characterize both the short-term and long-term behavior of signed random walks in subsection \ref{sec:random_walk}. Finally, we verify the results on both synthetic networks that are balanced, antibalanced and strictly unbalanced, obtained from signed stochastic block models, and a real example of signed networks that are strictly unbalanced in section \ref{sec:numerical_exp}. In section \ref{sec:conclusions}, we conclude with potential future directions.

\section{Preliminary}
\label{sec:preliminary}
In this section, we introduce some mathematical preliminaries on signed networks, signed Laplacians, structural balance, and the dynamics on signed networks. Specifically, we illustrate in detail how to extend random walks to signed networks, which we consider as an example of the linear (adjacency) dynamics later in subsection \ref{sec:random_walk}.
% \begin{itemize}
	%     \item The definition of signed networks, and Laplacians.
	%     \item The definitions of structural balance, and the results of signed Laplacian and signed normalised Laplacian. 
	%     \item The dynamics on signed networks: the opposing rule, as well as the repelling rule. \item The dynamics we are interested in: linear adjacency dynamics and the extended threshold model on signed networks. 
	% \end{itemize}
\subsection{Signed networks}
We take $G = (V, E, \mathbf{W})$ as an undirected signed network, where $V = \{v_1, v_2, \dots, v_n\}$ is the node set, an edge $(v_i, v_j)\in E$ is an unordered pair of two distinct nodes in the set $V$, and the signed weighted adjacency matrix $\mathbf{W}\in \mathbb{R}^{n\times n}$ describes the nonzero edge weights. Each edge in $E$ is associated with a sign, positive or negative, characterizing $G$ as a signed network. Specifically, if there is no edge between nodes $v_i, v_j$, $W_{ij} = 0$; otherwise, $W_{ij} > 0$ denotes a positive edge, while $W_{ij} < 0$ denotes a negative edge. The degree of a node $v_i$ is defined as
\begin{equation}
	d_i = \sum_j\abs{W_{ij}},
	\label{equ:signed_d}
\end{equation}
and, motivated by graph drawing \cite{Kunegis_signspect_2010}, the signed Laplacian matrix in the literature is normally defined as 
\begin{equation}
	\mathbf{L} = \mathbf{D} - \mathbf{W},
	\label{equ:signed_L}
\end{equation}
where the signed degree matrix $\mathbf{D}$ is the diagonal matrix with $\mathbf{d} = (d_i)$ on its diagonal. Accordingly, the signed random walk Laplacian is defined as
\begin{equation}
	\mathbf{L}_{rw} = \mathbf{I} - \mathbf{D}^{-1}\mathbf{W},
	\label{equ:signed_Lrw}
\end{equation}
where $\mathbf{I}$ is the identity matrix, for reasons that will be clarified in section \ref{sec_rw}. Most work in the literature is based on unweighted signed networks, hence in this section, we assume $G$ to be unweighted unless otherwise explicitly mentioned, i.e., $\mathbf{W} = \mathbf{A}$ where $\mathbf{A} = (A_{ij})$ is the (unweighted) adjacency matrix with
\begin{equation}
	A_{ij} = 
	\begin{cases}
		sign(W_{ij}),\quad &\text{if } W_{ij}\ne 0;\\
		0,\quad &\text{otherwise},
	\end{cases}
	\label{equ:signed_A}
\end{equation}
and function $sign(\cdot): \mathbb{R} \to \{-1,0,1\}$ indicates the sign of a value.

\subsection{Structural balance and antibalance}
Introduced in 1940s \cite{heider_1946_psychology} and primarily motivated by social and economic networks, a fundamental notion in the study of signed networks is the so-called \textit{structural balance} \cite{cartwrightharary_1956_gbalance}. A signed graph is structurally balanced if and only if there is no cycle with an odd number of negative edges, which defines the cycle to be ``negative". The following theorem provides an alternative interpretation of structural balance in terms of a bipartition of signed graphs.
\begin{theorem}[structure theorem for balance \cite{harary_1953_balance}]
	A signed graph $G$ is structurally balanced if and only if there is a bipartition of the node set into $V=V_1\cup V_2$ with $V_1$ and $V_2$ being mutually disjoint and one of them being nonempty, s.t.~any edge between the two node subsets is negative while any edge within each node subset is positive. 
	\label{the:lit_signed-bal}
\end{theorem}
%If $G$ is a complete graph, it turns out that we can verify its structural balance property by simply checking all triangles: $G$ is structurally balanced if and only if among every set of three nodes there are either one or three positive edges [14]. 

From another aspect, Harary \cite{harary_1957_duality} defined a signed graph $G$ to be antibalanced if the graph negating the edge sign is balanced. Thus, $G$ is antibalanced if and only if there is no cycle with an odd number of positive edges. By reversing the edge sign, Harary gave the following antithetical dual result for antibalance \cite{harary_1957_duality}.
\begin{theorem}[structure theorem for antibalance \cite{harary_1957_duality}]
	A signed graph $G$ is structurally antibalanced if and only if there is a bipartition of the node set into $V=V_1\cup V_2$ with $V_1$ and $V_2$ being mutually disjoint and one of them being nonempty, s.t.~any edge between the two node subsets is positive while any edge within each node subset is negative. 
	\label{the:lit_signed-anti}
\end{theorem}
There is a further line of research in the weakened version of structural balance, where a graph is weakly balanced if and only if no cycle has exactly one negative edge in $G$ \cite{davis_1967_balance,Easley_2010_networks}. But due to its lack of dynamical interpretation, we focus on the original version of structural balance in this paper.

The properties of being balanced or antibalanced can be characterized by the eigenvalues of both the signed Laplacian matrix \eqref{equ:signed_L} and the signed random walk Laplacian \eqref{equ:signed_Lrw}. Specifically, Kunegis \textit{et al.}~showed that, as in the case of the unsigned Laplacian, the signed Laplacian matrix is still positive semi-definite, and it is positive definite if and only if the underlying signed network does not have a balanced connected component \cite{Kunegis_signspect_2010}. Similarly, the smallest eigenvalue of the signed random walk Laplacian vanishes if and only if the network has a balanced connected component. Meanwhile, it is also known that a signed network has an antibalanced connected component if and only if the largest eigenvalue of the signed random walk Laplacian equals $2$ \cite{Li_2009_normL}. The counterpart for the signed Laplacian has also been explored, and we refer the reader to \cite{hou_2003_Laplacian,zaslavsky_1982_signed} for more details in the results of the spectral properties of signed networks. 

The literature investigating the properties of  signed networks that are neither balanced nor antibalanced is more limited. Among them, Atay and Liu characterized such signed networks through the idea of Cheeger inequality \cite{Atay_signedCheeger_2020}. They defined the signed Cheeger constant through how far the network is from having a balanced connected component, and managed to estimate the smallest eigenvalue of the Laplacian matrices from below and above with it. They obtained similar results concerning antibalance and the spectral gap between $2$ and the largest eigenvalue, via an antithetical dual signed Cheeger constant. We refer the reader to \cite{Belardo_2014_Laplacian,Hou_2005_Laplacian} for more work on this aspect.

\subsection{Dynamics on signed networks}
\label{sec_rw}
There are various research directions when analyzing dynamics over signed networks, and definitions of models typically differ in how to interpret the different influence played by a positive versus a negative edge on the dynamics. For example, different consensus algorithms with positive and negative edges have been proposed and investigated \cite{Altafini_consensus_2013,Hendrickx_opinion_2014,Liu_convergence_2017,Meng_multiagen_2015,Meng_antagonistic_2016,Shi_disagree_2013,Shi_beliefs_2016,Xia_balanceDynamics_2016}. There exist two basic types of interactions along the negative edges: the ``opposing negative dynamics" \cite{Altafini_consensus_2013} where nodes are attracted by the opposite values of the neighbours, and the ``repelling negative dynamics" \cite{Shi_disagree_2013} where nodes tend to be repulsive of the relative value of the states with respect to the neighbours instead of being attractive. We refer the reader to Shi, Altafini and Baras \cite{Shi_dynamicsSigned_2019} for a recent review of dynamics on signed networks through extending the classic DeGroot model with the two aforementioned rules on negative edges. In this section, we present in detail the opposing negative dynamics, due to its connection to signed Laplacian and, as we will see, the notion of structural balance in signed networks.

The dynamics on signed networks induced by the opposing rule play an important part in various contexts \cite{Altafini_consensus_2013,Proskurnikov_2018_dynamic,Valcher_2014_bipartite}. Here, we give an interpretation in terms of \textit{random walks} on signed networks. We first consider an unweighted case and write the (unweighted) adjacency matrix as $\mathbf{A} = \mathbf{A}^+ - \mathbf{A}^-$, where $A^+_{ij} = 1$ if there is a positive edge between nodes $v_i$ and $v_j$ and $A^-_{ij} = 1$ if there is a negative edge between them, and we also represent the degree of each node $v_j$ as $d_j = d^+_j + d^-_j$, where $d^+_j$ is the number of positive neighbours of $v_j$ and $d^-_j$ is the number of negative neighbours of $v_j$. In addition, we assume that there are two types of walkers, positive and negative walkers, whose densities on node $v_i$ are $x_i^+$ and $x_i^-$, respectively. Furthermore, guided by the opposing rule, we define that negative edges can flip the sign of walkers going through the edges, while their sign remains unchanged when going through positive edges. That is, a positive walker becomes negative after traversing a negative edge, while it conserves its sign while traversing a positive edge, for instance.
Hence, on each node, there are two different sources for positive walkers, either from positive walkers through positive edges or from negative walkers through negative edges, i.e., 
\begin{eqnarray}
	\label{sys1}
	x_j^+(t) = \sum_i\frac{1}{d_i}\left(A^+_{ij}x_i^+(t-1) + A^-_{ij}x_i^-(t-1)\right); 
\end{eqnarray}
there are also two different sources for negative walkers on each node, either from positive walkers through negative edges or from negative walkers through positive edges, i.e., 
\begin{eqnarray}
	\label{sys2}
	x_j^-(t) = \sum_i\frac{1}{d_i}\left(A^-_{ij}x_i^+(t-1) + A^+_{ij}x_i^-(t-1)\right).
\end{eqnarray}
The whole dynamics is thus governed by the transition matrix of a $2n \times 2n$ matrix, which can be interpreted as the adjacency matrix of a larger graph, where each node appears twice,
\begin{eqnarray}
	\mathbf{A}^{(2)}=\begin{pmatrix}
		\mathbf{A}^+  & \mathbf{A}^-\\
		\mathbf{A}^- & \mathbf{A}^+ 
	\end{pmatrix}.
	\label{equ:A2}
\end{eqnarray}
Indeed, after noting that $\sum_j A^{(2)}_{ij}=d_i $, the system characterized by \eqref{sys1} and \eqref{sys2} has the coupling matrix $\mathbf{P}^{(2)} = \mathbf{D}^{(2)-1}\mathbf{A}^{(2)}$, where $\mathbf{D}^{(2)} = [\mathbf{D}, \mathbf{0}; \mathbf{0}, \mathbf{D}]$ is the diagonal matrix with $\mathbf{d} = (d_j)$ on the diagonal but the appearance doubled. Note that a similar matrix has been investigated in the context of spectral clustering, with sterling results, but it was introduced from a different perspective, via a Gremban's expansion of a Laplacian system \cite{fox2018investigation}.

The properties of the system of equations is thus governed by the spectral properties of $\mathbf{P}^{(2)}$, which can be obtained from two well-known matrices as follows. The total number $n_i$ of walkers on node $v_i$ is obtained by taking the sum of \eqref{sys1} and \eqref{sys2}, leading to
\begin{align*}
	n_j(t) = x_j^+(t) + x_j^-(t)
	= \sum_i\frac{1}{d_i}\left(A^+_{ij} + A^-_{ij}\right) n_i(t) ,  
\end{align*}
where $\frac{1}{d_i}\left(A^+_{ij} + A^-_{ij}\right)$ is the classical transition matrix of the unsigned network whose edge signs are not considered, as expected.

In contrast, the ``polarization" on each node, defined as the difference between the number of positive and negative walkers $x_j(t) = x_j^+(t) - x_j^-(t)$, is obtained by subtracting Equation \eqref{sys2} from \eqref{sys1}, giving
\begin{align*}
	x_j(t) = x_j^+(t) - x_j^-(t) &= \sum_i\frac{1}{d_i}\left(\left(A^+_{ij} - A^-_{ij}\right)x_i^+(t-1) - \left(A^+_{ij} - A^-_{ij}\right)x_i^-(t-1)\right)\\
	&= \sum_i\frac{1}{d_i}\left(A^+_{ij} - A^-_{ij}\right)\left(x_i^+(t-1) - x_i^-(t-1)\right) = \sum_i\frac{1}{d_i}A_{ij}x_i(t-1).  
\end{align*}
This gives the signed (unweighted) transition matrix $\mathbf{P} = \mathbf{D}^{-1}\mathbf{A}$, and the weighted version in terms of $\mathbf{W}$ can be obtained similarly. These equations can then be extended from a discrete-time setting to a continuous-time setting classically \cite{Masuda2017RW}, by assuming that walkers jump at continuous rate or at a rate proportional to the node degree, leading to the signed random walk Laplacian as in \eqref{equ:signed_Lrw} and accordingly the signed Laplacian as in \eqref{equ:signed_L}, respectively. 

Related to random walk processes on networks, the problem of information propagation on networks with only positive connections has been studied extensively in the literature \cite{Mossel_Roch_2010,Pastor-Satorras_epidemic_2015,pastor-satorras_internet_2004}, but relatively less has been explored in the context of signed networks. Among such work, Li \textit{et al.}~\cite{Li_voterIM_2015} extended the voter model to signed networks with the ``opposing negative dynamics", and we also refer the reader to \cite{Chen_signedIC_2011,Liu_signedIC_2019} for extending the classic independent cascade model to signed networks. 

In this paper, we consider two dynamics that are closely related to the information propagation but with more general properties, the linear adjacency dynamics and the extended linear threshold model as defined in \cite{Tian_info_2021}, both of which have been extended to signed networks with the opposing rule. Note that here we consider dynamics on a fixed signed network. There is another line of research to explore the evolution of edge weights in signed networks and their interactions with the balanced structure. However, it is out of scope of our current analysis, and we refer the reader to \cite{Marvel_2010_signedContDyn} and references therein.

\section{Main results}
\label{sec:main_results}
In this section, we further present the interesting properties of signed networks that we found, where we will show how a signed network connects and differentiates from its unsigned counterpart from both the structural and the dynamical perspectives, and how the separate behavior interacts with the structure balance. Throughout the section, we consider connected\footnote{\footnotesize{For disconnected networks, we can consider each of its connected components.}}, undirected and weighted signed networks $G = (V, E, \mathbf{W})$ where $\mathbf{W}$ is the signed weighted adjacency matrix, and the corresponding networks ignoring the edge sign $\bar{G} = (V, E, \bar{\mathbf{W}})$ where $\bar{\mathbf{W}}$ is the unsigned weighted adjacency matrix with $\bar{W}_{ij} = \abs{W_{ij}},\ \forall v_i,v_j\in V$.

\subsection{Structural properties} 
\label{sec:structprop}
We start from the structural properties characterized by the signed weighted adjacency matrix $\mathbf{W}$. Specifically, based on the analysis of structurally balanced and antibalanced graphs, we first introduce the classification of signed networks we will follow throughout this paper in subsection \ref{sec:struct-classif}. Then with the classification, we illustrate how the structural properties in each category different from each other through the spectrum of  $\mathbf{W}$ in subsection \ref{sec:struct-spectrum}. 
% \begin{itemize}
	%     \item Structural balances and related properties.
	%     \item Spectral decomposition and the spectral radius of strictly unbalanced graphs. 
	% \end{itemize}

\subsubsection{Classifications}
\label{sec:struct-classif}
We define a signed graph to be \textit{balanced} as in Theorem \ref{the:lit_signed-bal}, and \textit{antibalanced} as in Theorem \ref{the:lit_signed-anti}. Finally, we define all the remaining signed graphs to be \textit{strictly unbalanced} in Definition \ref{def:signed_strictunb}. Since we focus on the structural properties here, we use the terms ``network" and ``graph" interchangeably. 
\begin{definition}[strict unbalance]
	A signed graph $G$ is strictly unbalanced if $G$ is neither balanced nor antibalanced. 
	\label{def:signed_strictunb}
\end{definition}

With the definitions, we should note that balanced graphs and antibalanced graphs are not always mutually exclusive. For example, a four-node path with edge sign $(-, +, -)$ is both balanced and antibalanced, and so is a four-node cycle with edge sign $(-, +, -, +)$. Indeed, the intersection between balanced graphs and antibalanced graphs only contains signed trees and balanced/antibalanced bipartite graphs, as illustrated in Propositions \ref{pro:balanced-anti-tree} and \ref{pro:balanced-anti-bipart}. As in the literature, we define a path, walk, or cycle to be \textit{positive} if it contains an even number of negative edges, and \textit{negative} otherwise. 
\begin{proposition}
	Every signed tree is both balanced and antibalanced. 
	\label{pro:balanced-anti-tree}
\end{proposition}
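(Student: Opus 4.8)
The plan is to exploit the fact that the paper's definitions of balance and antibalance are both phrased purely in terms of cycles, so the statement becomes essentially immediate for an object that has no cycles. Concretely, I would first recall that $G$ is structurally balanced if and only if it contains no cycle with an odd number of negative edges, and structurally antibalanced if and only if it contains no cycle with an odd number of positive edges (as stated just above Theorems~\ref{the:lit_signed-bal} and~\ref{the:lit_signed-anti}). Since a tree contains no cycle at all, both conditions hold vacuously, and hence every signed tree is simultaneously balanced and antibalanced.

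For completeness, and to connect the statement with the bipartition formulations in Theorems~\ref{the:lit_signed-bal} and~\ref{the:lit_signed-anti}, I would also exhibit the required node bipartition explicitly. Fix an arbitrary root $r \in V$. For each $v \in V$, let $\sigma(v) \in \{0,1\}$ be the parity of the number of negative edges along the unique $r$--$v$ path in the tree, and set $V_1 = \{v : \sigma(v) = 0\}$ and $V_2 = \{v : \sigma(v) = 1\}$. The key point is that, because $G$ is a tree, every edge $(u,w)$ connects a node to its parent, so one of the two $r$-paths is obtained from the other by appending $(u,w)$; hence $\sigma(u)$ and $\sigma(w)$ differ precisely when $(u,w)$ is negative. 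Therefore every negative edge joins $V_1$ to $V_2$ and every positive edge stays within $V_1$ or within $V_2$, which is exactly the condition of Theorem~\ref{the:lit_signed-bal}; since $V = V_1 \cup V_2 \neq \emptyset$, at least one part is nonempty, as required.

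For antibalance I would run the identical argument with ``negative'' replaced by ``positive'' in the definition of $\sigma$ (equivalently, negate every edge sign --- the resulting signed graph is still a tree, so the balance argument applies and produces the bipartition demanded by Theorem~\ref{the:lit_signed-anti}). I do not anticipate any genuine obstacle here: the only points needing a word of care are the nonemptiness clause in the two structure theorems (handled above) and the observation that it is precisely the tree structure that makes the path-parity function $\sigma$ well defined and consistent across every edge. If one simply adopts the cycle-free characterization, even these remarks are unnecessary and the proof is a single line.
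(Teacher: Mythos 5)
Your proof is correct and its constructive core---partitioning the nodes by the parity of negative edges along the unique path from a fixed root, then handling antibalance by negating all edge signs and reusing the balance argument---is essentially identical to the paper's own proof. The additional ``vacuously true because a tree has no cycles'' observation is a valid shortcut under the cycle characterization, though since the paper adopts the bipartition form (Theorem~\ref{the:lit_signed-bal}) as its working definition in Sec.~\ref{sec:struct-classif}, the explicit bipartition you also supply is the part that actually carries the proof.
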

\begin{proof}
	This follows from the definitions. See also Appendix \ref{sec:app_proofs} for a constructive proof with the corresponding bipartitions found, and Lemma 3.1 in \cite{zaslavsky_1982_signed}.
\end{proof}

\begin{proposition}
	A (non-tree) balanced signed graph $G$ is antibalanced if and only if it is bipartite. 
	\label{pro:balanced-anti-bipart}
\end{proposition}
\begin{proof}
	This can be derived from the definitions, and see Appendix \ref{sec:app_proofs} for more detail.
\end{proof}

\subsubsection{Spectrum}
\label{sec:struct-spectrum}
With a better understanding of different types of signed networks, we now characterize them through their spectral properties. Specifically, we show that the eigenvalues and eigenvectors of a signed graph $G$ are closely related to those of its unsigned counterpart $\bar{G}$ in Theorem \ref{the:transition-spect}, and further characterize the leading eigenvalue and eigenvector in Proposition \ref{pro:transition-spect-rho}. Similar results have been considered in the literature but the primary focus is on unweighted graphs or other matrices, e.g., the signed Laplacian \cite{Kunegis_signspect_2010}. 
\begin{theorem}[spectral theorem of balance and antibalance]
	Let $\mathbf{W} = \mathbf{U}\Lambda\mathbf{U}^{T}$ and $\bar{\mathbf{W}} = \bar{\mathbf{U}}\bar{\Lambda}\bar{\mathbf{U}}^{T}$ be the unitary eigendecompositions of $\mathbf{W}$ and $\bar{\mathbf{W}}$, respectively, where $\mathbf{U}\mathbf{U}^T = \mathbf{I}$ and $\bar{\mathbf{U}}\bar{\mathbf{U}}^T = \mathbf{I}$. Let $V_1,V_2$ denote the corresponding bipartition for either balanced or antibalanced graphs, and $\mathbf{S}$ denote the diagonal matrix whose $(i,i)$ element is $1$ if $i\in V_1$ and $-1$ otherwise. 
	\begin{enumerate}
		\item If $G$ is balanced, 
		\begin{align*}
			\Lambda = \bar{\Lambda},\ \mathbf{U} = \mathbf{S}\bar{\mathbf{U}}.
		\end{align*}
		\item If $G$ is antibalanced,
		\begin{align*}
			\Lambda = -\bar{\Lambda},\ \mathbf{U} = \mathbf{S}\bar{\mathbf{U}}.
		\end{align*}
	\end{enumerate}
	\label{the:transition-spect}
\end{theorem}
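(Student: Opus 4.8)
The plan is to exploit the combinatorial characterization of balance (Theorem~\ref{the:lit_signed-bal}) to produce an explicit gauge transformation relating $\mathbf{W}$ to $\bar{\mathbf{W}}$, and then read off the eigendecompositions. Concretely, suppose $G$ is balanced with bipartition $V = V_1 \cup V_2$, and let $\mathbf{I}_1$ be the signature matrix defined in the statement (diagonal, $+1$ on $V_1$, $-1$ on $V_2$). I claim $\mathbf{W} = \mathbf{I}_1 \bar{\mathbf{W}} \mathbf{I}_1$. To see this, compute the $(i,j)$ entry: $(\mathbf{I}_1 \bar{\mathbf{W}} \mathbf{I}_1)_{ij} = (\mathbf{I}_1)_{ii}\,\bar{W}_{ij}\,(\mathbf{I}_1)_{jj} = \sigma_i \sigma_j \lvert W_{ij}\rvert$, where $\sigma_k = \pm 1$ according to membership in $V_1$ or $V_2$. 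By Theorem~\ref{the:lit_signed-bal}, $W_{ij} > 0$ precisely when $v_i,v_j$ lie in the same block (so $\sigma_i\sigma_j = 1$) and $W_{ij} < 0$ precisely when they lie in different blocks (so $\sigma_i\sigma_j = -1$); in either case $\sigma_i\sigma_j\lvert W_{ij}\rvert = W_{ij}$, and when $W_{ij}=0$ both sides vanish. Hence $\mathbf{W} = \mathbf{I}_1 \bar{\mathbf{W}} \mathbf{I}_1$.

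Given this identity, the rest is linear algebra. Since $\mathbf{I}_1$ is symmetric with $\mathbf{I}_1^2 = \mathbf{I}$ (it is its own inverse and orthogonal), we have $\mathbf{W} = \mathbf{I}_1 \bar{\mathbf{W}} \mathbf{I}_1 = \mathbf{I}_1 (\bar{\mathbf{U}} \bar\Lambda \bar{\mathbf{U}}^T) \mathbf{I}_1 = (\mathbf{I}_1 \bar{\mathbf{U}}) \bar\Lambda (\mathbf{I}_1 \bar{\mathbf{U}})^T$. Setting $\mathbf{U} := \mathbf{I}_1 \bar{\mathbf{U}}$, we check $\mathbf{U}\mathbf{U}^T = \mathbf{I}_1 \bar{\mathbf{U}}\bar{\mathbf{U}}^T \mathbf{I}_1 = \mathbf{I}_1 \mathbf{I}_1 = \mathbf{I}$, so this is a legitimate unitary eigendecomposition of $\mathbf{W}$ with eigenvalue matrix $\bar\Lambda$; thus $\Lambda = \bar\Lambda$ and $\mathbf{U} = \mathbf{I}_1\bar{\mathbf{U}}$, as claimed. (One should remark that eigendecompositions are unique only up to ordering and up to rotations within eigenspaces, so the statement is to be read as: this particular choice works.)

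For the antibalanced case, the cleanest route is reduction to the balanced case via sign negation, exactly as in the proof of Proposition~\ref{pro:balanced-anti-tree}: if $G$ is antibalanced then $G' = (V,E,-\mathbf{W})$ is balanced with the same bipartition $V_1 \cup V_2$, and $\overline{-\mathbf{W}} = \bar{\mathbf{W}}$. Applying the balanced result to $G'$ gives $-\mathbf{W} = \mathbf{I}_1 \bar{\mathbf{W}} \mathbf{I}_1$, hence $\mathbf{W} = -\mathbf{I}_1\bar{\mathbf{W}}\mathbf{I}_1 = (\mathbf{I}_1\bar{\mathbf{U}})(-\bar\Lambda)(\mathbf{I}_1\bar{\mathbf{U}})^T$, so $\Lambda = -\bar\Lambda$ and $\mathbf{U} = \mathbf{I}_1\bar{\mathbf{U}}$. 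I do not anticipate a serious obstacle here; the only point requiring a little care is the well-definedness of $\mathbf{I}_1$ — it depends on a choice of bipartition, and one should note that for a connected balanced (resp.\ antibalanced) graph the bipartition is unique up to swapping $V_1 \leftrightarrow V_2$, which only replaces $\mathbf{I}_1$ by $-\mathbf{I}_1$ and leaves $\mathbf{U} = \mathbf{I}_1\bar{\mathbf{U}}$ unchanged up to an overall sign of the eigenvectors, hence harmless. The genuinely load-bearing step is the entrywise verification $\mathbf{W} = \mathbf{I}_1\bar{\mathbf{W}}\mathbf{I}_1$, which is where Theorem~\ref{the:lit_signed-bal} enters; everything after that is bookkeeping.
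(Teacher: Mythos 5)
Your proof is correct and follows essentially the same route as the paper's: establish the gauge identity $\mathbf{W} = \pm\mathbf{I}_1\bar{\mathbf{W}}\mathbf{I}_1$ and conjugate the eigendecomposition of $\bar{\mathbf{W}}$ by $\mathbf{I}_1$. You merely spell out the entrywise verification that the paper dismisses as ``by definition,'' and your parenthetical caveat about eigendecompositions being unique only up to ordering and rotations within eigenspaces is a welcome precision that the paper's appeal to ``uniqueness'' glosses over.
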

\begin{proof}
	If $G$ is balanced, $\mathbf{W} = \mathbf{S}\bar{\mathbf{W}}\mathbf{S}$ by definition. Then, 
	\begin{align*}
		\mathbf{W}= \mathbf{S}\bar{\mathbf{W}}\mathbf{S} = \mathbf{S}\bar{\mathbf{U}}\bar{\Lambda}\bar{\mathbf{U}}^{T}\mathbf{S} = (\mathbf{S}\bar{\mathbf{U}})\bar{\Lambda
		}(\mathbf{S}\bar{\mathbf{U}})^{T},
	\end{align*}
	where the third equality is by $\mathbf{S}\mathbf{S} = \mathbf{I}$. It is the eignedecomposition of $\mathbf{W}$ by the uniqueness. Hence, $\Lambda = \bar{\Lambda},\ \mathbf{U} = \mathbf{S}\bar{\mathbf{U}}$.  
	
	While, if $G$ is antibalanced, $\mathbf{W} = -\mathbf{S}\bar{\mathbf{W}}\mathbf{S}$ by definition. Then, 
	\begin{align*}
		\mathbf{W}= -\mathbf{S}\bar{\mathbf{W}}\mathbf{S} = -\mathbf{S}\bar{\mathbf{U}}\bar{\Lambda}\bar{\mathbf{U}}^{T}\mathbf{S} = (\mathbf{S}\bar{\mathbf{U}})(-\bar{\Lambda
		})(\mathbf{S}\bar{\mathbf{U}})^{T},
	\end{align*}
	where the third equality is by $\mathbf{S}\mathbf{S} = \mathbf{I}$. It is the eignedecomposition of $\mathbf{W}$ by the uniqueness. Hence, $\Lambda = -\bar{\Lambda},\ \mathbf{U} = \mathbf{S}\bar{\mathbf{U}}$.   
\end{proof}
We also note that the above results can be derived from the spectral invariance property of a graph theoretical concept of \textit{switching equivalence}, and we refer the reader to \cite{zaslavsky_1982_signed} for more results from this perspective.
\begin{remark}
	For directed signed graphs, we can show that (i) the relationships between the eigenvalues still hold, and (ii) the general eigenvectors of the two matrices have the same correspondence as the eigenvectors in Theorem \ref{the:transition-spect}, where the proof follows similarly but replacing the unitary decomposition by their Jordan canonical forms. We can also show similar relationships in terms of singular values, and left-singular and right-singular vectors by considering their singular value decomposition instead of the eigendecomposition.
\end{remark}
\begin{proposition}
	Suppose $\bar{G}$ is not bipartite, or is aperiodic. Let $\lambda_1\ge \lambda_2 \ge \dots \ge \lambda_n$ denote the eigenvalues of $\mathbf{W}$ with the associated eigenvectors $\mathbf{u}_1, \mathbf{u}_2, \dots, \mathbf{u}_n$, $\bar{\lambda}_1\ge \bar{\lambda}_2 \ge \dots \ge \bar{\lambda}_n$ denote the eigenvalues of $\bar{\mathbf{W}}$ with the associated eigenvectors $\bar{\mathbf{u}}_1, \bar{\mathbf{u}}_2, \dots, \bar{\mathbf{u}}_n$, and $\rho(\cdot)$ denotes the spectral radius. Let $V_1, V_2$ denote the corresponding bipartition for either balanced or antibalanced graphs. 
	\begin{enumerate}
		\item If $G$ is balanced, $\lambda_1 = \rho(\mathbf{W}) > 0$, and this eigenvalue is simple and the only one of the largest magnitude, where $\abs{\lambda_i} < \lambda_1,\ \forall i\ne 1$.
		\item If $G$ is antibalanced, $\lambda_n = -\rho(\mathbf{W}) < 0$, and this eigenvalue is simple and the only one of the largest magnitude, where $\abs{\lambda_i} < -\lambda_n,\ \forall i\ne n$.
	\end{enumerate}
	Meanwhile, the associated eigenvector, $\mathbf{u}_1$ for balanced graphs and $\mathbf{u}_n$ for antibalanced graphs, is the only one of the following pattern: it has positive values in one node subset in the bipartition (e.g., $V_1$) and negative values in the other (e.g., $V_2$).
	\label{pro:transition-spect-rho}
\end{proposition}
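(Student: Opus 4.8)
The plan is to deduce everything from the Perron--Frobenius theorem applied to the nonnegative matrix $\bar{\mathbf{W}}$, and then transport the conclusions to $\mathbf{W}$ through the correspondence in Theorem~\ref{the:transition-spect}. First I would assemble the Perron--Frobenius facts for $\bar{\mathbf{W}}$. Since $G$, and hence $\bar{G}$, is connected, $\bar{\mathbf{W}}$ is an irreducible nonnegative matrix, and the standing hypothesis that $\bar{G}$ is non-bipartite (equivalently, aperiodic) makes it primitive. Consequently $\bar{\lambda}_1 = \rho(\bar{\mathbf{W}}) > 0$, this eigenvalue is simple, it is the unique eigenvalue of maximal modulus so that $\abs{\bar{\lambda}_i} < \bar{\lambda}_1$ for every $i \neq 1$, and its eigenvector $\bar{\mathbf{u}}_1$ may be chosen with all entries strictly positive. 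I would also record the companion fact that $\bar{\mathbf{u}}_1$ is, up to scaling, the \emph{only} eigenvector of $\bar{\mathbf{W}}$ whose entries are all of one sign: any nonzero nonnegative eigenvector of an irreducible nonnegative matrix is a Perron vector.

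For the balanced case, Theorem~\ref{the:lit_signed-bal} gives $\mathbf{W} = \mathbf{I}_1\bar{\mathbf{W}}\mathbf{I}_1$ and Theorem~\ref{the:transition-spect} gives $\Lambda = \bar{\Lambda}$, $\mathbf{U} = \mathbf{I}_1\bar{\mathbf{U}}$; in particular $\mathbf{W}$ and $\bar{\mathbf{W}}$ share the same multiset of eigenvalues, so in sorted order $\lambda_i = \bar{\lambda}_i$ and $\rho(\mathbf{W}) = \rho(\bar{\mathbf{W}}) = \lambda_1 > 0$. The simplicity of $\lambda_1$ and the strict inequalities $\abs{\lambda_i} = \abs{\bar{\lambda}_i} < \bar{\lambda}_1 = \lambda_1$ for $i \neq 1$ are then immediate. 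For the eigenvector, the first column of $\mathbf{U} = \mathbf{I}_1\bar{\mathbf{U}}$ reads $\mathbf{u}_1 = \mathbf{I}_1\bar{\mathbf{u}}_1$, and since $\bar{\mathbf{u}}_1$ is strictly positive this vector is exactly positive on $V_1$ and negative on $V_2$; conversely, if $\mathbf{u}$ is any eigenvector of $\mathbf{W}$ with that sign pattern, then $\mathbf{I}_1\mathbf{u}$ is a strictly positive eigenvector of $\bar{\mathbf{W}}$, hence a scalar multiple of $\bar{\mathbf{u}}_1$, whence $\mathbf{u}$ is a multiple of $\mathbf{u}_1$. This establishes the uniqueness of the sign-patterned eigenvector.

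The antibalanced case is the antithetical dual. Now Theorem~\ref{the:lit_signed-anti} gives $\mathbf{W} = -\mathbf{I}_1\bar{\mathbf{W}}\mathbf{I}_1$ and Theorem~\ref{the:transition-spect} gives $\Lambda = -\bar{\Lambda}$, $\mathbf{U} = \mathbf{I}_1\bar{\mathbf{U}}$, so the eigenvalues of $\mathbf{W}$ form the multiset $\{-\bar{\lambda}_i\}$; re-sorting, $\lambda_i = -\bar{\lambda}_{n+1-i}$, and in particular $\lambda_n = -\bar{\lambda}_1 = -\rho(\bar{\mathbf{W}}) = -\rho(\mathbf{W}) < 0$ is simple, with $\abs{\lambda_i} < -\lambda_n$ for $i \neq n$. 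The eigenvector $\mathbf{u}_n$ associated with $\lambda_n$ corresponds under $\mathbf{U} = \mathbf{I}_1\bar{\mathbf{U}}$ to the Perron vector of $\bar{\mathbf{W}}$, that is, $\mathbf{u}_n = \mathbf{I}_1\bar{\mathbf{u}}_1$, and the same argument as in the balanced case gives its sign pattern and its uniqueness.

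I expect the only slightly delicate points to be (i) checking that Perron--Frobenius applies in its strong form --- irreducibility of $\bar{\mathbf{W}}$ from connectivity of $\bar{G}$ together with primitivity from the non-bipartite/aperiodic hypothesis --- since without primitivity the eigenvalue $-\rho(\bar{\mathbf{W}})$ could also occur and would destroy both the simplicity and the ``unique largest magnitude'' claims; and (ii) the bookkeeping of the sorted order in the antibalanced case, where the spectrum is reflected through the origin so that the dominant-magnitude eigenvalue of $\mathbf{W}$ sits at the bottom $\lambda_n$ of the list rather than at the top. Everything else is a direct transcription through the orthogonal conjugation by $\mathbf{I}_1$.
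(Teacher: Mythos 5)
Your proof is correct and follows essentially the same route as the paper's: apply the Perron--Frobenius theorem to the irreducible, aperiodic (hence primitive) nonnegative matrix $\bar{\mathbf{W}}$, and transport the conclusions to $\mathbf{W}$ via the conjugation $\mathbf{U} = \mathbf{I}_1\bar{\mathbf{U}}$ from Theorem~\ref{the:transition-spect}, with the sign of the spectrum flipped in the antibalanced case. Your treatment of the uniqueness of the sign-patterned eigenvector (conjugating a candidate back to a nonnegative eigenvector of $\bar{\mathbf{W}}$) is in fact slightly more explicit than the paper's, but it is the same argument.
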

\begin{proof}
	Since $\bar{\mathbf{W}}$ is an non-negative matrix, and $\bar{G}$ is irreducible and aperiodic, then by Perron-Frobenius theorem, (i) $\rho(\bar{\mathbf{W}})$ is real positive and an eigenvalue of $\bar{\mathbf{W}}$, i.e., $\bar{\lambda}_1 = \rho(\bar{\mathbf{W}})$, (ii) this eigenvalue is simple s.t.~the associated eigenspace is one-dimensional, (iii) the associated eigenvector, i.e., $\bar{\mathbf{u}}_1$, has all positive entries and is the only one of this pattern, and (iv) $\bar{\mathbf{W}}$ has only $1$ eigenvalue of the magnitude $\rho(\bar{\mathbf{W}})$. 
	
	Then, if $G$ is balanced, from Theorem \ref{the:transition-spect}, (i) $\mathbf{W}$ and $\bar{\mathbf{W}}$ share the same spectrum, and (ii) $\mathbf{U} = \mathbf{S}\bar{\mathbf{U}}$, where $\mathbf{U} = [\mathbf{u}_1, \mathbf{u}_2, \dots, \mathbf{u}_n]$ and $\bar{\mathbf{U}} = [\bar{\mathbf{u}}_1, \bar{\mathbf{u}}_2, \dots, \bar{\mathbf{u}}_n]$ containing all the eigenvectors, and $\mathbf{S}$ is the diagonal matrix whose $(i,i)$ element is $1$ if $v_i\in V_1$ and $-1$ otherwise. Hence, $\lambda_1 = \bar{\lambda}_1 = \rho(\bar{\mathbf{W}}) = \rho(\mathbf{W})$, and this eigenvalue is simple and the only one of the largest magnitude. Meanwhile, $\mathbf{u}_1 = \mathbf{S}\bar{\mathbf{u}}_1$, thus it has the pattern as described and is the only one of this pattern. The results of antibalanced graphs follow similarly. 
\end{proof}
\begin{remark}
	For a bipartite undirected graph $G$, if it is balanced, it will also be antibalanced, and similar results follow except that the corresponding eigenvalue is the only one of the largest magnitude (see Appendix \ref{sec:app_spect-bi} for details). 
\end{remark}

Finally, we consider strictly unbalanced graphs. It is a relatively unexplored area, and the existing results are mostly with regard to the signed Laplacian matrices \cite{Atay_signedCheeger_2020,Kunegis_signspect_2010}. Here, we show a general property in terms of the weighted adjacency matrix that its spectral radius is smaller than the unsigned counterpart given that the signed network is neither balanced nor antibalanced. Together with Theorem \ref{the:transition-spect}, this is the only case when the contraction of the spectral radius occurs. Hence, if we consider dynamics given by the signed weighted adjacency matrix, the corresponding state values obtained from strictly unbalanced graphs will generally have smaller magnitude (at least in long term) compared with those obtained from either balanced or antibalanced graphs (subject to appropriate initialization).
\begin{lemma}
	If $G$ is strictly unbalanced, then $\exists v_i,v_j\in V$ and $l\in \mathbb{Z}^+$ s.t.~there are two walks of length $l$ between nodes $v_i, v_j$ of different signs.
	\label{lem:unbalanced}
\end{lemma}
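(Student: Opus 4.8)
The plan is to reduce the claim to the existence of a \emph{negative closed walk of even length} in $G$, and then to build such a walk from the two distinct ways in which $G$ fails to be balanced and antibalanced, using connectedness to glue them together.

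For the reduction: suppose $Z=(u_0,u_1,\dots,u_{2l})$ with $u_0=u_{2l}$ is a closed walk of even length $2l\ge 2$ whose product of edge signs is $-1$. I would take $v_i=u_0$ and $v_j=u_l$, let $W_1=(u_0,\dots,u_l)$ be the first half and $W_2=(u_{2l},u_{2l-1},\dots,u_l)$ be the reversal of the second half. Both are walks of length $l$ between $v_i$ and $v_j$, and since $G$ is undirected an edge contributes the same sign in either direction, so $\operatorname{sign}(W_1)\operatorname{sign}(W_2)$ equals the product of all edge signs along $Z$, i.e.\ $-1$; hence $W_1$ and $W_2$ have opposite signs. (If $u_0=u_l$ these are two closed walks at one node, which still matches the statement with $v_i=v_j$.)

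For the construction of the negative even closed walk: since $G$ is not balanced it contains a cycle $C_1$ with an odd number of negative edges, and since $G$ is not antibalanced it contains a cycle $C_2$ with an odd number of positive edges, equivalently with $(\text{\# negative edges})\not\equiv(\text{length})\pmod 2$. If $C_1$ has even length it is already a negative even closed walk; if $C_2$ has even length then it has an odd number of negative edges, so it too is a negative even closed walk. In the remaining case both cycles have odd length, so $C_1$ is negative and $C_2$ is positive (its number of negative edges is even). Using connectedness I would pick $a\in V(C_1)$, $b\in V(C_2)$, and a path $P$ from $a$ to $b$, and form the closed walk that runs once around $C_1$ from $a$, follows $P$ to $b$, runs once around $C_2$ from $b$, then follows $P$ back to $a$. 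Its length is $|C_1|+|C_2|+2|P|=\text{odd}+\text{odd}+\text{even}=\text{even}$, and its sign is $\operatorname{sign}(C_1)\operatorname{sign}(C_2)\operatorname{sign}(P)^2=(-1)(+1)=-1$. Applying the reduction to this walk finishes the proof.

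The part I would be most careful about is the parity bookkeeping: "not antibalanced" must be used in the correct form, the existence of a cycle with an odd number of \emph{positive} edges (which couples the cycle's length and sign), and one must then verify that in every branch of the case analysis the closed walk produced has even length so it can be cut into two equal halves. An alternative route proves the contrapositive: if for every $l$ all walks of length $l$ between any fixed pair of nodes share a sign, then fixing a root and recording for each vertex the common sign of even-length walks to it (and, when $G$ is non-bipartite, of odd-length walks) yields a $\pm1$ vertex labelling under which each edge sign equals $\pm$ the product of its endpoints' labels, exhibiting $G$ as balanced or antibalanced via Theorems~\ref{the:lit_signed-bal} and~\ref{the:lit_signed-anti}; I would keep the direct construction as the main argument.
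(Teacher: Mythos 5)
Your proof is correct, and it takes a genuinely different route from the paper's. The paper proves the lemma by passing to a bidirected copy $G'$ of $G$ and splitting on whether $G'$ is periodic: in the periodic (bipartite) case it cuts an even cycle into two equal halves, and in the aperiodic case it simply invokes Proposition~3.5 of the cited voter-model paper, so the argument is not self-contained. You instead reduce the claim to producing a negative closed walk of even length and then manufacture one directly from the two witnesses of strict unbalance --- a cycle with an odd number of negative edges and a cycle with an odd number of positive edges --- gluing them along a path when both have odd length. Your parity bookkeeping checks out in all three branches (if either cycle has even length it is already negative of even length; if both are odd, their concatenation through a path traversed twice has even length and sign $-1$), the midpoint-splitting reduction is sound for undirected graphs since reversal preserves walk signs, and allowing $v_i=v_j$ is harmless for the way the lemma is used in Theorem~\ref{the:strict-unb-rho}. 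What your approach buys is a fully self-contained, purely combinatorial proof with no appeal to aperiodicity, periodic decompositions, or external results; what the paper's approach buys is brevity in the aperiodic case by outsourcing the core combinatorial work to the literature. Either proof suffices for the downstream use in Theorem~\ref{the:strict-unb-rho}.
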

\begin{proof}
	We construct a directed signed graph $G'$ by making each edge in $G$ bidirectional in $G'$ while maintaining the same sign in both directions. We note that $G$ is strictly unbalanced if and only if $G'$ is strictly unbalanced, and also that the statement, i.e., $\exists v_i,v_j\in V$ and $l\in \mathbb{Z}^+$ s.t.~there are two walks of length $l$ between nodes $v_i, v_j$ of different signs, is true in $G$ if and only if it is true in $G'$. Hence, we prove the lemma through $G'$.
	
	By construction, $G'$ contains cycles of length $2$. (i) If $G'$ is periodic, then $G'$ is bipartite, because of the presence of length-$2$ cycle(s). Then all cycles have even length, and for each cycle $C$, we can find node $v_i, v_j\in C$, s.t.~the part starting from $v_i$ to $v_j$ has the same length as the remaining part from $v_j$ back to $v_i$. Since each edge is bidirectional, it means that we can find two walks of the same length from $v_i$ to $v_j$. Then, suppose the statement is not true, i.e., all walks of the same length between each pair of nodes $v_h,v_l\in V$ have the same sign, then all cycles are positive, noting that the two edges connecting the same pair of nodes have the same sign, thus $G'$ is balanced, which leads to contradiction. (ii) Otherwise, $G'$ is aperiodic, then the statement is true by Proposition 3.5 in \cite{Li_voterIM_2015}. 
\end{proof}
\begin{theorem}
	$G$ is strictly unbalanced if and only if $\rho(\mathbf{W}) < \rho(\bar{\mathbf{W}})$.
	\label{the:strict-unb-rho}
\end{theorem}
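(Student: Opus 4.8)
The plan is to prove the two implications separately, using throughout the elementary fact that, since $|W_{ij}| = \bar W_{ij}$ for all $i,j$, one always has $\rho(\mathbf{W}) \le \rho(\bar{\mathbf{W}})$ (e.g.\ from $|(\mathbf{W}^k)_{ij}| \le (\bar{\mathbf{W}}^k)_{ij}$ together with Gelfand's formula $\rho(\mathbf{M}) = \lim_{k\to\infty}\|\mathbf{M}^k\|^{1/k}$). For the ``if'' direction it then suffices to prove the contrapositive: if $G$ is balanced or antibalanced, then by Theorem~\ref{the:transition-spect} the spectrum of $\mathbf{W}$ equals $\pm$ the spectrum of $\bar{\mathbf{W}}$, so in particular $\rho(\mathbf{W}) = \rho(\bar{\mathbf{W}})$; hence $\rho(\mathbf{W}) < \rho(\bar{\mathbf{W}})$ forces $G$ to be strictly unbalanced (Definition~\ref{def:signed_strictunb}).

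For the ``only if'' direction I would argue by contradiction. Suppose $G$ is strictly unbalanced but $\rho(\mathbf{W}) = \rho(\bar{\mathbf{W}}) =: \mu$; note $\mu > 0$ because a strictly unbalanced $G$ is not a tree and hence contains an edge. Since $\bar G$ is connected, $\bar{\mathbf{W}}$ is irreducible, so by Perron--Frobenius $\mu$ is a simple eigenvalue of $\bar{\mathbf{W}}$ with a strictly positive eigenvector $\bar{\mathbf{u}}_1 > 0$, and it is the only nonnegative eigenvector up to scaling. As $\mathbf{W}$ is real symmetric, its eigenvalues are real, so $\rho(\mathbf{W}) = \mu$ means that $\mu$ or $-\mu$ is an eigenvalue of $\mathbf{W}$; fix an eigenvector $\mathbf{v}$ with $\mathbf{W}\mathbf{v} = \pm\mu\,\mathbf{v}$. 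Taking absolute values entrywise,
\begin{equation*}
\mu\,|\mathbf{v}| \;=\; |\mathbf{W}\mathbf{v}| \;\le\; |\mathbf{W}|\,|\mathbf{v}| \;=\; \bar{\mathbf{W}}\,|\mathbf{v}| \qquad\text{(entrywise)}.
\end{equation*}
Pairing the nonnegative vector $\bar{\mathbf{W}}|\mathbf{v}| - \mu|\mathbf{v}|$ with $\bar{\mathbf{u}}_1 > 0$ and using $\bar{\mathbf{u}}_1^{T}\bar{\mathbf{W}} = \mu\,\bar{\mathbf{u}}_1^{T}$ gives $\bar{\mathbf{u}}_1^{T}\big(\bar{\mathbf{W}}|\mathbf{v}| - \mu|\mathbf{v}|\big) = 0$, so the displayed inequality is an equality in every coordinate; in particular $\bar{\mathbf{W}}|\mathbf{v}| = \mu|\mathbf{v}|$, whence $|\mathbf{v}|$ is a positive multiple of $\bar{\mathbf{u}}_1$ and, crucially, $\mathbf{v}$ has \emph{no} zero entries.

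It remains to read off the balance structure from the coordinatewise equality $|(\mathbf{W}\mathbf{v})_i| = \sum_j |W_{ij}|\,|v_j|$. This is equality in the triangle inequality for the real numbers $\{W_{ij}v_j\}_j$, so all nonzero terms share a common sign; since $v_j \ne 0$, the nonzero terms are exactly those with $v_j$ a neighbour of $v_i$, and their common sign equals $\mathrm{sign}\big((\mathbf{W}\mathbf{v})_i\big) = \pm\,\mathrm{sign}(v_i)$. Hence along every edge $(v_i,v_j)$,
\begin{equation*}
\mathrm{sign}(W_{ij}) \;=\; \pm\,\mathrm{sign}(v_i)\,\mathrm{sign}(v_j),
\end{equation*}
with global sign $+$ in the $\mathbf{W}\mathbf{v}=\mu\mathbf{v}$ case and $-$ in the $\mathbf{W}\mathbf{v}=-\mu\mathbf{v}$ case. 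Setting $V_1 = \{v_i : v_i > 0\}$ and $V_2 = \{v_i : v_i < 0\}$ (a genuine bipartition of $V$, as no entry vanishes), in the ``$+$'' case every within-part edge is positive and every cross edge is negative, so $G$ is balanced by Theorem~\ref{the:lit_signed-bal}; in the ``$-$'' case the roles reverse and $G$ is antibalanced by Theorem~\ref{the:lit_signed-anti}. Either conclusion contradicts strict unbalance, which finishes the proof.

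I expect the crux to be precisely the step that upgrades the \emph{equality} $\rho(\mathbf{W}) = \rho(\bar{\mathbf{W}})$ into the exact switching structure, and within it the device of pairing $\bar{\mathbf{W}}|\mathbf{v}| - \mu|\mathbf{v}| \ge 0$ against the strictly positive Perron vector $\bar{\mathbf{u}}_1$ to force $|\mathbf{v}| \propto \bar{\mathbf{u}}_1$ and thereby pin down the sign pattern; everything afterwards is just the equality case of the triangle inequality plus the Structure Theorems. An alternative route, closer to the ordering of the paper, would invoke Lemma~\ref{lem:unbalanced} directly: two length-$l$ walks of opposite sign between $v_i$ and $v_j$ produce a strict cancellation $|(\mathbf{W}^l)_{ij}| < (\bar{\mathbf{W}}^l)_{ij}$, which one lifts via $\rho(\mathbf{W})^l = \rho(\mathbf{W}^l) \le \rho(|\mathbf{W}^l|) < \rho(\bar{\mathbf{W}}^l) = \rho(\bar{\mathbf{W}})^l$, the middle strict inequality coming from strict monotonicity of the Perron root under entrywise domination by an irreducible matrix; the price there is having to pick $l$ so that $\bar{\mathbf{W}}^l$ is irreducible (equivalently, treating the bipartite case separately, as already done around Proposition~\ref{pro:transition-spect-rho}), which is why I would favour the spectral argument above.
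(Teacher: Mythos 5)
Your proof is correct, and the forward (``only if'') direction takes a genuinely different route from the paper. The paper first proves a combinatorial lemma (Lemma~\ref{lem:unbalanced}: strict unbalance forces two equal-length walks of opposite sign between some pair of nodes), uses it to get a strict entrywise cancellation $\abs{(\mathbf{W}^{l_1})_{ij}} < (\bar{\mathbf{W}}^{l_1})_{ij}$, propagates this to \emph{all} entries for a sufficiently large power $l_2$ by connectivity, and then compares $\norm{\mathbf{W}^{l_2}}_2$ with $\norm{\bar{\mathbf{W}}^{l_2}}_2$ to conclude $\rho(\mathbf{W})^{l_2} < \rho(\bar{\mathbf{W}})^{l_2}$. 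You instead run the equality case of Perron--Frobenius (essentially Wielandt's characterization of when $\rho(\mathbf{B}) = \rho(\abs{\mathbf{B}})$): assuming $\rho(\mathbf{W}) = \rho(\bar{\mathbf{W}}) = \mu$, you pair $\bar{\mathbf{W}}\abs{\mathbf{v}} - \mu\abs{\mathbf{v}} \ge 0$ against the strictly positive Perron vector to force $\abs{\mathbf{v}} \propto \bar{\mathbf{u}}_1$, then read off from equality in the triangle inequality that the sign pattern of $\mathbf{v}$ is a switching exhibiting $G$ as balanced (eigenvalue $+\mu$) or antibalanced (eigenvalue $-\mu$), contradicting strict unbalance. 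Your argument buys several things: it bypasses Lemma~\ref{lem:unbalanced} and the paper's somewhat informal ``for sufficiently large $l_2$ all walks can be routed through the two cancelling walks'' step; it treats bipartite and non-bipartite graphs uniformly (only irreducibility, not aperiodicity, is used); and it constructively recovers the balanced or antibalanced bipartition from the extremal eigenvector. What the paper's route buys in exchange is the quantitative entrywise domination $\abs{(\mathbf{W}^{t})_{hk}} < (\bar{\mathbf{W}}^{t})_{hk}$ for large $t$, which is the form of the statement actually reused in the discussion of the linear adjacency dynamics. Your backward direction coincides with the paper's (both reduce to Theorem~\ref{the:transition-spect}), and your closing remark correctly identifies the lemma-based alternative together with its periodicity caveat.
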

\begin{proof}
	We first note that if $\rho(\mathbf{W}) < \rho(\bar{\mathbf{W}})$, then $G$ is strictly unbalanced, since the spectral radius will be the same if $G$ is balanced or antibalanced by Theorem \ref{the:transition-spect}. 
	
	For the other direction, if $G$ is strictly unbalanced, by Lemma \ref{lem:unbalanced}, $\exists v_i,v_j\in V$ and $l_1\in \mathbb{Z}^+$ s.t.~there are two walks of length $l_1$ between nodes $v_i, v_j$ of different signs. Then
	\begin{align*}
		\abs{(\mathbf{W}^{l_1})_{ij}} <  (\bar{\mathbf{W}}^{l_1})_{ij}, 
	\end{align*}
	where $(\mathbf{W})_{ij}$ indicates the $(i,j)$ element of a matrix $\mathbf{W}$. Hence, for sufficiently large $l_2$, the walks between each pair of nodes will be able to go through the two walks of different signs between nodes $v_i, v_j$, thus $\forall v_h,v_k\in V$,
	\begin{align*}
		\abs{(\mathbf{W}^{l_2})_{hk}} < 
		(\bar{\mathbf{W}}^{l_2})_{hk}.  
	\end{align*}
	Then for each vector $\mathbf{x} = (x_h)\in \mathbb{R}^n$ and $\norm{\mathbf{x}}_2 = 1$, we can find $\bar{\mathbf{x}} = (\abs{x_h})$ s.t.~$\norm{\bar{\mathbf{x}}}_2 = 1$ and
	\begin{align*} 
		\abs{(\mathbf{W}^{l_2}\mathbf{x})_h} = \abs{\sum_{k}(\mathbf{W}^{l_2})_{hk}x_k} \le \sum_{k}\abs{(\mathbf{W}^{l_2})_{hk}x_k} < \sum_{k}(\bar{\mathbf{W}}^{l_2})_{hk}\abs{x_k} = (\bar{\mathbf{W}}^{l_2}\bar{\mathbf{x}})_h,
	\end{align*}
	where $(\mathbf{x})_h$ indicates the $h$-th element of an vector $\mathbf{x}$. Therefore, $\norm{\mathbf{W}^{l_2}\mathbf{x}}_2 < \norm{\bar{\mathbf{W}}^{l_2}\bar{\mathbf{x}}}_2$. Hence, by definition,
	\begin{align*}
		\norm{\mathbf{W}^{l_2}}_2 = \max_{\norm{\mathbf{x}}_2 = 1}\norm{\mathbf{W}^{l_2}\mathbf{x}}_2 < \max_{\norm{\mathbf{y}}_2 = 1}\norm{\bar{\mathbf{W}}^{l_2}\mathbf{y}}_2 = \norm{\bar{\mathbf{W}}^{l_2}}_2,
	\end{align*}
	then $\rho(\mathbf{W})^{l_2} = \rho(\mathbf{W}^{l_2}) < \rho(\bar{\mathbf{W}}^{l_2}) = \rho(\bar{\mathbf{W}})^{l_2}$, and finally $\rho(\mathbf{W}) < \rho(\bar{\mathbf{W}})$. 
\end{proof}

\subsection{Dynamical properties}
\label{sec:dynamicprop}
Characterized by different structural properties, the classification we have introduced also provide a way to characterize the dynamics happening on signed networks. Here, we consider two very different dynamics, where one is linear, the linear adjacency dynamics, and the other is nonlinear, the extended linear threshold (ELT) model. Furthermore, we consider random walks as a slightly modified example of the linear adjacency dynamics, and interpret the results accordingly.
% \begin{itemize}
	%     \item The linear adjacency dynamics: 
	%     \begin{itemize}
		%         \item The dynamics: how to extend to signed networks. 
		%         \item The evolution of the state values: balanced and antibalanced networks. 
		%         \item The measures for strictly unbalanced networks: example of transition matrix.
		%     \end{itemize}
	%     \item The extended linear threshold model: the evolution of the state values. 
	% \end{itemize}

\subsubsection{Linear adjacency dynamics}
\label{sec:linear_adj_dynamics}
In unsigned networks, linear adjacency dynamics sum over the state values of one's neighbours in the previous step to obtain the current state value of each node, since there are only positive edges. While in signed networks, we represent the signed weighted adjacency matrix as $\mathbf{W} = \mathbf{W}^+ - \mathbf{W}^-$, where $W_{ij}^+ = W_{ij}$ if $W_{ij} > 0$ ($0$ otherwise) and $W_{ij}^- = -W_{ij}$ if $W_{ij} < 0$ ($0$ otherwise). To start with, we consider two components of the state values, positive and negative parts, and we represent them on each node $v_j$ as $x_j^+$ and $x_j^-$, respectively. From the opposing rule, we assume that negative edges can flip the sign of the corresponding state values, while positive ones will remain the sign. Hence, there are two different possibilities for the positive part on $v_j$, either from positive parts through positive edges or from negative parts through negative edges, i.e.,
\begin{align}
	x_j^+(t) = \sum_{i}W_{ij}^+x^+_i(t-1) + W_{ij}^-x^-_i(t-1);
	\label{equ:ld+}
\end{align}
there are also two possibilities for the negative part on $v_j$, either from positive parts through negative edges or from negative parts through positive edges, i.e.,
\begin{align}
	x_j^-(t) = \sum_{i}W_{ij}^-x^+_i(t-1) + W_{ij}^+x^-_i(t-1).
	\label{equ:ld-}
\end{align}
There could be different approaches to combine these two parts. If we concatenate the negative one to the positive one directly, the whole dynamics is thus governed by a $2n\times 2n$ matrix $\mathbf{W}^{(2)} = [\mathbf{W}^+, \mathbf{W}^-; \mathbf{W}^-, \mathbf{W}^+]$, similar to \eqref{equ:A2}, while if we simply sum the two parts, it will recover the dynamics on the unsigned counterpart $\bar{G}$. Here specifically, we consider the ``polarization" on each node by subtracting the negative part in \eqref{equ:ld-} from the positive part in \eqref{equ:ld+} as the state value, where $\forall v_j\in V,\ t > 0$,
\begin{align}
	x_j(t) = x_j^+(t) - x_j^-(t) = \sum_{i}\left(W_{ij}^+x_i(t-1) - W_{ij}^-x_i(t-1)\right) = \sum_{i}W_{ij}x_i(t-1),
	\label{equ:signed_linear_dynamics}
\end{align}
where the initial vector $\mathbf{x}(0)$ is given. Hence, the state vector at each time step $t \ge 0$ is 
\begin{align*}
	\mathbf{x}(t)^T = \mathbf{x}(0)^T\mathbf{W}^t,
\end{align*} 
and the evolution of $\mathbf{W}^t$ over time provides insights into the behavior of the linear adjacency dynamics. 
% Note that we consider the time-discounted sum of state values
% \begin{align*}
	%     s_j = \sum_{t=1}^\infty (1-\gamma)^t x_j(t),
	% \end{align*}
% as an important quantity for each node $v_j$, e.g., encoding the overall influence on the node, where the factor $\gamma$ satisfies $\gamma > 1-1/\rho(\mathbf{W})$,
% % \begin{equation}
	% %     \gamma > 1-1/\rho(\mathbf{W}),
	% %     \label{equ:signed_linear-dyn_req}
	% % \end{equation}
% with $\rho(\mathbf{W})$ being the spectral radius. Hence, the time-discounted state values approach $0$ as $t$ goes to infinity, $\lim_{t\to\infty}(1-\gamma)^t x_j(t) = 0$
% % \begin{align*}
	% %     \lim_{t\to\infty}(1-\gamma)^t x_j(t) = 0. 
	% % \end{align*}

Starting from the unitary decomposition, $\mathbf{W} = \mathbf{U}\Lambda\mathbf{U}^{T} = \sum_{i=1}^{n}\lambda_i\mathbf{u}_i\mathbf{u}_i^T$, where $\lambda_1\ge \lambda_2 \ge \dots \ge \lambda_n$ are the eigenvalues of $\mathbf{W}$ and $\mathbf{u}_1, \mathbf{u}_2, \dots, \mathbf{u}_n$ are the associated eigenvectors, we have 
\begin{align}
	\mathbf{W}^t = \sum_{i=1}^{n}\lambda_i^t\mathbf{u}_i\mathbf{u}_i^T.
\end{align}
Hence, the behavior of $\mathbf{W}^t$ is dominated by the eigenvectors associated with the eigenvalues significantly different from $0$, and for sufficiently large $t$, the behavior could be approximated by that of the leading eigenvalue and the associated eigenvector(s). In the following, we consider different balanced structures in signed networks, specifically balance, antibalance and strict unbalance. We denote the eigenvalues of the weighted adjacency matrix ignoring the edge sign $\bar{\mathbf{W}}$ by $\bar{\lambda}_1\ge \bar{\lambda}_2 \ge \dots \ge \bar{\lambda}_n$, with the associated eigenvectors $\bar{\mathbf{u}}_1, \bar{\mathbf{u}}_2, \dots, \bar{\mathbf{u}}_n$, where $\bar{\mathbf{W}} = \sum_{i=1}^n\bar{\lambda}_i\bar{\mathbf{u}}_i\bar{\mathbf{u}}_i^T$. We denote the bipartition corresponding to the balanced or antibalanced structure by $V_1, V_2$, and the diagonal matrix corresponding to the partition by $\mathbf{S}$ with its $(i,i)$ element being $1$ if $v_i\in V_1$ and $-1$ otherwise.

\paragraph{Balanced networks.} If the network is balanced with bipartition $V_1, V_2$, then from Theorem \ref{the:transition-spect}, we have
\begin{align}
	\mathbf{W}^t = \mathbf{S}\left(\sum_{i=1}^{n}\bar{\lambda}_i^t\bar{\mathbf{u}}_i\bar{\mathbf{u}}_i^T\right)\mathbf{S} = \mathbf{S}\bar{\mathbf{W}}^t\mathbf{S},
	\label{equ:signed-linear_b-Wt}
\end{align}
thus the magnitude of the elements in $\mathbf{W}^t$ evolves as in the simple network ignoring the edge sign (i.e.~$\bar{\mathbf{W}}^t$), and the difference lies in the sign pattern (of nonzero elements): the $(i,j)$ element is positive if nodes $v_i,v_j\in V_1$ or $v_i,v_j\in V_2$ and negative otherwise. This means that at each time step $t$, nodes tend to have the same sign as the ones in the same node subset of the bipartition, while the opposite sign from the others. Further, if the network is irreducible and aperiodic, then when $t$ is sufficiently large, $\mathbf{W}^t$ can be well approximated by its rank-1 approximation,
\begin{align*}
	\hat{\mathbf{W}}^t = \bar{\lambda}_1^t\mathbf{S}\bar{\mathbf{u}}_1\bar{\mathbf{u}}_1^T\mathbf{S},
\end{align*}
where $\norm{\mathbf{W}^t - \hat{\mathbf{W}}^t}_F = \sqrt{\sum_{i=2}^{n}\bar{\lambda}_i^{2t}}$, with $\norm{\cdot}_F$ denoting the Frobenius norm, by noting that $\lim_{t\to\infty}\abs{\bar{\lambda}_i^{t}/\bar{\lambda}_1^t} = 0,\, \forall i\ne 1$, from Proposition \ref{pro:transition-spect-rho}. Hence, the asymptotic behavior can be approximated by the term associated with the leading eigenvalue and the associated eigenvector, which has the same sign pattern as \eqref{equ:signed-linear_b-Wt}.

\paragraph{Antibalanced networks.} If the network is antibalanced with bipartition $V_1, V_2$, then from Theorem \ref{the:transition-spect}, we have 
\begin{align}
	\mathbf{W}^t = \mathbf{S}\left(\sum_{i=1}^{n}(-\bar{\lambda}_i)^t\bar{\mathbf{u}}_i\bar{\mathbf{u}}_i^T\right)\mathbf{S} = (-1)^t\mathbf{S}\bar{\mathbf{W}}^t\mathbf{S},
	\label{equ:signed-linear_antib-Wt}
\end{align}
thus again, the magnitude of the elements in $\mathbf{W}^t$ evolves as in the simple network ignoring the edge sign, but here the sign pattern (of nonzero elements) alternates over time: when $t$ is odd, the $(i,j)$ element is \textit{negative} if nodes $v_i,v_j\in V_1$ or $v_i,v_j\in V_2$ and positive otherwise; while $t$ is even in the following step, the $(i,j)$ element becomes \textit{positive} if nodes $v_i,v_j\in V_1$ or $v_i,v_j\in V_2$ and negative otherwise. Hence, the antibalanced structure is highly unstable. 
% Further, if the network is irreducible and aperiodic, then similar to the case of balanced networks, when $t$ is sufficiently large, $\mathbf{W}^t$ can be well approximated by its rank-1 approximation,
% \begin{align*}
	%     \hat{\mathbf{W}}^t = (-\bar{\lambda}_1)^t\mathbf{S}\bar{\mathbf{u}}_1\bar{\mathbf{u}}_1^T\mathbf{S},
	% \end{align*}
% which has the same sign pattern as \eqref{equ:signed-linear_antib-Wt}.

\paragraph{Strictly unbalanced networks.} In all the remaining networks, neither are they like balanced networks where all walks of length $t+1$ have the same sign as the walks of length $t$ connecting each pair of nodes $v_i,v_j$, nor are they like antibalanced networks where all walks of length $t+1$ have the opposite sign as the walks of length $t$ connecting each pair of nodes $v_i,v_j$, for each $t > 0$. Hence to characterize its performance, we propose the following measures to quantify how far a network is from being balanced or antibalanced, motivated by the signed Cheeger inequality \cite{Atay_signedCheeger_2020}: for the distance from being balanced, we have
\begin{align}
	d_b(G) = \lambda_{min}(\mathbf{L}_{rw}(G)),
	\label{equ:signed_d-balanced}
\end{align} 
and for the distance from being antibalanced, we have
\begin{align}
	d_a(G) = 2 - \lambda_{max}(\mathbf{L}_{rw}(G)),
	\label{equ:signed_d-anti-balanced}
\end{align}
where $\mathbf{L}_{rw}(G)$ is the random walk Laplacian of the signed network $G$ as in \eqref{equ:signed_Lrw}, and $\lambda_{min}(\cdot), \lambda_{max}(\cdot)$ return the smallest and the largest eigenvalues, respectively. We note that in unsigned networks, the smallest eigenvalue of the random walk Laplacian is trivially $0$ (corresponding to that the transition matrix has a trivial largest eigenvalue $1$) and the smallest nontrivial one is important from many aspects, including the relaxation time of random walks \cite{Masuda2017RW}. However, in signed networks, the smallest eigenvalue is nontrivial. We also note that being close to balance does not necessarily indicate the signed network is far from antibalance, where, for example, for bipartite networks, balance indicates antibalance. Hence, we maintain two measures for the two dimensions. We will further interpret both measures in subsection \ref{sec:random_walk}. 

Therefore, depending on how far the signed network is from being balanced or antibalanced, it can have performance closer to that of balanced or antibalanced networks.
\begin{enumerate}[label=(\roman*)]
	\item If $d_b(G) < d_a(G)$, we expect $G$ to be closer to being balanced. Then $\forall v_i,v_j\in V, t > 0$, we expect that most walks of length $t+1$ connecting nodes $v_i,v_j$ have the same sign as most walks of length $t$ (if any), thus $\mathbf{W}^t$ tends to maintain the same sign pattern over time.
	\item If $d_b(G) > d_a(G)$, we expect $G$ to be closer to being antibalanced. Then $\forall v_i,v_j\in V, t > 0$, we expect that most walks of length $t+1$ connecting nodes $v_i,v_j$ have the opposite sign as most walks of length $t$ (if any), thus $\mathbf{W}^t$ tends to alternate the sign pattern over time. 
\end{enumerate}
From Theorem \ref{the:strict-unb-rho}, $\rho(\mathbf{W}) < \rho(\bar{\mathbf{W}})$ where $\rho(\cdot)$ is the spectral radius or the eigenvalue of the largest magnitude, thus when $t$ is sufficiently large, elements in $\mathbf{W}^t$ will have smaller magnitude than those in $\bar{\mathbf{W}}^t$.

\subsubsection{An example: random walks}
\label{sec:random_walk}
Here, we consider random walks as a specific example of the linear adjacency dynamics, with some modifications. Note that the adjacency matrix of an undirected network has to be symmetric, but it is not necessarily the case for the transition matrix $\mathbf{P} = \mathbf{D}^{-1}\mathbf{W}$. However, it is similar to a symmetric matrix $\mathbf{P}_{sym} = \mathbf{D}^{-1/2}\mathbf{W}\mathbf{D}^{-1/2}$, where $\mathbf{P} = \mathbf{D}^{-1/2}\mathbf{P}_{sym}\mathbf{D}^{1/2}$, and for each eigenpair $(\lambda, \mathbf{D}^{1/2}\mathbf{x})$ of $\mathbf{P}_{sym}$, $(\lambda, \mathbf{x})$ is also an eigenpair of $\mathbf{P}$. Hence, the above results in subsection \ref{sec:linear_adj_dynamics} can still be applied, but indirectly through $\mathbf{P}_{sym}$. We denote the eigenvalues as $\lambda_1\ge \dots \ge \lambda_n$ with the associated (right) eigenvectors of $\mathbf{P}$ as $\mathbf{u}_1, \dots, \mathbf{u}_n$, thus the eigenvectors of $\mathbf{P}_{sym}$ are $\mathbf{D}^{1/2}\mathbf{u}_1, \dots, \mathbf{D}^{1/2}\mathbf{u}_n$. For illustrative purposes, we only assume $\mathbf{D}^{1/2}\mathbf{u}_1, \dots, \mathbf{D}^{1/2}\mathbf{u}_n$ to be orthonormal in this section. We denote the unsigned counterparts as $\bar{\mathbf{P}}$ and $\bar{\mathbf{P}}_{sys}$, and their eigenvalues as $\bar{\lambda}_1\ge \dots \ge \bar{\lambda}_n$.

\paragraph{Balanced networks.} We start with the case when the signed network is structurally balanced, and will show that a steady state is achievable. From Theorem \ref{the:transition-spect}, we can easily deduct the characteristics of the leading eigenvalue as in Corollary \ref{cor:balance-lambda} (see Appendix \ref{sec:app_proofs} for a detailed proof). 
\begin{corollary}
	The signed transition matrix $\mathbf{P}$ has eigenvalue $1$ if and only if $G$ is balanced. 
	\label{cor:balance-lambda}
\end{corollary}
We also note that $\mathbf{L}_{rw} = \mathbf{I} - \mathbf{P}$, thus $1$ being an eigenvalue of $\mathbf{P}$ is equivalent to $0$ being an eigenvalue of $\mathbf{L}_{rw}$, where the latter has been shown as a sufficient and necessary condition for the graph being balanced \cite{hou_2003_Laplacian,Li_2009_normL,zaslavsky_1982_signed}. 

\begin{proposition}
	If $G$ is balanced, then $\mathbf{P}^t$ is still a signed transition matrix, and has the following signed pattern: 
	\begin{displaymath}
		(\mathbf{P}^t)_{ij} = 
		\begin{cases}
			(\bar{\mathbf{P}}^t)_{ij}, \quad &\text{if } v_i,v_j\in V_1 \text{ or } v_i,v_j\in V_2\\
			- (\bar{\mathbf{P}}^t)_{ij},\quad &\text{otherwise},
		\end{cases}
	\end{displaymath}
	where $V_1, V_2$ denote the bipartition corresponding to the balanced structure.
\end{proposition}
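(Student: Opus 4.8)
The plan is to reduce everything to the balanced factorization $\mathbf{W} = \mathbf{I}_1\bar{\mathbf{W}}\mathbf{I}_1$ already invoked in the proof of Theorem~\ref{the:transition-spect}, together with two elementary observations: that $G$ and $\bar{G}$ have the \emph{same} degree matrix $\mathbf{D}$ (because $d_i = \sum_j\abs{W_{ij}} = \sum_j\bar{W}_{ij}$), and that $\mathbf{D}$, being diagonal, commutes with the diagonal sign matrix $\mathbf{I}_1$. Granting these, $\mathbf{P} = \mathbf{D}^{-1}\mathbf{W} = \mathbf{D}^{-1}\mathbf{I}_1\bar{\mathbf{W}}\mathbf{I}_1 = \mathbf{I}_1(\mathbf{D}^{-1}\bar{\mathbf{W}})\mathbf{I}_1 = \mathbf{I}_1\bar{\mathbf{P}}\mathbf{I}_1$. (Equivalently, one may quote Theorem~\ref{the:transition-spect} for the symmetric matrix $\mathbf{P}_{sym} = \mathbf{D}^{-1/2}\mathbf{W}\mathbf{D}^{-1/2}$ and conjugate back by $\mathbf{D}^{\pm 1/2}$, but the direct line above is shorter.)

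First I would record the identity $\mathbf{P} = \mathbf{I}_1\bar{\mathbf{P}}\mathbf{I}_1$. Since $\mathbf{I}_1^2 = \mathbf{I}$, powering telescopes the inner factors, giving $\mathbf{P}^t = (\mathbf{I}_1\bar{\mathbf{P}}\mathbf{I}_1)^t = \mathbf{I}_1\bar{\mathbf{P}}^t\mathbf{I}_1$ for every $t\ge 0$. Reading this entrywise (both $\mathbf{I}_1$'s are diagonal), $(\mathbf{P}^t)_{ij} = (\mathbf{I}_1)_{ii}\,(\bar{\mathbf{P}}^t)_{ij}\,(\mathbf{I}_1)_{jj}$, and $(\mathbf{I}_1)_{ii}(\mathbf{I}_1)_{jj} = +1$ exactly when $v_i,v_j$ lie in the same part of the bipartition $V_1,V_2$ and $-1$ otherwise; this is precisely the asserted sign pattern.

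For the remaining claim — that $\mathbf{P}^t$ is again a signed transition matrix — I would note that $\bar{\mathbf{P}} = \mathbf{D}^{-1}\bar{\mathbf{W}}$ is nonnegative and row-stochastic, hence so is $\bar{\mathbf{P}}^t$; consequently $\abs{(\mathbf{P}^t)_{ij}} = (\bar{\mathbf{P}}^t)_{ij}\ge 0$ and $\sum_j\abs{(\mathbf{P}^t)_{ij}} = \sum_j(\bar{\mathbf{P}}^t)_{ij} = 1$ for each $i$, which is the defining property of a signed transition matrix (indeed $\mathbf{P}^t = \mathbf{D}^{-1}\mathbf{W}_t$ with $\mathbf{W}_t$ the signed matrix having entries $d_i(\mathbf{P}^t)_{ij}$, whose absolute row sums reproduce $\mathbf{D}$).

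I do not expect a real obstacle: the argument is a few lines of diagonal-matrix bookkeeping. The only points that need genuine care are (i) making explicit that the degree matrix is unchanged when edge signs are dropped — without this, $\mathbf{D}$ need not commute with $\mathbf{I}_1$ and the reduction collapses — and (ii) pinning down the convention for ``signed transition matrix'' (absolute row sums equal to one) so that the final verification is unambiguous.
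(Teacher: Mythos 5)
Your proposal is correct and follows essentially the same route as the paper: both rest on the identity $\mathbf{P} = \mathbf{I}_1\bar{\mathbf{P}}\mathbf{I}_1$, telescope it to $\mathbf{P}^t = \mathbf{I}_1\bar{\mathbf{P}}^t\mathbf{I}_1$ using $\mathbf{I}_1^2 = \mathbf{I}$, and read off the sign pattern entrywise. The only difference is that you justify the identity $\mathbf{P} = \mathbf{I}_1\bar{\mathbf{P}}\mathbf{I}_1$ explicitly (same degree matrix, diagonal matrices commute), whereas the paper asserts it directly; this is a welcome but minor elaboration.
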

\begin{proof}
	If $G$ is balanced, then $\mathbf{P} = \mathbf{S}\bar{\mathbf{P}}\mathbf{S}$, where $\mathbf{S}$ is the diagonal matrix whose $(i,i)$ element is $1$ if $v_i\in V_1$ and $-1$ otherwise. Then 
	\begin{align*}
		\mathbf{P}^t = \left(\mathbf{S}\bar{\mathbf{P}}\mathbf{S}\right)^t = \mathbf{S}\bar{\mathbf{P}}^t\mathbf{S}.
	\end{align*}
	Since $\bar{\mathbf{P}}^t$ is still a transition matrix, $\mathbf{P}^t$ is a signed transition matrix. Meanwhile, $(\mathbf{P}^t)_{ij} = (\bar{\mathbf{P}}^t)_{ij}(\mathbf{S})_{ii}(\mathbf{S})_{jj}$, hence is $(\bar{\mathbf{P}}^t)_{ij}$ if $v_i,v_j\in V_1$ or $v_i,v_j\in V_2$, and $-(\bar{\mathbf{P}}^t)_{ij}$ otherwise. 
\end{proof}

\begin{proposition}
	If $G$ is balanced and is not bipartite, then the steady state is $\mathbf{x}^* = (x_j^*)$ where
	\begin{displaymath}
		x_j^* = 
		\begin{cases}
			(\mathbf{x}(0)^T\mathbf{1}_1) d_j/(2m),\quad &\text{if } v_j\in V_1,\\
			- (\mathbf{x}(0)^T\mathbf{1}_1) d_j/(2m),\quad &\text{otherwise,}
		\end{cases}
	\end{displaymath}
	where $\mathbf{x}(0) = (x_i(0))$ is the initial state vector with $\sum_i\abs{x_i(0)} = 1$, $\mathbf{1}_1$ is the diagonal vector of $\mathbf{S}$ with the $i$-th element being $1$ if $v_i\in V_1$ and $-1$ otherwise, and $2m = \sum_{j}d_j$. 
	\label{pro:balance-steady}
\end{proposition}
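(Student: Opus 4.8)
\emph{Proof proposal.} The plan is to reduce everything to the unsigned random walk via the gauge relation $\mathbf{P} = \mathbf{I}_1\bar{\mathbf{P}}\mathbf{I}_1$, which holds because $G$ is balanced: from the definition of a balanced graph (the same fact used in Theorem~\ref{the:transition-spect}) we have $\mathbf{W} = \mathbf{I}_1\bar{\mathbf{W}}\mathbf{I}_1$, and since $\mathbf{D}^{-1}$ and $\mathbf{I}_1$ are both diagonal they commute, so $\mathbf{P} = \mathbf{D}^{-1}\mathbf{W} = \mathbf{I}_1\mathbf{D}^{-1}\bar{\mathbf{W}}\mathbf{I}_1 = \mathbf{I}_1\bar{\mathbf{P}}\mathbf{I}_1$, and hence $\mathbf{P}^t = \mathbf{I}_1\bar{\mathbf{P}}^t\mathbf{I}_1$ for every $t$. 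Recalling that the dynamics is $\mathbf{x}(t)^T = \mathbf{x}(0)^T\mathbf{P}^t$, the stationary state is $\mathbf{x}^\ast = \lim_{t\to\infty}\mathbf{x}(t)$, so it suffices to evaluate this limit (the limit exists by Proposition~\ref{pro:balance-lambda} together with the non-bipartiteness hypothesis).

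First I would handle the unsigned side. Since $G$ is connected and non-bipartite, so is $\bar{G}$, hence $\bar{\mathbf{P}} = \mathbf{D}^{-1}\bar{\mathbf{W}}$ is the transition matrix of an irreducible, aperiodic finite Markov chain. By the Perron--Frobenius theorem (equivalently, the ergodic theorem for finite Markov chains), $\bar{\mathbf{P}}^t \to \mathbf{1}\mathbf{q}^T$ as $t\to\infty$, where $\mathbf{q}$ is the unique stationary distribution. Using $\bar{W}_{ij} = \bar{W}_{ji}$ one checks the detailed-balance identity $d_i\bar{P}_{ij} = \bar{W}_{ij} = \bar{W}_{ji} = d_j\bar{P}_{ji}$, so $\mathbf{q}^T\bar{\mathbf{P}} = \mathbf{q}^T$ with $q_j = d_j/(2m)$ and $2m = \sum_j d_j$, i.e.\ the standard stationary distribution of a random walk on a weighted graph.

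Next I would transport this back. From $\mathbf{P}^t = \mathbf{I}_1\bar{\mathbf{P}}^t\mathbf{I}_1$ and $\mathbf{I}_1\mathbf{1} = \mathbf{1}_1$ we get $\mathbf{P}^t \to \mathbf{I}_1\mathbf{1}\mathbf{q}^T\mathbf{I}_1 = \mathbf{1}_1\mathbf{q}^T\mathbf{I}_1$, hence
\begin{align*}
    \mathbf{x}(t)^T = \mathbf{x}(0)^T\mathbf{P}^t \;\longrightarrow\; \bigl(\mathbf{x}(0)^T\mathbf{1}_1\bigr)\,\mathbf{q}^T\mathbf{I}_1 .
\end{align*}
Reading off the $j$-th coordinate and using $q_j = d_j/(2m)$ and $(\mathbf{I}_1)_{jj} = 1$ for $v_j\in V_1$, $-1$ otherwise, gives exactly $x_j^\ast = (\mathbf{x}(0)^T\mathbf{1}_1)\,d_j/(2m)$ on $V_1$ and its negative on $V_2$, as claimed (the normalization $\sum_i\abs{x_i(0)} = 1$ is not needed for the identity, only to make $\abs{\mathbf{x}(0)^T\mathbf{1}_1}\le 1$). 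To finish, I would check directly that this $\mathbf{x}^\ast$ is genuinely fixed by $\mathbf{P}$: since $\mathbf{q}^T\bar{\mathbf{P}} = \mathbf{q}^T$ and $\mathbf{I}_1\mathbf{I}_1 = \mathbf{I}$, $\mathbf{x}^{\ast T}\mathbf{P} = (\mathbf{x}(0)^T\mathbf{1}_1)\mathbf{q}^T\mathbf{I}_1\mathbf{I}_1\bar{\mathbf{P}}\mathbf{I}_1 = (\mathbf{x}(0)^T\mathbf{1}_1)\mathbf{q}^T\bar{\mathbf{P}}\mathbf{I}_1 = \mathbf{x}^{\ast T}$.

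The step that needs the most care is the convergence $\bar{\mathbf{P}}^t\to\mathbf{1}\mathbf{q}^T$: this is precisely where both hypotheses are used — connectedness supplies irreducibility, and non-bipartiteness supplies aperiodicity (without it $\bar{\mathbf{P}}$ has $-1$ in its spectrum, $\bar{\mathbf{P}}^t$ oscillates, and no stationary state exists) — and where one invokes the explicit form $q_j = d_j/(2m)$. Everything else is bookkeeping with the diagonal sign matrix $\mathbf{I}_1$.
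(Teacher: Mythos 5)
Your proof is correct, but it follows a genuinely different route from the paper's. The paper works with the symmetrized matrix $\mathbf{P}_{sym} = \mathbf{D}^{-1/2}\mathbf{W}\mathbf{D}^{-1/2}$, expands $\mathbf{P}_{sym}^t$ in its (assumed orthonormal) eigenbasis, kills all terms but the leading one using $\abs{\lambda_i}<1$ for $i\ne 1$, and then identifies the leading eigenvector as $c\mathbf{1}_1$ with $c = 1/\sqrt{2m}$ via Proposition \ref{pro:balance-lambda} and the normalization $\norm{\mathbf{D}^{1/2}\mathbf{u}_1}_2=1$. You instead push the entire computation onto the unsigned chain through the switching relation $\mathbf{P} = \mathbf{I}_1\bar{\mathbf{P}}\mathbf{I}_1$, invoke the classical ergodic theorem $\bar{\mathbf{P}}^t\to\mathbf{1}\mathbf{q}^T$ with $q_j = d_j/(2m)$, and conjugate back. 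Your route buys several things: it avoids the paper's ad hoc assumption that the vectors $\mathbf{D}^{1/2}\mathbf{u}_i$ are orthonormal, it never needs to compute the normalization constant $c$, the sign bookkeeping via $\mathbf{I}_1\mathbf{1}=\mathbf{1}_1$ is transparent, and your final verification that $\mathbf{x}^{*T}\mathbf{P}=\mathbf{x}^{*T}$ is a nice sanity check the paper omits. What the paper's spectral route buys in exchange is uniformity with the rest of Section \ref{sec:random_walk} --- the same decomposition handles the antibalanced case (Proposition \ref{pro:antibalance-steady}, where one must separate odd and even times) and the strictly unbalanced case with only minor changes, and it exposes the convergence rate through the spectral gap. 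Both arguments use connectedness and non-bipartiteness in exactly the same places, so there is no gap in yours.
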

\begin{proof}
	Since $G$ is not bipartite, $\abs{\lambda_i} < 1,\ \forall i\ne 1$. Hence, 
	\begin{align*}
		\lim_{t\to\infty}\mathbf{P}_{sys}^t = \lim_{t\to\infty}\sum_{i=1}^n\lambda_i^t\left(\mathbf{D}^{1/2}\mathbf{u}_i\right)\left(\mathbf{D}^{1/2}\mathbf{u}_i\right)^T = \left(\mathbf{D}^{1/2}\mathbf{u}_1\right)\left(\mathbf{D}^{1/2}\mathbf{u}_1\right)^T,
	\end{align*}
	where the eigenvectors $\mathbf{D}^{1/2}\mathbf{u}_i$ are orthonormal to each other, and $\mathbf{u}_i$ is the eigenvector of $\mathbf{P}$ associated with the same eigenvalue. By Proposition \ref{pro:transition-spect-rho}, $\mathbf{u}_1$ has the specific structure where $\mathbf{u}_1 = c\mathbf{1}_1$ for some nonzero constant $c\in \mathbb{R}$. WOLG, we assume $c > 0$. Then since $\mathbf{D}^{1/2}\mathbf{u}_i$ has $2$-norm $1$, $c = 1/\sqrt{2m}$. Hence, 
	\begin{align*}
		\mathbf{x}^* 
		&= \lim_{t\to\infty}\mathbf{x}(0)^T\mathbf{P}^t\\
		&= \lim_{t\to\infty}\mathbf{x}(0)^T\mathbf{D}^{-1/2}\mathbf{P}_{sys}^t\mathbf{D}^{1/2}\\
		&= \mathbf{x}(0)^T\mathbf{D}^{-1/2}\left(\mathbf{D}^{1/2}\mathbf{u}_1\right)\left(\mathbf{D}^{1/2}\mathbf{u}_1\right)^T\mathbf{D}^{1/2}\\
		&= \mathbf{x}(0)^T(c\mathbf{1}_1)(c\mathbf{1}_1)^T\mathbf{D} = \mathbf{x}(0)^T\mathbf{1}_1/(2m)\mathbf{1}_1^T\mathbf{D}. 
	\end{align*}
	Hence, $x^*_j = (\mathbf{x}(0)^T\mathbf{1}_1)d_j/(2m)$ if $v_j\in V_1$ and $-(\mathbf{x}(0)^T\mathbf{1}_1)d_j/(2m)$ otherwise.
\end{proof}
Hence, from Proposition \ref{pro:balance-steady}, we can see that the steady state now depends on the initial condition, which is different from random walks defined on networks only of positive connections. However, if we further assume that the initial state agrees with the balanced structure, where it has positive values in one node subset of the bipartition (e.g., $V_1$) and negative values in the other (e.g., $V_2$), the dependence can be removed partially since $\abs{\mathbf{x}(0)^T\mathbf{1}_1} = 1$, while the sign of the steady state still depends on the initialization.

\paragraph{Antibalanced networks.} We then continue to the case when the signed network is antibalanced, and will show that a steady state cannot be achieved generally, where there is one limit for odd times and another for even times. To start with, the leading eigenvalue can be characterized as in Corollary \ref{cor:antibalance-lambda}, which can be directly deducted from Theorem \ref{the:transition-spect} (or Corollary \ref{cor:balance-lambda}).
\begin{corollary}
	The signed transition matrix $\mathbf{P}$ has eigenvalue $-1$ if and only if $G$ is antibalanced.
	\label{cor:antibalance-lambda}
\end{corollary} 
% \begin{proof}
	%     A graph $G = (V, E, \mathbf{W})$ is antibalanced if and only if the graph constructed by negating the edge sign $G_n = (V, E, -\mathbf{W})$ is balanced. By Corollary \ref{cor:balance-lambda}, $G_n$ is balanced if and only if its signed transition matrix $\mathbf{P}_n = - \mathbf{P}$ has eigenvalue $1$, which is equivalent to that $\mathbf{P}$ has eigenvalue $-1$. 
	% \end{proof}
Note again that $-1$ being an eigenvalue of $\mathbf{P}$ is equivalent to $2$ being an eigenvalue of $\mathbf{L}_{rw}$, where the latter has also been shown for antibalanced graphs \cite{Li_2009_normL}.

\begin{proposition}
	If $G$ is antibalanced, then $\mathbf{P}^t$ is still a signed transition matrix, and has the following signed pattern: 
	\begin{displaymath}
		(\mathbf{P}^t)_{ij} = 
		\begin{cases}
			(-1)^t(\bar{\mathbf{P}}^t)_{ij}, \quad &\text{if } v_i,v_j\in V_1 \text{ or } v_i,v_j\in V_2\\
			(-1)^{t+1} (\bar{\mathbf{P}}^t)_{ij},\quad &\text{otherwise},
		\end{cases}
	\end{displaymath}
	where $V_1,V_2$ denote the bipartition corresponding to the antibalanced structure.
\end{proposition}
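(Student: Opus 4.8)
The plan is to mirror the proof just given for balanced networks, replacing the balanced identity $\mathbf{W} = \mathbf{I}_1\bar{\mathbf{W}}\mathbf{I}_1$ with the antibalanced one. Recall from the proof of Theorem~\ref{the:transition-spect} that when $G$ is antibalanced with bipartition $V_1,V_2$, one has $\mathbf{W} = -\mathbf{I}_1\bar{\mathbf{W}}\mathbf{I}_1$. I would first note that the signed degree matrix $\mathbf{D}$ of $G$ coincides with the degree matrix of $\bar{G}$, since $d_i = \sum_j\abs{W_{ij}} = \sum_j\bar{W}_{ij}$, and that $\mathbf{D}$ and $\mathbf{I}_1$, being diagonal, commute. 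Hence $\mathbf{P} = \mathbf{D}^{-1}\mathbf{W} = -\mathbf{D}^{-1}\mathbf{I}_1\bar{\mathbf{W}}\mathbf{I}_1 = -\mathbf{I}_1(\mathbf{D}^{-1}\bar{\mathbf{W}})\mathbf{I}_1 = -\mathbf{I}_1\bar{\mathbf{P}}\mathbf{I}_1$.

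Next I would raise this to the $t$-th power, using $\mathbf{I}_1^2 = \mathbf{I}$, to obtain $\mathbf{P}^t = (-1)^t\mathbf{I}_1\bar{\mathbf{P}}^t\mathbf{I}_1$. Since $\bar{\mathbf{P}}^t$ is an (unsigned) transition matrix and conjugating by $\mathbf{I}_1$ together with the global factor $(-1)^t$ only flips signs without altering magnitudes, the entrywise absolute value of $\mathbf{P}^t$ equals $\bar{\mathbf{P}}^t$, which is still a transition matrix; thus $\mathbf{P}^t$ is a signed transition matrix. Reading off the $(i,j)$ entry gives $(\mathbf{P}^t)_{ij} = (-1)^t(\mathbf{I}_1)_{ii}(\mathbf{I}_1)_{jj}(\bar{\mathbf{P}}^t)_{ij}$, which is $(-1)^t(\bar{\mathbf{P}}^t)_{ij}$ when $v_i,v_j$ lie in the same subset of the bipartition and $(-1)^{t+1}(\bar{\mathbf{P}}^t)_{ij}$ otherwise, as claimed.

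Alternatively — and in keeping with the reduction used to derive Proposition~\ref{pro:antibalance-lambda} from Proposition~\ref{pro:balance-lambda} — I could apply the preceding (balanced) proposition directly to $G_n = (V,E,-\mathbf{W})$, which is balanced with the same bipartition $V_1,V_2$, has $\overline{G_n} = \bar{G}$, and whose transition matrix is $\mathbf{P}_n = -\mathbf{P}$ (again because negating edge signs leaves the degrees unchanged). Then $(\mathbf{P}_n)^t = (-1)^t\mathbf{P}^t$ carries the balanced sign pattern with respect to $\bar{\mathbf{P}}^t$, and multiplying through by $(-1)^t$ yields the stated pattern.

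There is no genuinely hard step here; the two points that deserve a line of care are (i) observing that the degree matrix is unaffected by negating edge signs, so that $\mathbf{D}$ passes freely through $\mathbf{I}_1$ in the computation of $\mathbf{P}$, and (ii) making explicit what ``signed transition matrix'' means — namely that the entrywise absolute value of the matrix is row-stochastic — which is immediate once one has $\abs{\mathbf{P}^t} = \bar{\mathbf{P}}^t$.
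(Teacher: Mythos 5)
Your argument is correct and is essentially the paper's own proof: both establish $\mathbf{P} = -\mathbf{I}_1\bar{\mathbf{P}}\mathbf{I}_1$, raise it to the $t$-th power using $\mathbf{I}_1^2 = \mathbf{I}$, and read off the entries. Your added remarks on the degree matrix commuting with $\mathbf{I}_1$ and the meaning of ``signed transition matrix'' only make explicit what the paper leaves implicit.
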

\begin{proof}
	If $G$ is antibalanced, then $\mathbf{P} = -\mathbf{S}\bar{\mathbf{P}}\mathbf{S}$, where $\mathbf{S}$ is the diagonal matrix whose $(i,i)$ element is $1$ if $v_i\in V_1$ and $-1$ otherwise. Then
	\begin{align*}
		\mathbf{P}^t = \left(-\mathbf{S}\bar{\mathbf{P}}\mathbf{S}\right)^t = (-1)^t\mathbf{S}\bar{\mathbf{P}}^t\mathbf{S}.
	\end{align*}
	Since $\bar{\mathbf{P}}^t$ is still a transition matrix, $\mathbf{P}^t$ is a signed transition matrix. Meanwhile, $(\mathbf{P}^t)_{ij} = (-1)^t(\bar{\mathbf{P}}^t)_{ij}(\mathbf{S})_{ii}(\mathbf{S})_{jj}$, hence is $(-1)^t(\bar{\mathbf{P}}^t)_{ij}$ if $v_i,v_j\in V_1$ or $v_i,v_j\in V_2$, and $(-1)^{t+1}(\bar{\mathbf{P}}^t)_{ij}$ otherwise.
\end{proof}

\begin{proposition}
	If $G$ is antibalanced and is not bipartite, then the random walks do not have a steady state, where there are two different limits for odd or even times, denoted by $\mathbf{x}^{*o} = (x_j^{*o})$ and $\mathbf{x}^{*e} = (x_j^{*e})$, respectively, where
	\begin{displaymath}
		x_j^{*o} = 
		\begin{cases}
			-(\mathbf{x}(0)^T\mathbf{1}_1) d_j/(2m),\quad &\text{if } v_j\in V_1,\\
			(\mathbf{x}(0)^T\mathbf{1}_1) d_j/(2m),\quad &\text{otherwise,}
		\end{cases}
	\end{displaymath}
	while 
	\begin{displaymath}
		x_j^{*e} = 
		\begin{cases}
			(\mathbf{x}(0)^T\mathbf{1}_1) d_j/(2m),\quad &\text{if } v_j\in V_1,\\
			-(\mathbf{x}(0)^T\mathbf{1}_1) d_j/(2m),\quad &\text{otherwise,}
		\end{cases}
	\end{displaymath}
	where $\mathbf{x}(0)$ is the initial state, $\mathbf{1}_1$ is the diagonal vector of $\mathbf{S}$ with the $i$-th element being $1$ if $v_i\in V_1$ and $-1$ otherwise, and $2m = \sum_{j}d_j$. 
	\label{pro:antibalance-steady}
\end{proposition}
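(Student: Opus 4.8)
The plan is to follow the template of Proposition~\ref{pro:balance-steady}, replacing the dominant eigenvalue $1$ of the balanced case by $-1$ and carrying along the alternating factor $(-1)^t$ that appears in the antibalanced power formula (the one derived just above this statement, $\mathbf{P}^t=(-1)^t\mathbf{I}_1\bar{\mathbf{P}}^t\mathbf{I}_1$).

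First I would nail down the spectral picture of $\mathbf{P}$ through its symmetric conjugate $\mathbf{P}_{sym}$. Since $\bar G$ is connected and not bipartite, $\bar{\mathbf{P}}$ is a primitive stochastic matrix, so its largest eigenvalue $\bar\lambda_1=1$ is simple and every other eigenvalue satisfies $\abs{\bar\lambda_i}<1$. Because $G$ is antibalanced, Theorem~\ref{the:transition-spect} applied to $\mathbf{P}_{sym}$ and $\bar{\mathbf{P}}_{sym}$ gives $\Lambda=-\bar\Lambda$ and $\mathbf{U}=\mathbf{I}_1\bar{\mathbf{U}}$, so $\lambda_n=-1$ is the unique eigenvalue of $\mathbf{P}$ of modulus $1$, it is simple, and its eigenvector is $\mathbf{u}_n=c\,\mathbf{1}_1$ for a constant we may take positive; normalizing $\norm{\mathbf{D}^{1/2}\mathbf{u}_n}_2=1$ and using $\sum_i d_i=2m$ forces $c=1/\sqrt{2m}$. (Equivalently, $\mathbf{u}_n=c\mathbf{1}_1$ follows directly from Proposition~\ref{pro:antibalance-lambda} via the balanced graph $G_n$ obtained by negating all signs.) I would also record here that $+1$ is \emph{not} an eigenvalue of $\mathbf{P}$: by Proposition~\ref{pro:balanced-anti-bipart} a non-bipartite graph cannot be both balanced and antibalanced, so $G$ is not balanced and Proposition~\ref{pro:balance-lambda} applies.

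Next I would take powers and split into the two parities. Writing $\mathbf{P}_{sym}^t=\sum_{i}\lambda_i^t(\mathbf{D}^{1/2}\mathbf{u}_i)(\mathbf{D}^{1/2}\mathbf{u}_i)^T$ with the vectors $\mathbf{D}^{1/2}\mathbf{u}_i$ orthonormal, every term with $i\ne n$ vanishes as $t\to\infty$, so $\mathbf{P}_{sym}^t=(-1)^t(\mathbf{D}^{1/2}\mathbf{u}_n)(\mathbf{D}^{1/2}\mathbf{u}_n)^T+o(1)$. Conjugating back via $\mathbf{P}=\mathbf{D}^{-1/2}\mathbf{P}_{sym}\mathbf{D}^{1/2}$ and substituting $\mathbf{u}_n=c\mathbf{1}_1$, $c^2=1/(2m)$, exactly as in the proof of Proposition~\ref{pro:balance-steady}, gives $\mathbf{x}(0)^T\mathbf{P}^t=\tfrac{(-1)^t}{2m}\big(\mathbf{x}(0)^T\mathbf{1}_1\big)\mathbf{1}_1^T\mathbf{D}+o(1)$. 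Letting $t\to\infty$ through odd values yields $\mathbf{x}^{*o}=-\tfrac{1}{2m}(\mathbf{x}(0)^T\mathbf{1}_1)\mathbf{1}_1^T\mathbf{D}$, i.e.\ $x_j^{*o}=-(\mathbf{x}(0)^T\mathbf{1}_1)d_j/(2m)$ when $v_j\in V_1$ and $x_j^{*o}=(\mathbf{x}(0)^T\mathbf{1}_1)d_j/(2m)$ when $v_j\in V_2$ (recall $(\mathbf{1}_1)_j=\pm1$ on $V_1/V_2$); through even values the overall sign flips, producing the stated $\mathbf{x}^{*e}$.

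I expect the only genuinely delicate point to be the spectral input, namely that $-1$ is the \emph{unique} dominant eigenvalue of $\mathbf{P}$. This needs both (i) primitivity of $\bar{\mathbf{P}}$, which is exactly where the non-bipartiteness hypothesis enters, and (ii) the exclusion of $+1$ from the spectrum, which is where Propositions~\ref{pro:balanced-anti-bipart} and~\ref{pro:balance-lambda} are used. Everything after that is the same bookkeeping as in the balanced case, now with the factor $(-1)^t$ tracked through the two subsequential limits.
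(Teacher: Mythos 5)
Your proposal is correct and follows essentially the same route as the paper: spectral decomposition of $\mathbf{P}_{sym}$, isolation of the simple dominant eigenvalue $-1$ with eigenvector $c\mathbf{1}_1$, $c=1/\sqrt{2m}$, decay of all other modes by non-bipartiteness, and separate limits along odd and even times. Your extra care in explicitly ruling out $+1$ as an eigenvalue (via Propositions~\ref{pro:balanced-anti-bipart} and~\ref{pro:balance-lambda}) is a welcome tightening of a step the paper leaves implicit, but it does not change the argument.
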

\begin{proof}
	Since $G$ is not bipartite, $\abs{\lambda_i} < 1,\, \forall i\ne n$. Hence, 
	for odd times, 
	\begin{align*}
		\lim_{t\to\infty}\mathbf{P}_{sys}^{2t-1} 
		&= \lim_{t\to\infty}\sum_{i=1}^n\lambda_i^{2t-1}\left(\mathbf{D}^{1/2}\mathbf{u}_i\right)\left(\mathbf{D}^{1/2}\mathbf{u}_i\right)^T\\
		&= \lim_{t\to\infty}(-1)^{2t-1}\left(\mathbf{D}^{1/2}\mathbf{u}_n\right)\left(\mathbf{D}^{1/2}\mathbf{u}_n\right)^T\\
		&= -\left(\mathbf{D}^{1/2}\mathbf{u}_n\right)\left(\mathbf{D}^{1/2}\mathbf{u}_n\right)^T,
	\end{align*}
	and similarly for even times, 
	\begin{align*}
		\lim_{t\to\infty}\mathbf{P}_{sys}^{2t} = \left(\mathbf{D}^{1/2}\mathbf{u}_n\right)\left(\mathbf{D}^{1/2}\mathbf{u}_n\right)^T,
	\end{align*}
	where the eigenvectors $\mathbf{D}^{1/2}\mathbf{u}_i$ are orthonormal to each other, and $\mathbf{u}_i$ is the eigenvector of $\mathbf{P}$ associated with the same eigenvalue. From Proposition \ref{pro:transition-spect-rho}, $\mathbf{u}_n$ has the specific structure where $\mathbf{u}_n = c\mathbf{1}_1$ for some nonzero constant $c\in \mathbb{R}$, and WLOG, $c > 0$. Then from $\mathbf{D}^{1/2}\mathbf{u}_i$ has $2$-norm $1$, $c = 1/\sqrt{2m}$. Hence, for odd times, 
	\begin{align*}
		\mathbf{x}^{*o} 
		&= \lim_{t\to\infty}\mathbf{x}(0)^T\mathbf{P}^{2t-1}\\
		&= \lim_{t\to\infty}\mathbf{x}(0)^T\mathbf{D}^{-1/2}\mathbf{P}_{sys}^{2t-1}\mathbf{D}^{1/2}\\
		&= -\mathbf{x}(0)^T\mathbf{D}^{-1/2}\left(\mathbf{D}^{1/2}\mathbf{u}_n\right)\left(\mathbf{D}^{1/2}\mathbf{u}_n\right)^T\mathbf{D}^{1/2}\\
		&= -\mathbf{x}(0)^T(c\mathbf{1}_1)(c\mathbf{1}_1)^T\mathbf{D} = -\mathbf{x}(0)^T\mathbf{1}_1/(2m)\mathbf{1}_1^T\mathbf{D},
	\end{align*}
	and similarly for even times, 
	\begin{align*}
		\mathbf{x}^{*2} 
		&= \lim_{t\to\infty}\mathbf{x}(0)^T\mathbf{P}^{2t}\\
		&= \lim_{t\to\infty}\mathbf{x}(0)^T\mathbf{D}^{-1/2}\mathbf{P}_{sys}^{2t}\mathbf{D}^{1/2}\\
		&= \mathbf{x}(0)^T\mathbf{D}^{-1/2}\left(\mathbf{D}^{1/2}\mathbf{u}_n\right)\left(\mathbf{D}^{1/2}\mathbf{u}_n\right)^T\mathbf{D}^{1/2}\\
		&= \mathbf{x}(0)^T\mathbf{1}_1/(2m)\mathbf{1}_1^T\mathbf{D},
	\end{align*}
	which are of the same forms as stated.
\end{proof}
Hence, from Proposition \ref{pro:antibalance-steady}, we can see that in antibalanced networks, the limiting behavior of signed random walks not only depends on the initial condition, but also odd or even times. 

\paragraph{Strictly unbalanced networks.} Finally, we consider all the remaining signed networks, the strictly unbalanced ones. Interestingly, a steady state is actually achievable in this case.
\begin{proposition}
	If $G$ is strictly unbalanced, then the steady state is $\mathbf{x}^*=\mathbf{0}$, where $\mathbf{0}$ is the vector of zeros.
\end{proposition}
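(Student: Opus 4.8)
The plan is to show that strict unbalance forces every eigenvalue of the signed transition matrix $\mathbf{P} = \mathbf{D}^{-1}\mathbf{W}$ to be strictly less than $1$ in absolute value, whence $\mathbf{P}^t \to \mathbf{0}$ and therefore $\mathbf{x}^* = \lim_{t\to\infty}\mathbf{x}(0)^T\mathbf{P}^t = \mathbf{0}$ for every admissible initial vector $\mathbf{x}(0)$.

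First I would collect three facts about the spectrum of $\mathbf{P}$. (a) As noted at the start of this section, $\mathbf{P}$ is similar to the symmetric matrix $\mathbf{P}_{sym} = \mathbf{D}^{-1/2}\mathbf{W}\mathbf{D}^{-1/2}$, so its eigenvalues $\lambda_1 \ge \dots \ge \lambda_n$ are all real. (b) Applying the Gershgorin circle theorem to $\mathbf{P}$, whose $i$-th row has absolute sum $\sum_j \abs{W_{ij}}/d_i = 1$, gives $\abs{\lambda_i} \le 1$ for all $i$, i.e.\ $\lambda_1 \le 1$ and $\lambda_n \ge -1$. (c) By Proposition \ref{pro:balance-lambda}, $\mathbf{P}$ has eigenvalue $1$ if and only if $G$ is balanced, and by Proposition \ref{pro:antibalance-lambda}, $\mathbf{P}$ has eigenvalue $-1$ if and only if $G$ is antibalanced; since $G$ is strictly unbalanced it is neither, so $\pm 1$ are not eigenvalues of $\mathbf{P}$.

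Combining (a)--(c), the entire spectrum lies in the open interval $(-1,1)$, so $\abs{\lambda_i} < 1$ for all $i$. Using the spectral decomposition of $\mathbf{P}_{sym}$ with $\mathbf{D}^{1/2}\mathbf{u}_1,\dots,\mathbf{D}^{1/2}\mathbf{u}_n$ orthonormal (as assumed in this section), $\mathbf{P}_{sym}^t = \sum_{i=1}^n \lambda_i^t (\mathbf{D}^{1/2}\mathbf{u}_i)(\mathbf{D}^{1/2}\mathbf{u}_i)^T \to \mathbf{0}$ as $t\to\infty$ because each $\lambda_i^t \to 0$. Hence $\mathbf{P}^t = \mathbf{D}^{-1/2}\mathbf{P}_{sym}^t\mathbf{D}^{1/2} \to \mathbf{0}$, and for the finite initial vector $\mathbf{x}(0)$ we obtain $\mathbf{x}^* = \lim_{t\to\infty}\mathbf{x}(0)^T\mathbf{P}^t = \mathbf{0}$, as claimed.

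The only genuinely delicate point is excluding eigenvalues of modulus exactly $1$: Gershgorin alone allows eigenvalues anywhere on the unit circle, and it is the reality of the spectrum (from similarity to a symmetric matrix) together with Propositions \ref{pro:balance-lambda}--\ref{pro:antibalance-lambda} that confines the borderline cases to $\pm 1$ and then removes them. An essentially equivalent route bypasses Gershgorin entirely: regard $\mathbf{P}_{sym}$ as the signed weighted adjacency matrix of a signed graph on $V$ with edge weights $W_{ij}/\sqrt{d_i d_j}$, which has the same sign pattern as $G$ and is therefore also strictly unbalanced, with unsigned counterpart $\bar{\mathbf{P}}_{sym}$ of spectral radius $\rho(\bar{\mathbf{P}}) = 1$; then Theorem \ref{the:strict-unb-rho} yields $\rho(\mathbf{P}) = \rho(\mathbf{P}_{sym}) < \rho(\bar{\mathbf{P}}_{sym}) = 1$ directly, and the same limiting argument concludes.
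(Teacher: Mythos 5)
Your proof is correct, but your primary argument takes a genuinely different route from the paper's. The paper's proof is a direct application of Theorem \ref{the:strict-unb-rho}: it asserts $\rho(\mathbf{P}_{sym}) < \rho(\bar{\mathbf{P}}_{sym}) = 1$ by viewing $\mathbf{P}_{sym}$ as the signed weighted adjacency matrix of a strictly unbalanced graph (precisely the ``equivalent route'' you sketch in your closing paragraph, where you in fact supply more justification than the paper does for why the theorem applies to the reweighted matrix). Your main argument instead never touches Theorem \ref{the:strict-unb-rho} or the walk-counting Lemma \ref{lem:unbalanced} behind it: you combine the reality of the spectrum (similarity to $\mathbf{P}_{sym}$), the Gershgorin bound $\abs{\lambda_i}\le 1$ coming from the unit row sums of $\mathbf{P}$, and the two characterizations in Propositions \ref{pro:balance-lambda} and \ref{pro:antibalance-lambda} to exclude the endpoints $\pm 1$, concluding that the spectrum lies in the open interval $(-1,1)$. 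This is more self-contained within the random-walk section and more elementary, since the only nontrivial inputs are the quadratic-form arguments already used to prove those two propositions; the trade-off is that it is specific to the transition matrix (it leans on the row sums equalling $1$), whereas the paper's route inherits the general statement about arbitrary signed weighted adjacency matrices and hence also delivers the quantitative contraction $\rho(\mathbf{W})<\rho(\bar{\mathbf{W}})$ used elsewhere. Both arguments close the same delicate gap you correctly identify, namely ruling out eigenvalues of modulus exactly $1$, and the final limiting step via the spectral decomposition of $\mathbf{P}_{sym}^t$ is identical to the paper's.
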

\begin{proof}
	When $G$ is strictly unbalanced, by Theorem \ref{the:strict-unb-rho}, $\rho(\mathbf{P}_{sys}) < \rho(\bar{\mathbf{P}}_{sys}) = 1$. Hence,
	\begin{align*}
		\lim_{t\to\infty}\mathbf{P}_{sys}^t = \lim_{t\to\infty}\sum_{t=1}^n\lambda_i^t\left(\mathbf{D}^{1/2}\mathbf{u}_i\right)\left(\mathbf{D}^{1/2}\mathbf{u}_i\right)^T = \mathbf{O},
	\end{align*}
	where $\mathbf{O}$ is the matrix of zeros. Hence, 
	\begin{align*}
		\mathbf{x}^* = \lim_{t\to\infty}\mathbf{x}(0)^T\mathbf{P}^t = \lim_{t\to\infty}\mathbf{x}(0)^T\mathbf{D}^{-1/2}\mathbf{P}_{sys}^t\mathbf{D}^{1/2} = \mathbf{x}(0)^T\mathbf{D}^{-1/2}\mathbf{O}\mathbf{D}^{1/2} = \mathbf{0}.
	\end{align*}
\end{proof}

We know that when $G$ is balanced, $d_b(G) = 0$, and when $G$ is antibalanced, $d_a(G) = 0$, but the problem remains what these measures correspond to quantitatively in the case of strictly unbalanced networks. We note that for each eigenvalue $\lambda$ of $\mathbf{L}_{rw}$, $1-\lambda$ is also an eigenvalue of $\mathbf{P}$. Hence, $d_b(G) = 1 - \lambda_{\max}(\mathbf{P}(G))$ and $d_a(G) = 1 + \lambda_{\min}(\mathbf{P}(G))$. Then if we denote the balanced network that is closest to $G$ by $G^b$, while the antibalanced network that is closest to $G$ by $G^a$, in the sense that $G$ will become balanced or antibalanced by flipping the sign of the least number of edges, then measures \eqref{equ:signed_d-balanced} and \eqref{equ:signed_d-anti-balanced} are 
\begin{align*}
	d_b(G) &= - \left(\lambda_{\max}(\mathbf{P}(G)) - 1\right) = - \left(\lambda_{\max}(\mathbf{P}(G)) - \lambda_{\max}(\mathbf{P}(G^b))\right) = -\Delta_{\max},\\
	d_a(G) &= \lambda_{\min}(\mathbf{P}(G)) - (-1) = \lambda_{\min}(\mathbf{P}(G)) - \lambda_{\min}(\mathbf{P}(G^a)) = \Delta_{\min},    
\end{align*}
where we denote $\lambda_{\max}(\mathbf{P}(G)) - \lambda_{\max}(\mathbf{P}(G^b))$ by $\Delta_{\max}$, and denote $\lambda_{\min}(\mathbf{P}(G)) - \lambda_{\min}(\mathbf{P}(G^a))$ by $\Delta_{\min}$. Hence, we analyze the measures through considering the strictly unbalanced network $G$ as a perturbation of a balanced or antibalanced network, whichever is closer to $G$. In the following, we show exclusively the results from perturbing balanced networks, and the results for the antibalanced ones follow similarly. We also refer the reader to \cite{Greenbaum_pertub_2020} and references therein for more mathematical details.

Let us denote the signed transition matrix of $G^b$ by $\mathbf{P}^b$, and its largest eigenvalues by $\lambda^b_1 = 1$ where $\mathbf{P}^b\mathbf{u}^b = \lambda^b_1\mathbf{u}^b$ and $\mathbf{w}^{bT}\mathbf{P}^{b} = \lambda^b_1\mathbf{w}^{bT}$ with $\mathbf{u}^b,\mathbf{w}^b$ denoting the right and left eigenvectors, respectively, of $\mathbf{P}^b$. We consider $\mathbf{P} = \mathbf{P}^b + \Delta\mathbf{P}$. If $\Delta\mathbf{P}$ is small and the largest eigenvalue of $\mathbf{P}^b$ is well separated from the others, we can consider the largest eigenvalue of $\mathbf{P}$ as perturbing the one of $\mathbf{P}^b$, where $\lambda_1 = \lambda^b_1 + \Delta_{\max}$ with its corresponding right eigenvector $\mathbf{u} = \mathbf{u}^b + \Delta\mathbf{u}$. Hence, 
\begin{align*}
	(\mathbf{P}^b + \Delta\mathbf{P})(\mathbf{u}^b + \Delta\mathbf{u}) = (\lambda^b_1 + \Delta_{\max})(\mathbf{u}^b + \Delta\mathbf{u}). 
\end{align*}
Then, left multiplying $\mathbf{w}^{bT}$ gives, 
\begin{align*}
	\mathbf{w}^{bT}(\mathbf{P}^b + \Delta\mathbf{P})(\mathbf{u}^b + \Delta\mathbf{u}) 
	&= (\lambda^b_1 + \Delta_{\max})\mathbf{w}^{bT}(\mathbf{u}^b + \Delta\mathbf{u})\\
	%\mathbf{y}^{bT}\mathbf{P}^b\mathbf{x}^b + \mathbf{y}^{bT}\mathbf{P}^b\Delta\mathbf{x} + \mathbf{y}^{bT}\Delta\mathbf{P}\mathbf{x}^b + \mathbf{y}^{bT}\Delta\mathbf{P}\Delta\mathbf{x} &= \lambda^b_1\mathbf{y}^{bT}\mathbf{x}^b + \lambda^b_1\mathbf{y}^{bT}\Delta\mathbf{x} + \Delta_{\max}\mathbf{y}^{bT}\mathbf{x}^b + \Delta_{\max}\mathbf{y}^{bT}\Delta\mathbf{x}\\ 
	\mathbf{w}^{bT}\Delta\mathbf{P}\mathbf{u}^b + \mathbf{w}^{bT}\Delta\mathbf{P}\Delta\mathbf{u} &= \Delta_{\max}\mathbf{w}^{bT}\mathbf{u}^b + 
	\Delta_{\max}\mathbf{w}^{bT}\Delta\mathbf{u}.
\end{align*}
Since we assume $\Delta\mathbf{P}$ is small, we ignore second-order terms $\mathbf{w}^{bT}\Delta\mathbf{P}\Delta\mathbf{u}$ and $\Delta_{\max}\mathbf{w}^{bT}\Delta\mathbf{u}$, and then
\begin{align}
	\Delta_{\max} = \frac{\mathbf{w}^{bT}\Delta\mathbf{P}\mathbf{u}^b}{\mathbf{w}^{bT}\mathbf{u}^b}. 
	\label{equ:delta-lambda}
\end{align}

From Theorem \ref{the:transition-spect}, we can directly specify the left and right eigenvectors as in Corollary \ref{cor:balance-lreig}. We then interpret the proposed measures in Proposition \ref{pro:P-perturb}.
\begin{corollary}
	For a balanced network $G^b$, the signed transition matrix $\mathbf{P}^b$ has a right eigenvector $\mathbf{u}^b$ and a left eigenvector $\mathbf{w}^b$ associated with the largest eigenvalue $\lambda^b_1 = 1$, where
	\begin{align}
		u^b_i = 
		\begin{cases}
			1,\quad &\text{if } v_i \in V_1,\\
			-1, \quad &\text{if } v_i\in V_2,
		\end{cases}
		\quad 
		w^b_i = 
		\begin{cases}
			d_i,\quad &\text{if } v_i \in V_1,\\
			-d_i, \quad &\text{if } v_i\in V_2.
		\end{cases}
	\end{align}
	\label{cor:balance-lreig}
\end{corollary}
% \begin{proof}
	%     We can check that 
	%     \begin{align*}
		%         \left(\mathbf{P}^b\mathbf{u}^b\right)_i = \sum_{j}P^b_{ij}u^b_j = \sum_{v_j\in V_1}\frac{W_{ij}}{d_i} - \sum_{v_j\in V_2}\frac{W_{ij}}{d_i} = 
		%         \begin{cases}
			%             1,\quad &\text{if } i \in V_1\\
			%             -1, \quad &\text{if } i\in V_2
			%         \end{cases}
		%         = u^b_i,
		%     \end{align*}
	%     hence $\mathbf{u}^b$ is a right eigenvector of $\mathbf{P}^b$. Similarly,
	%     \begin{align*}
		%         \left(\mathbf{w}^{bT}\mathbf{P}^b\right)_i = \sum_{j}w^b_jP^b_{ji} = \sum_{v_j\in V_1}d_j\frac{W_{ji}}{d_j} - \sum_{v_j\in V_2}d_j\frac{W_{ji}}{d_j} = \begin{cases}
			%         d_i,\quad &\text{if } v_i \in V_1\\
			%         -d_i, \quad &\text{if } v_i\in V_2
			%     \end{cases}
		%     = w^b_i,
		%     \end{align*}
	%     hence $\mathbf{w}^b$ is a left eigenvector of $\mathbf{P}^b$.
	% \end{proof}

\begin{proposition}
	When $G^b$ can be obtained by flipping the sign of a set of edges $\tilde{E} \subset E$,
	\begin{align}
		\frac{\mathbf{w}^{bT}\Delta\mathbf{P}\mathbf{u}^b}{\mathbf{w}^{bT}\mathbf{u}^b} = -\frac{2\sum_{(v_i,v_j)\in \tilde{E}}\abs{W_{ij}}}{m}, 
	\end{align}
	where $2m = \sum_{j}d_j$.
	\label{pro:P-perturb}
\end{proposition}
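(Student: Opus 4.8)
The plan is to evaluate the quotient in \eqref{equ:delta-lambda} by direct substitution, using the explicit forms of $\mathbf{u}^b$ and $\mathbf{w}^b$ from Proposition \ref{pro:balance-lreig} together with the structural description of $G^b$ as a balanced graph. The first observation is that flipping edge signs does not change any node degree, so $G$ and $G^b$ share the same diagonal matrix $\mathbf{D}$, and hence $\Delta\mathbf{P} = \mathbf{P} - \mathbf{P}^b = \mathbf{D}^{-1}(\mathbf{W} - \mathbf{W}^b)$. Since the sign of edge $(v_i,v_j)$ in $G^b$ is opposite to that in $G$ exactly on $\tilde E$ and equal elsewhere, we have $W^b_{ij} = -W_{ij}$ for $(v_i,v_j)\in\tilde E$ and $W^b_{ij}=W_{ij}$ otherwise; therefore $(\Delta\mathbf{P})_{ij} = 2W_{ij}/d_i$ for $(v_i,v_j)\in\tilde E$ and $0$ for every other ordered pair. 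The denominator is then immediate: reading $w^b_i = d_i u^b_i$ off Proposition \ref{pro:balance-lreig}, we get $\mathbf{w}^{bT}\mathbf{u}^b = \sum_i d_i (u^b_i)^2 = \sum_i d_i = 2m$ (in particular it is nonzero, so the quotient is well defined).

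For the numerator I would expand $\mathbf{w}^{bT}\Delta\mathbf{P}\,\mathbf{u}^b = \sum_{i,j} w^b_i (\Delta\mathbf{P})_{ij} u^b_j$ and keep only the two ordered pairs $(i,j)$, $(j,i)$ arising from each undirected edge of $\tilde E$. Using $w^b_i = d_i u^b_i$ and the symmetry of $\mathbf{W}$, such an edge contributes $w^b_i (\Delta\mathbf{P})_{ij} u^b_j + w^b_j (\Delta\mathbf{P})_{ji} u^b_i = 2W_{ij}u^b_iu^b_j + 2W_{ij}u^b_iu^b_j = 4W_{ij}u^b_iu^b_j$. The crux is to evaluate $W_{ij}u^b_iu^b_j$ on $\tilde E$: because $G^b$ is balanced with bipartition $V_1,V_2$, an edge of $G^b$ is positive when both endpoints lie in the same block and negative otherwise, i.e.~$\operatorname{sign}(W^b_{ij}) = u^b_iu^b_j$, so $W^b_{ij} = \abs{W^b_{ij}}u^b_iu^b_j = \abs{W_{ij}}u^b_iu^b_j$; combining with $W_{ij} = -W^b_{ij}$ on $\tilde E$ and $(u^b_iu^b_j)^2 = 1$ gives $W_{ij}u^b_iu^b_j = -\abs{W_{ij}}$.

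Summing over the edges of $\tilde E$ then yields $\mathbf{w}^{bT}\Delta\mathbf{P}\,\mathbf{u}^b = -4\sum_{(v_i,v_j)\in\tilde E}\abs{W_{ij}}$, and dividing by $\mathbf{w}^{bT}\mathbf{u}^b = 2m$ produces exactly $-2\sum_{(v_i,v_j)\in\tilde E}\abs{W_{ij}}/m$, as claimed. The only places requiring care are bookkeeping: getting the direction of the sign flip right so that $W^b = -W$ on $\tilde E$ (not the reverse), and correctly counting both ordered pairs $(i,j)$ and $(j,i)$ for each undirected edge — this is what turns the factor $2$ coming from $\Delta\mathbf{P}$ into the factor $4$ that cancels against $2m$. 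There is no analytic difficulty here; since the first-order perturbation formula \eqref{equ:delta-lambda} and the eigenvectors of $\mathbf{P}^b$ are already in hand, the proposition reduces to this single substitution.
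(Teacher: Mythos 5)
Your proposal is correct and follows essentially the same route as the paper: both evaluate the quotient by direct substitution, using that $\Delta\mathbf{P}$ is supported on the ordered pairs coming from $\tilde{E}$ with entries $2W_{ij}/d_i=-2P^b_{ij}$, that $w^b_i=d_iu^b_i$ gives $\mathbf{w}^{bT}\mathbf{u}^b=2m$, and that balance of $G^b$ forces $P^b_{ij}w^b_iu^b_j=\abs{W_{ij}}$ so each undirected edge contributes $-4\abs{W_{ij}}$ to the numerator. Your write-up is if anything slightly more explicit than the paper's (e.g.\ noting that sign flips preserve $\mathbf{D}$ and computing the denominator), but there is no substantive difference in the argument.
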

\begin{proof}
	% When $G^b$ can be obtained by flipping the sign of one edge $(v_i,v_j)\in E$, 
	% \begin{align*}
		%     \Delta\mathbf{P} = -2P^b_{ij}\mathbf{e}_i\mathbf{e}_j^T - 2P^b_{ji}\mathbf{e}_j\mathbf{e}_i^T,
		% \end{align*}
	% where $\mathbf{e}_i$ is a basis of the identity matrix with only the $i$-th element being $1$, and $\mathbf{P}^b = (P^b_{ij})$. Then, from \eqref{equ:delta-lambda} and Proposition \ref{pro:balance-lreig}, 
	% \begin{align*}
		%     \Delta_{\max} = \frac{-2P^b_{ij}y^b_ix^b_j - 2P^b_{ji}y^b_jx^b_i}{\mathbf{y}^{bT}\mathbf{x}^b} = \frac{-2\abs{W_{ij}} - 2\abs{W_{ij}}}{2m} = -\frac{2\abs{W_{ij}}}{m},
		% \end{align*}
	% where the second equality is obtained by $P^b_{ij}y_i^bx_j^b > 0$ and $P^v_{ji}y^b_jx^b_i > 0$ in balanced graphs.
	When $G^b$ can be obtained by flipping the sign of a set of edges $\tilde{E}\subset E$, 
	\begin{align*}
		\Delta\mathbf{P} = \sum_{(v_i,v_j)\in\tilde{E}}-2P^b_{ij}\mathbf{e}_i\mathbf{e}_j^T - 2P^b_{ji}\mathbf{e}_j\mathbf{e}_i^T,
	\end{align*}
	where $\mathbf{e}_i$ is a column vector of the identity matrix with only the $i$-th element being $1$, and $\mathbf{P}^b = (P^b_{ij})$. Then, from Corollary \ref{cor:balance-lreig}, 
	\begin{align*}
		\frac{\mathbf{w}^{bT}\Delta\mathbf{P}\mathbf{u}^b}{\mathbf{w}^{bT}\mathbf{u}^b}
		&= \frac{\sum_{(v_i,v_j)\in\tilde{E}}-2P^b_{ij}w^b_iu^b_j - 2P^b_{ji}w^b_ju^b_i}{\mathbf{w}^{bT}\mathbf{u}^b} \\
		&= \frac{\sum_{(v_i,v_j)\in\tilde{E}}-2\abs{W_{ij}} - 2\abs{W_{ij}}}{2m}\\ 
		&= -\frac{2\sum_{(v_i,v_j)\in\tilde{E}}\abs{W_{ij}}}{m},
	\end{align*}
	where the second equality is obtained by $P^b_{ij}w_i^bu_j^b > 0$ and $P^b_{ji}w^b_ju^b_i > 0$, $\forall (v_i,v_j)\in E$.
\end{proof}
Hence, from Proposition \ref{pro:P-perturb} together with \eqref{equ:delta-lambda}, the proposed measure $d_b(G) = -\Delta_{\max}$ is proportional to the number of edges disturbing the balanced structure, which is also known as frustration index or line index of balance \cite{abelson1958frustration,harary_1953_balance}. In a similar manner, we can show that the proposed measure $d_a(G) = \Delta_{\min}$ is proportional to the number of edges disturbing the antibalanced structure.

% \begin{proposition}
	%     If $\Delta \mathbf{P} = \mathbf{P} - \mathbf{P}^b$ is small, then the change of leading eigenvalue when adding one edge $(v_i, v_j)$ that is not consistent with the balanced structure is
	%     \begin{align}
		%         \Delta\lambda = -\frac{1}{m}\left(\frac{d_i}{d_i+1} + \frac{d_j}{d_j+1}\right).
		%     \end{align}
	%     Further, the change when adding a set of edges $\tilde{E} \subset E$ that are not consistent with the balanced structure is,
	%     \begin{align}
		%         \frac{1}{2m}\left( - \sum_{i:(i,j)\in \tilde{E}}\frac{|\tilde{E}_i|d_i}{k_i + |\tilde{E}_i|} + \sum_{i:(i,j)\in \tilde{E}}\frac{\sum_{j:(i,j)\in \tilde{E}_i}A'_{ij}v_iu_j}{d_i + |\tilde{E}_i|} \right).
		%     \end{align}
	% \end{proposition}

\subsubsection{Extended linear threshold (ELT) model}
\label{sec:elt}
In unsigned networks, the ELT model aggregates the state values from all the neighbours for each node, but only changes its state value if the sum is greater than a predetermined threshold. While in signed networks, we follow similar procedure to apply the opposing rule and consider the ``polarization" on each node as in subsection \ref{sec:linear_adj_dynamics}. We also maintain the thresholding, but further allow a node to take a negative value if the sum is less than the opposite of the threshold. Hence, the ELT model on signed networks evolves as follows, where $\forall v_j\in V, t > 0$,  
\begin{align}
	x_j(t) = 
	\begin{cases}
		\theta_{j,t},\quad & \sum_iW_{ij}x_i(t-1) \ge \theta_{j,t},\\
		-\theta_{j,t},\quad & \sum_iW_{ij}x_i(t-1) \le -\theta_{j,t},\\
		0,\quad &\text{otherwise},
	\end{cases}
	\label{equ:signed_extreme2_update}
\end{align}
where $\theta_{j,t}$ is the threshold to trigger the propagation, either positively or negatively, and the initial state vector $\mathbf{x}(0)$ is given.

A theoretical understanding of threshold models on simple networks is still an active area of research, and here we consider the even more challenging case of signed networks. Hence, we (i) start from a specific network structure, regular (ring) lattices with uniform magnitude of the edge weight, $\alpha$, and (ii) analyze the behavior of the ELT model when the whole neighbourhood of a node (including itself), referred to as the \textit{central node}, is activated. Specifically, we denote the degree of nodes in the regular lattices as $\bar{d}$, and apply the following geometric sequences (cf.~the threshold-type bounds \cite{Tian_info_2021}) for the threshold values
\begin{align*}
	\theta_{j,t} = (\theta_l\alpha)^tl_0,
\end{align*}
where $\theta_l>0$ is the time-independent threshold that is the same for all nodes, and $l_0>0$ is the magnitude of the initially activated state value, with $x_j(0) = \pm l_0$ if node $v_j$ is activated initially and $0$ otherwise. Then the updating function \eqref{equ:signed_extreme2_update} is now
\begin{align}
	x_j(t) = 
	\begin{cases}
		(\theta_l \alpha)^{t}l_0,\quad &\sum_{v_i\in \mathcal{A}_{t-1}}A_{ij} \ge \theta_l,\\
		-(\theta_l \alpha)^{t}l_0,\quad &\sum_{v_i\in \mathcal{A}_{t-1}}A_{ij} \le -\theta_l,\\
		0,\quad &\text{otherwise},
	\end{cases}
	\label{equ:extreme2_update-uni}
\end{align}
where $\mathbf{A}$ is the signed (unweighted) adjacency matrix as in \eqref{equ:signed_A}, and $\mathcal{A}_t = \{v_i: x_i(t) \ne 0\}$. Hence, $\theta_l$ is actually the threshold on the number of neighbours that are positively activated over those that are negatively activated (in the previous time step). 

We start from analyzing the ELT model on simple regular lattices (ignoring the edge sign), and then proceed to signed regular lattices through their balanced structures. In simple regular lattices, the whole neighbourhood of the central node are positively activated initially. We find that when  
\begin{align}
	\theta_l \le \theta_l^* = \bar{d}/2,
	\label{equ:lattice-upper}
\end{align}  
$\exists v_i\in V, t>0,\ s.t.\ x_i(t) > 0$, i.e., some node has positive state value at certain time step other than the initial start\footnote{\footnotesize{In this specific case, under the same condition, (i) every nodes will have positive state values at some time step, and (ii) only $d/2 + 1$ consecutive nodes are needed to be activated initially in order to trigger the propagation with feature (i).}}, and the condition is the same as the one in \cite{Centola_ccontagion_2007}. Specifically, at each $t>0$:
\begin{enumerate}[label=(\roman*)]
	\item $x_j(t) = (\theta_l \alpha)^tl_0$ if $v_j\in \mathcal{A}_{t-1}$;
	\item $\bar{d} - 2(\left \lceil{\theta_l}\right \rceil - 1)$ more nodes that are closest to $\mathcal{A}_{t-1}$ will be activated with the same state value $(\theta_l \alpha)^tl_0$, if there is any.
\end{enumerate}
Therefore, $x_i(t) > 0,\ \forall v_i\in V$, for sufficiently large $t$. In the following, we will specify the behavior of the ELT model on signed regular lattices that are balanced, antibalanced or strictly unbalanced. Here, $x_i(t) < 0$ is possible for each node $v_i$ at each time step $t\ge 0$, and particularly we will specify whether to positively or negatively activate a node initially. 

\paragraph{Balanced regular lattices.} We consider the activations that are consistent with the balanced bipartition, where we positively activate the central node $v_i$, and for each of its neighbours $v_j$, we positively activate it with $x_j(0) = l_0$ if $W_{ij} > 0$ and negatively activate it with $x_j(0) = -l_0$ otherwise; see \eqref{fig:lattice} for different distributions of signed edges and the corresponding activations.
\begin{figure}[ht]
	\centering
	\begin{tabular}{cc}
		\includegraphics[width=.3\textwidth]{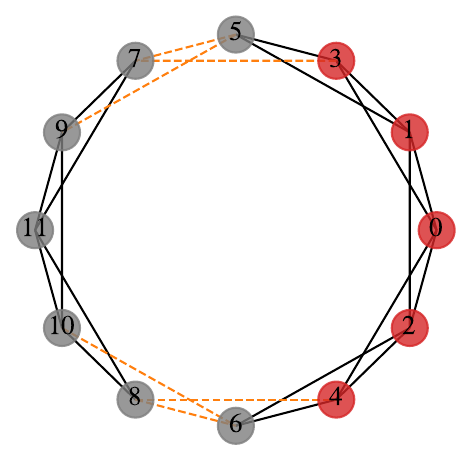} & \includegraphics[width=.3\textwidth]{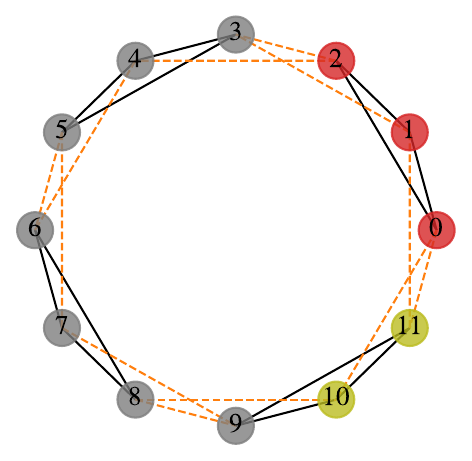}
	\end{tabular}
	\caption{Example of signed regular lattices of degree $4$ with different structurally balanced configurations, where positive edges are in black, negative edges are dashed in orange, and the whole neighbourhood of node $v_0$ is in different colour(s) from the others (in grey), with the ones that are positively activated in red and the others that are negatively activated in green.}
	\label{fig:lattice}
\end{figure}

Similarly, we find that when condition \eqref{equ:lattice-upper} is true, $\exists v_i\in V, t>0,\ s.t.\ x_i(t) \ne 0$, i.e., some node has nonzero state value at certain time step other than the initial start, and we refer to this phenomenon as ``certain propagation" on the signed networks. However, with the edge sign, there are more interesting patterns, where at each $t > 0$: 
\begin{enumerate}[label=(\roman*)]
	\item 
	\begin{align*}
		x_j(t) = 
		\begin{cases}
			(\theta_l \alpha)^tl_0, \quad \text{if } v_j\in \mathcal{A}_{t-1}^+,\\
			-(\theta_l \alpha)^tl_0, \quad \text{if } v_j\in \mathcal{A}_{t-1}^-,
		\end{cases}
	\end{align*}
	where $\mathcal{A}_{t} = \mathcal{A}_{t}^+\cup\mathcal{A}_{t}^-$ with $\mathcal{A}_{t}^+ = \{v_j:x_j(t) > 0\}$ and $\mathcal{A}_{t}^- = \{v_j:x_j(t) < 0\}$;
	\item $\bar{d} - 2(\left \lceil{\theta_l}\right \rceil - 1)$ more nodes that are closest to $\mathcal{A}_{t-1}$ will be activated if there is any, where
	\begin{align*}
		x_j(t) = 
		\begin{cases}
			(\theta_l \alpha)^tl_0, \quad \text{if } \exists v_{i_1}\in \mathcal{A}_{t-1}^+ \text{ or } v_{i_2}\in \mathcal{A}_{t-1}^- \text{ s.t. } A_{i_1 j} > 0 \text{ or } A_{i_2 j} < 0,\\
			-(\theta_l \alpha)^tl_0, \quad \text{if } \exists v_{i_1}\in \mathcal{A}_{t-1}^+ \text{ or } v_{i_2}\in \mathcal{A}_{t-1}^- \text{ s.t. } A_{i_1 j} < 0 \text{ or } A_{i_2 j} > 0. 
		\end{cases}
	\end{align*}
\end{enumerate}
Hence, each $x_j(t)$ has the same magnitude as the state value on the corresponding simple regular lattice. For the sign pattern, $\mathcal{A}^+_{t-1}\subset \mathcal{A}^+_{t}$ and $\mathcal{A}^-_{t-1}\subset \mathcal{A}^-_{t}$, $\forall t > 0$, i.e., the nodes, once activated, remain active and maintain the sign of their state values over time, which is similar to the evolution of $\mathbf{W}^t$ in the linear adjacency dynamics when the underlying signed network is balanced. 

\paragraph{Antibalanced regular lattices.} Corresponding to the analysis in balanced regular lattices, we consider the activations that are consistent with the antibalanced bipartition, where we positively activate the central node $v_i$, and for each of its neighbour $v_j$, we positively activate it with $x_j(0) = l_0$ if $W_{ij} < 0$ and negatively activate it otherwise; see Figure \ref{fig:lattice-anti} for different distributions of signed edges and the corresponding activations. 
\begin{figure}[ht]
	\centering
	\begin{tabular}{cc}
		\includegraphics[width=.3\textwidth]{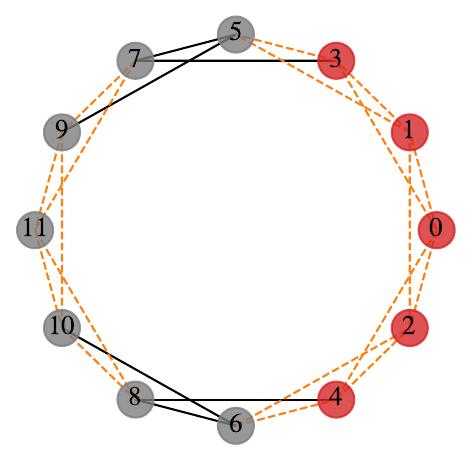} & \includegraphics[width=.3\textwidth]{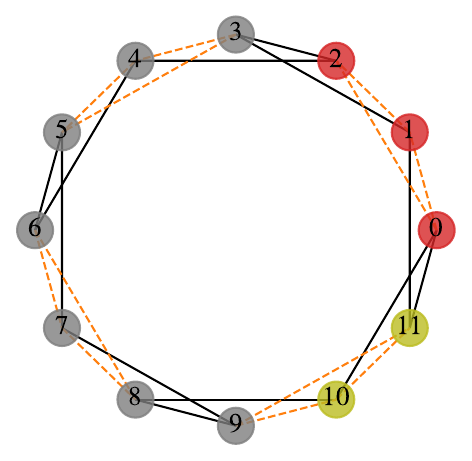}
	\end{tabular}
	\caption{Example of signed regular lattices of degree $4$ with different structurally antibalanced configurations, where positive edges are in black, negative edges are dashed in orange, and the whole neighbourhood of node $v_0$ are in different colour(s) from the others (in grey), with the ones that are positively activated in red and the others that are negatively activated in green.}
	\label{fig:lattice-anti}
\end{figure}

Again, we find that when condition \eqref{equ:lattice-upper} is true, there is certain propagation on the structurally antibalanced regular lattice, but an alternating sign pattern. Specifically, at each $t > 0$: 
\begin{enumerate}[label=(\roman*)]
	\item 
	\begin{align*}
		x_j(t) = 
		\begin{cases}
			-(\theta_l \alpha)^tl_0, \quad \text{if } v_j\in \mathcal{A}_{t-1}^+,\\
			(\theta_l \alpha)^tl_0, \quad \text{if } v_j\in \mathcal{A}_{t-1}^-,
		\end{cases}
	\end{align*}
	where $\mathcal{A}_{t} = \mathcal{A}_{t}^+\cup\mathcal{A}_{t}^-$ with $\mathcal{A}_{t}^+ = \{v_j:x_j(t) > 0\}$ and $\mathcal{A}_{t}^- = \{v_j:x_j(t) < 0\}$;
	\item $\bar{d} - 2(\left \lceil{\theta_l}\right \rceil - 1)$ more nodes that are closest to $\mathcal{A}_{t-1}$ will be activated if there is any, where
	\begin{align*}
		x_j(t) = 
		\begin{cases}
			(\theta_l \alpha)^tl_0, \quad \text{if } \exists v_{i_1}\in \mathcal{A}_{t-1}^+ \text{ or } v_{i_2}\in \mathcal{A}_{t-1}^- \text{ s.t. } A_{i_1 j} > 0 \text{ or } A_{i_2 j} < 0,\\
			-(\theta_l \alpha)^tl_0, \quad \text{if } \exists v_{i_1}\in \mathcal{A}_{t-1}^+ \text{ or } v_{i_2}\in \mathcal{A}_{t-1}^- \text{ s.t. } A_{i_1 j} < 0 \text{ or } A_{i_2 j} > 0. 
		\end{cases}
	\end{align*}
\end{enumerate}
Hence again, each $x_j(t)$ has the same magnitude as the state value on the corresponding simple regular lattice. However, for the sign pattern, $\mathcal{A}^+_{t-1}\subset \mathcal{A}^-_{t}$ and $\mathcal{A}^-_{t-1}\subset \mathcal{A}^+_{t}$, i.e., the nodes, once activated, remain active but alternate the sign of their state values in every time step, which is similar to the evolution of $\mathbf{W}^t$ in the linear adjacency dynamics when the underlying signed network is antibalanced. 

\paragraph{Strictly unbalanced regular lattices.} In all the remaining configurations, neither are they balanced where the nodes, once activated, remain active and maintain the sign of their state values over time, nor are they antibalanced where the nodes, once activated, remain active but alternate the sign of their state values in every time step. There could be conflicts in the sign of a node's neighbours' state values multiplying the edge weights, hence it is more likely for the sum to be less than the threshold, and for these nodes to have state value $0$ accordingly. Hence, the propagation in strictly unbalanced lattices generally terminates within less number of time steps. 

We can still find the same condition \eqref{equ:lattice-upper} on $\theta_l$ to trigger certain propagation on strictly unbalanced regular lattices, but generally not all nodes will have nonzero state values in the propagation process. The behavior over time could be more similar to balanced or antibalanced regular lattices, depending on how far it is from being balanced by \eqref{equ:signed_d-balanced} compared with that from being antibalanced by \eqref{equ:signed_d-anti-balanced}.

\paragraph{General signed networks.} In this section, we have analyzed the behavior of the ELT model on signed regular lattices, from the perspective of the balanced structure. For general signed networks, we can consider the performance of ELT model as follows. (i) We interpolate the signed network locally by signed regular lattices of different degrees, or signed trees where complex contagions (e.g., $\theta_l > 1$ on the signed networks with uniform magnitude of the edge weight) can hardly proceed. (ii) Then we can estimate the behavior of the ELT model on the whole network by interpolating that on the corresponding signed regular lattices.

\section{Numerical experiments}
\label{sec:numerical_exp}
In this section, we numerically explore the dynamics on signed networks, and verify the results we have shown for structurally balanced, antibalanced and strictly unbalanced networks. Specifically, we consider both linear models, the linear adjacency dynamics and signed random walks, and a nonlinear one, the ELT model, and illustrate their consistent patterns, on both synthetic networks generated from the signed stochastic block model, and also a real signed network of Highland tribes.

% \begin{itemize}
	%     \item The linear adjacency dynamics, including the signed random walks. 
	%     \item The extended linear threshold model. 
	% \end{itemize}

\begin{figure}[ht]
	\centering
	\begin{tabular}{cc}
		\includegraphics[width=.4\textwidth]{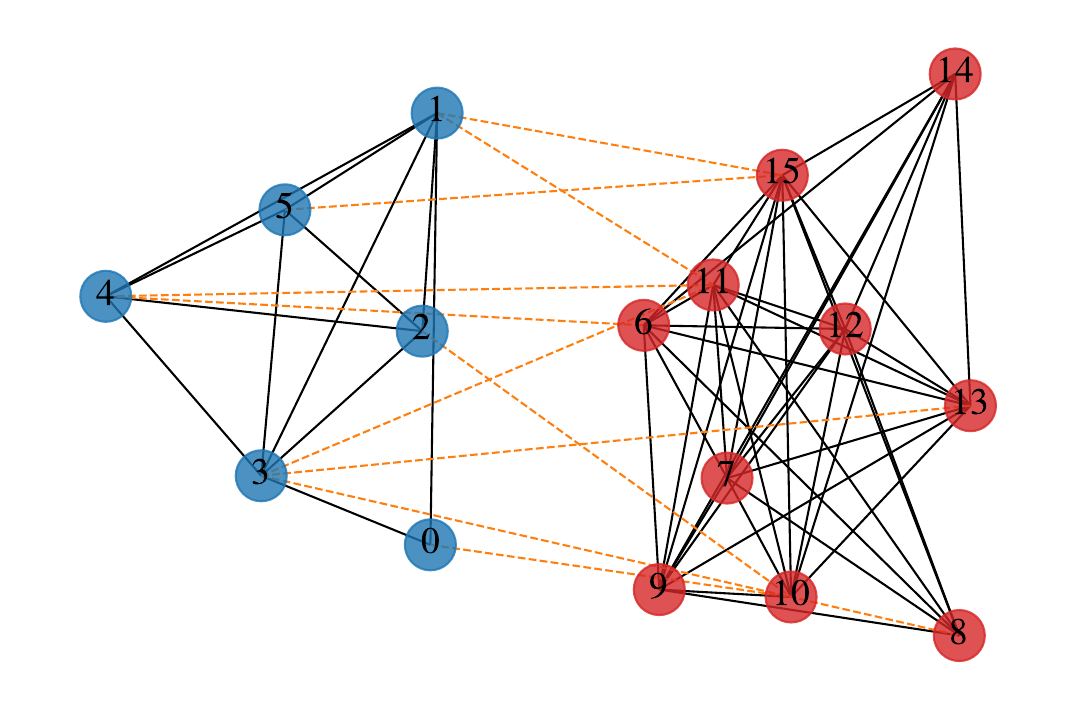} & \includegraphics[width=.4\textwidth]{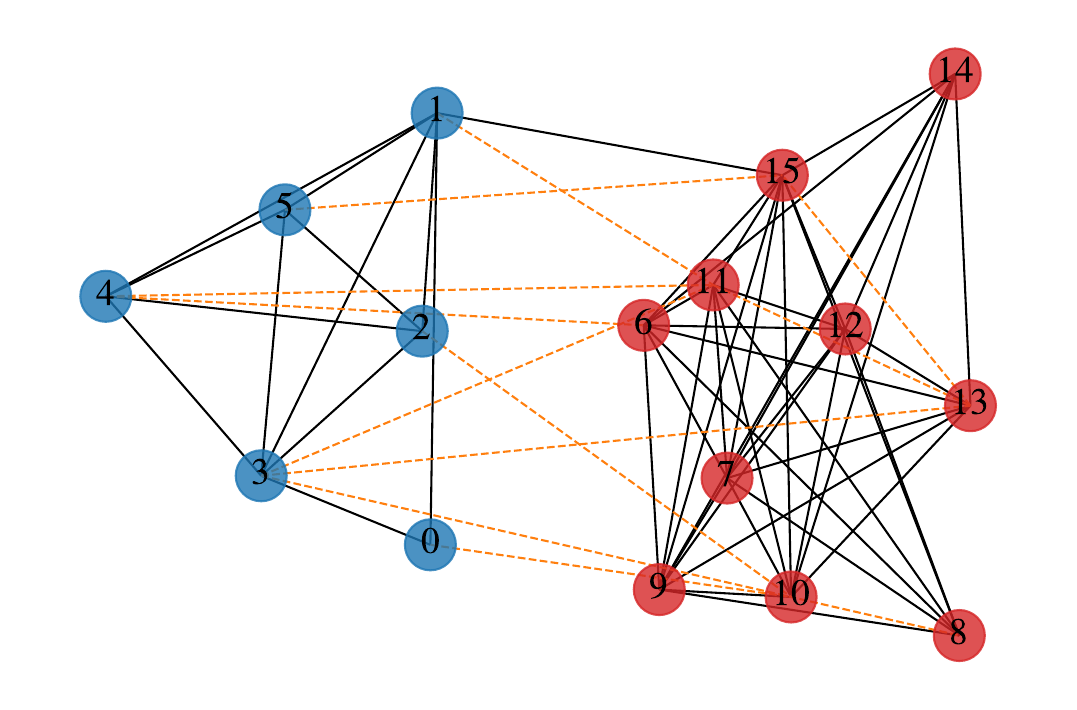} \\
		\includegraphics[width=.4\textwidth]{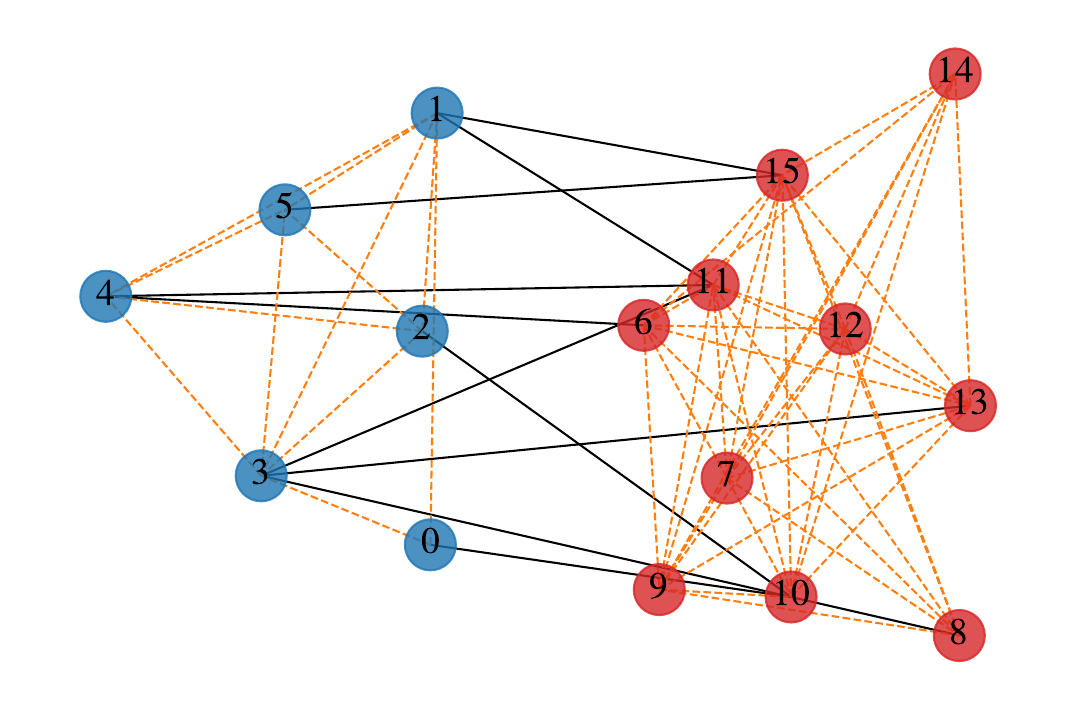} & \includegraphics[width=.4\textwidth]{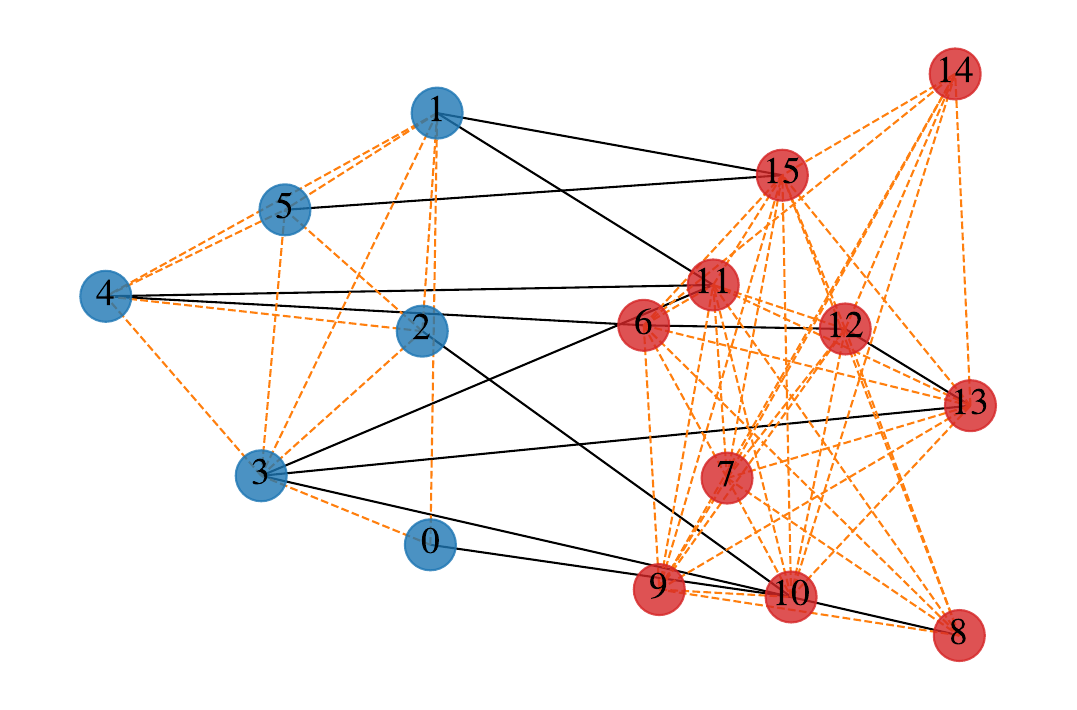}
	\end{tabular}
	\caption{Signed networks from SSBM that are balanced ($\eta = 0$, upper left), close to being balanced ($\eta = 0.05$, upper right) with $d_b(G) = 0.077,\, d_a(G) = 0.503$ and $3$ edges disturbing the balanced structure, antibalanced ($\eta=1$, bottom left), and close to being antibalanced ($\eta = 0.95$, bottom right) with $d_b(G) = 0.525,\, d_a(G) = 0.047$ and $2$ edges disturbing the antibalanced structure, where the node colour indicates the bipartitions of the relevant balanced or antibalanced structures, and the edge colour indicates the sign (black: positive; orange: negative).}
	\label{fig:exp-ssbms}
\end{figure}
%$\lambda_{\max}(\mathbf{P}) = 0.923, \lambda_{\min}(\mathbf{P}) = -0.497$
%$\lambda_{\max}(\mathbf{P}) = 0.475, \lambda_{\min}(\mathbf{P}) = -0.963$
\paragraph{Signed stochastic block model (SSBM).} We consider the signed stochastic block model (SSBM) which is slightly modified from the version in \cite{Cucuringu_SPONGE_2019}. The SSBM is constructed by the following components: (i) a planted SBM, $SBM(p_{in}, p_{out})$, where the probabilities of an edge to occur inside each community and between the two communities being $p_{in}$ and $p_{out}$, respectively; (ii) an initial balanced configuration, where edges inside each community being positive while those between the two communities are negative; (iii) a probability $\eta\in[0,1]$\footnote{\footnotesize{The original range is $[0,1/2)$ in \cite{Cucuringu_SPONGE_2019} to obtain perturbations from (at least) weakly balanced networks. Here we extend it to be the whole range because networks close to being antibalanced (in the two-block case) could also be interesting.}} to flip the edge sign. We denote this planted SSBM by $SSBM(p_{in}, p_{out}, \eta)$. We start from a network of size $n=16$, two blocks with $n_1 = 6$ in one and $n_2 = 10$ in the other, and $p_{in} = 0.8$, $p_{out} = 0.1$. The choice of network size is for visualisation purposes, where for networks of larger sizes, the results will be very similar if we maintain the network density. We consider the following cases of the flipping probability, $\eta = 0$ for a balanced signed network\footnote{\footnotesize We note that $\eta=0$ corresponds to balanced signed networks because there are two blocks, and also that the two blocks are the corresponding bipartition. The case for antibalanced signed networks is similar.}, $\eta = 0.05$ for a signed network close to being balanced, $\eta = 1$ for an antibalanced signed networks, and $\eta = 0.95$ for a network close to being antibalanced; see 
Figure \ref{fig:exp-ssbms}. In all above cases, we assign a uniform magnitude $\alpha=0.1$ to the edge weights.
% \begin{figure}[ht]
	%     \centering
	%     \includegraphics[width=.48\textwidth]{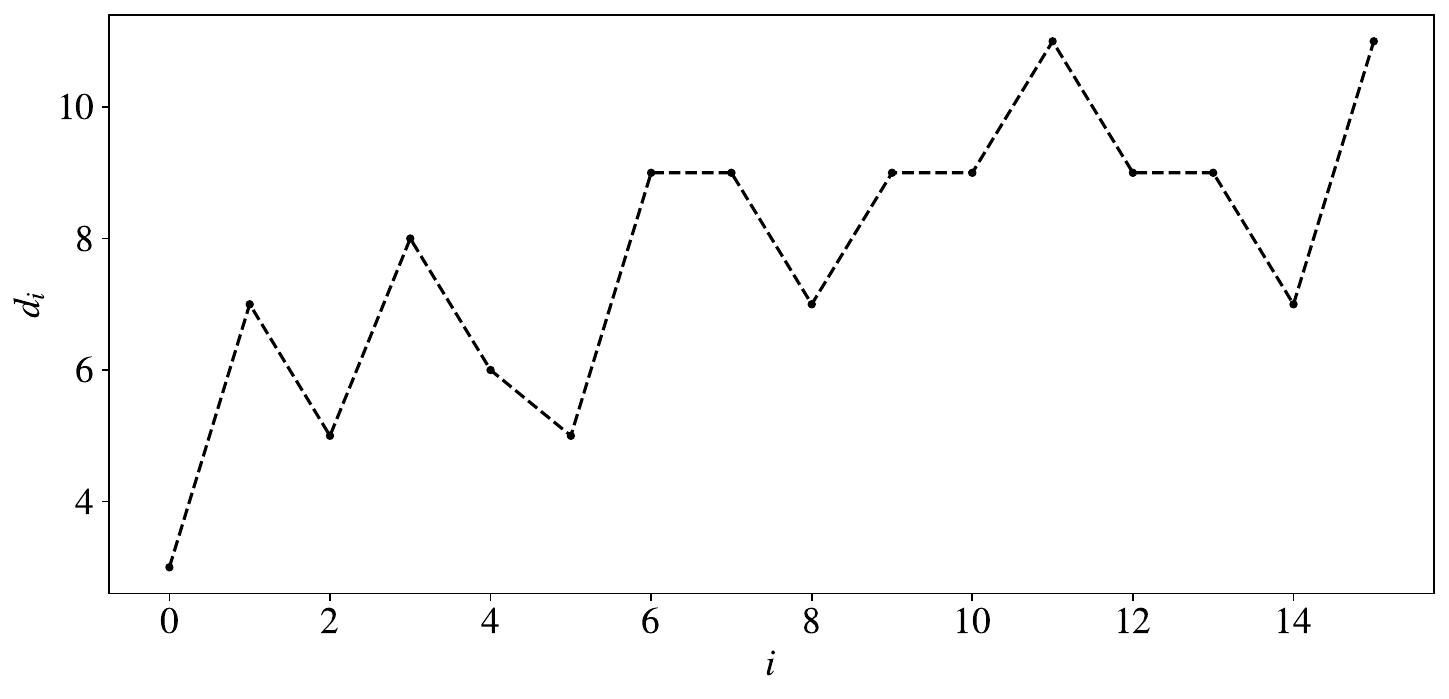}
	%     \caption{The degree of nodes in the signed networks.}
	%     \label{fig:ssbm-deg}
	% \end{figure}

For the linear adjacency dynamics, we observe that the state values are positive in one node subset of the bipartition and negative in the other in the balanced one, while alternate their signs in the antibalanced one; see Figure \ref{fig:ssbm-linear}. The evolution of state values of the signed network close to being balanced is very similar to the balanced one, while the other close to being antibalanced has very similar performance to the antibalanced one. Meanwhile, we can already observe in the first few steps that some nodes have state value of smaller magnitudes in the strictly unbalanced networks than either the balanced or the antibalanced ones, such as node $v_{15}$ in the upper right case of Figure \ref{fig:ssbm-linear}. See Figure \ref{fig:ssbm-linear-t} in Appendix \ref{sec:app_proofs} for further comparison to elucidate the differences in more time steps.
\begin{figure}[ht]
	\centering
	\hspace*{-2em}
	\begin{tabular}{cc}
		\includegraphics[width=.48\textwidth]{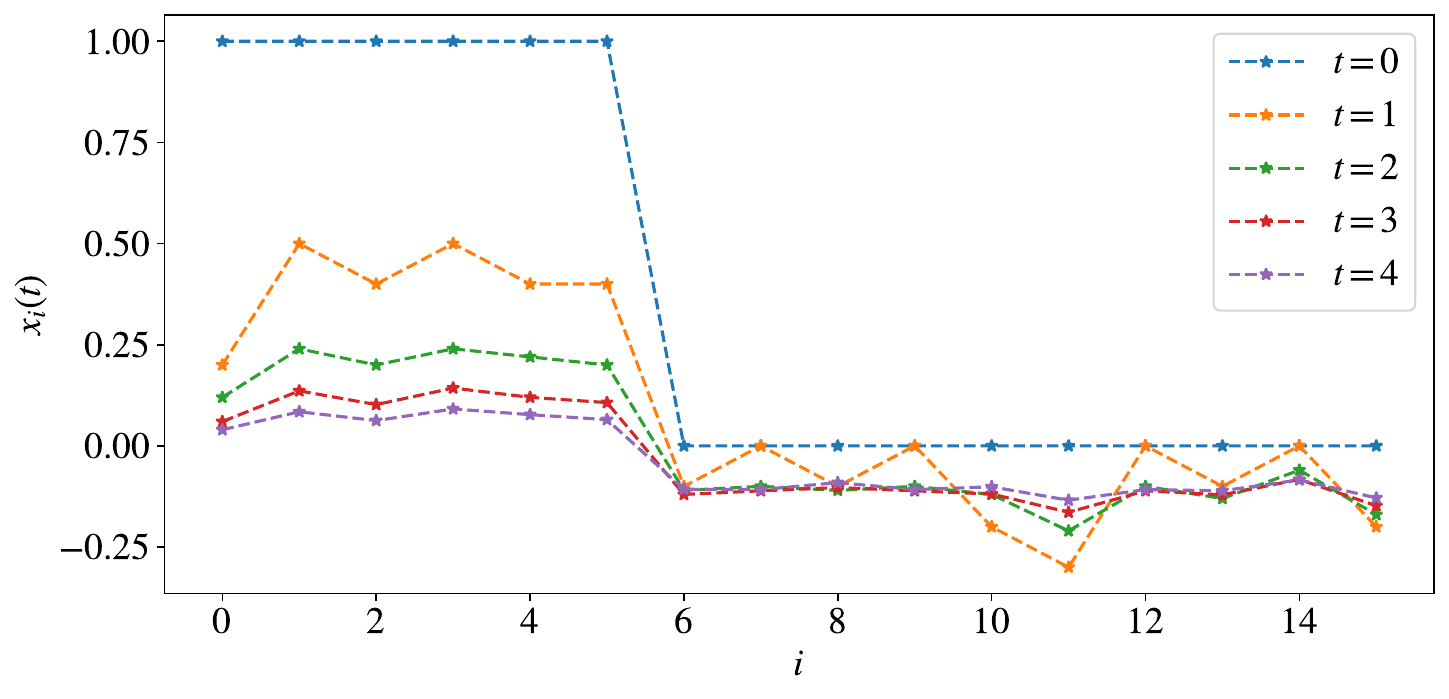} & \includegraphics[width=.48\textwidth]{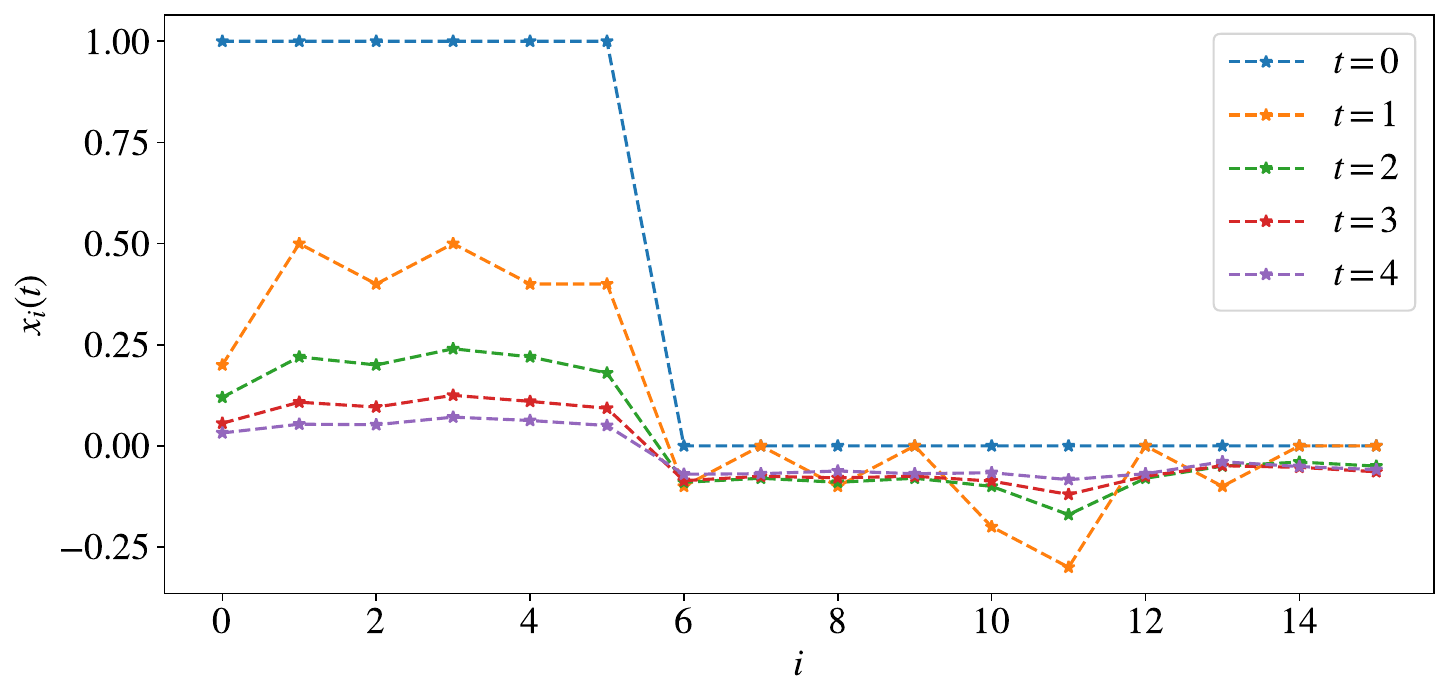} \\
		\includegraphics[width=.48\textwidth]{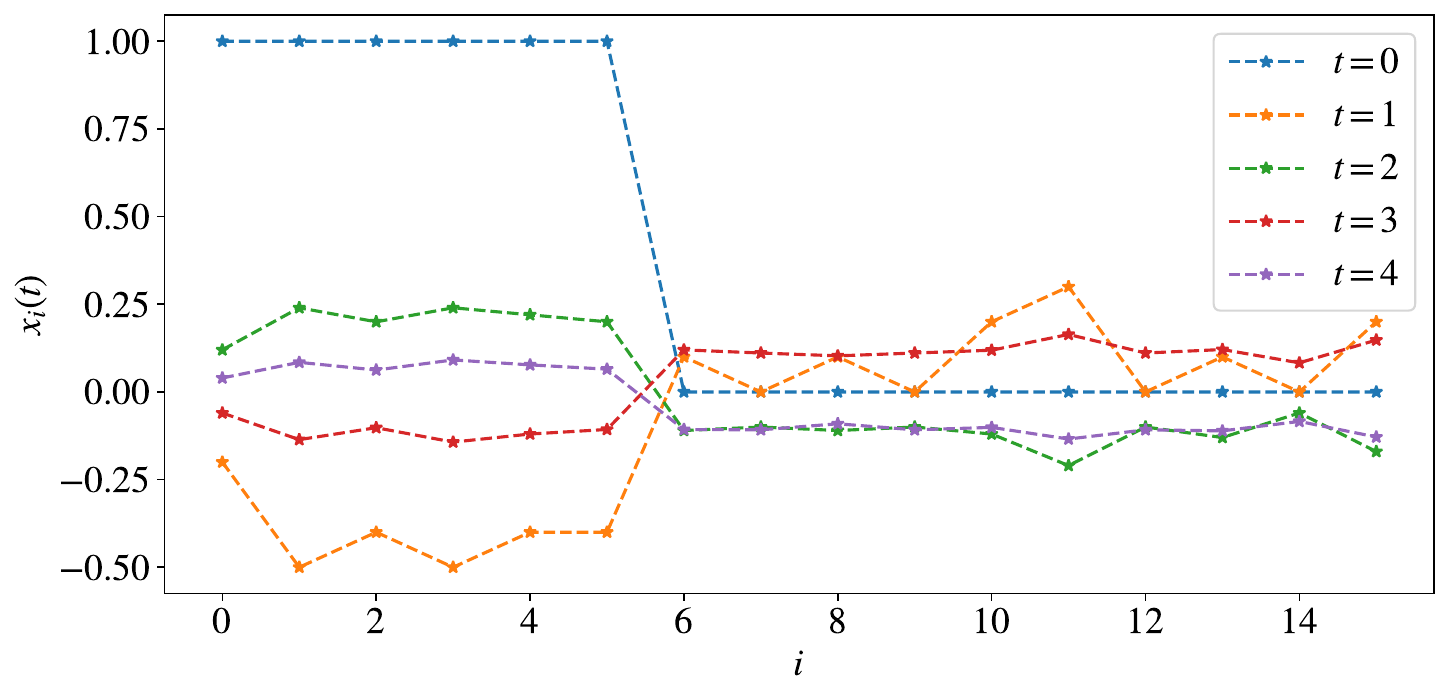} & \includegraphics[width=.48\textwidth]{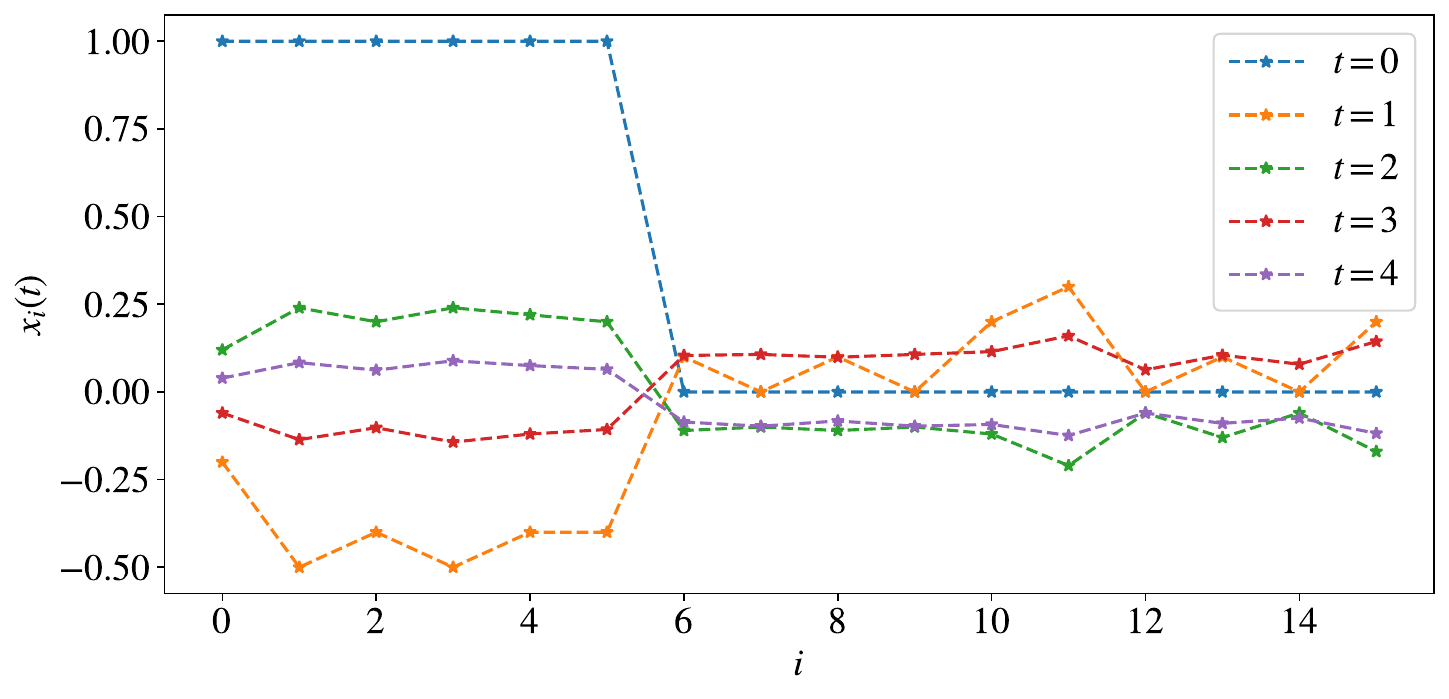}
	\end{tabular}
	\caption{Evolution of the state values from the linear adjacency dynamics on the networks in Figure \ref{fig:exp-ssbms} that are balanced ($\eta = 0$, upper left), close to being balanced ($\eta = 0.05$, upper right), antibalanced ($\eta=1$, bottom left), and close to being antibalanced ($\eta = 0.95$, bottom right), where, in the same row, the results on the right are expected to be close to those on the left, to some extent, since the underlying networks are close to each other by our proposed measures.}
	\label{fig:ssbm-linear}
\end{figure}

For the signed random walks, we observe similar behaviors as in the case of linear adjacency dynamics, where the state values are positive in one node subset of the bipartition while negative in the other (asymptotically) in the balanced network, while alternate their signs in the antibalanced one; see Figure \ref{fig:ssbm-rw}. However for random walks, the differences between being exactly balanced or close to balanced are clearer, since the state values in the latter will eventually converge to $0$ while the former will have nonzero state values asymptotically. The case of the antibalanced one and another that is close to being antibalanced is similar. Furthermore, we find the steady state values are proportional to the node degree in the balanced or antibalanced networks, which are consistent with Propositions \ref{pro:balance-steady} and \ref{pro:antibalance-steady}. 
\begin{figure}[ht]
	\centering
	\hspace*{-2em}
	\begin{tabular}{cc}
		\includegraphics[width=.48\textwidth]{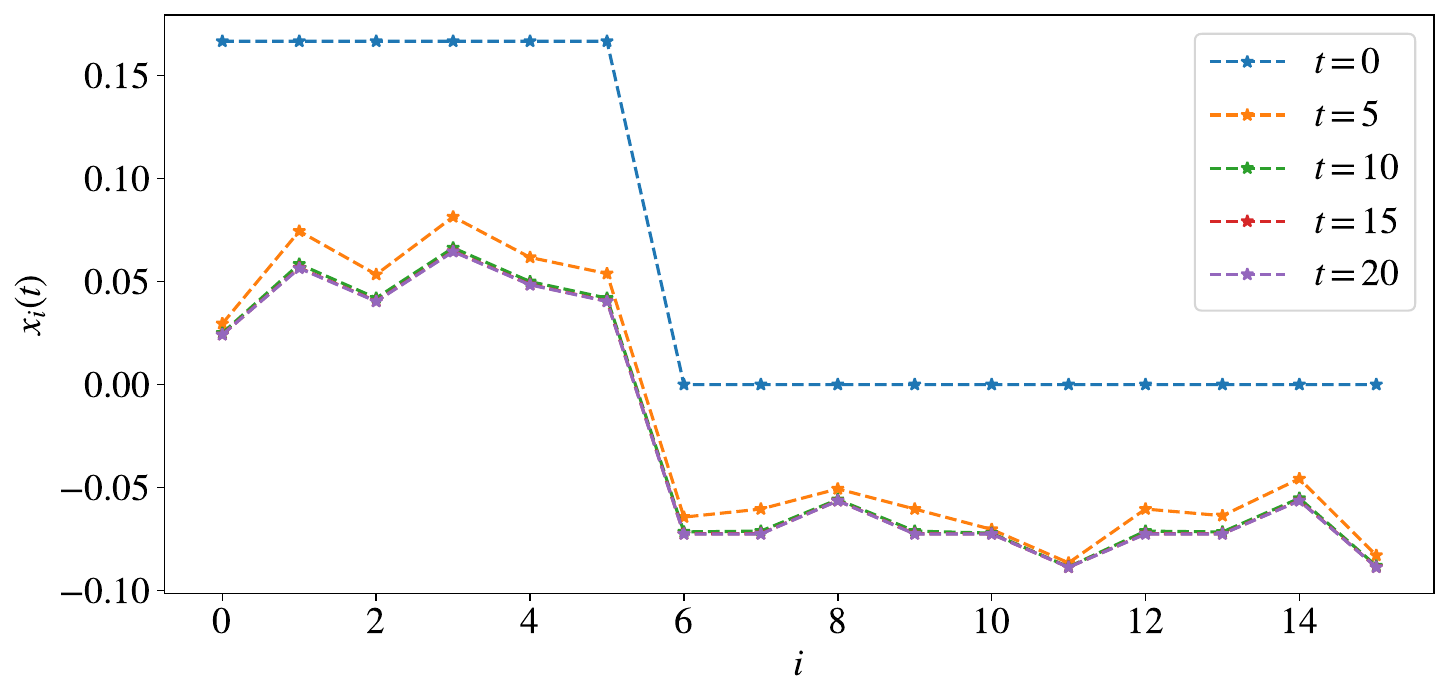} & \includegraphics[width=.48\textwidth]{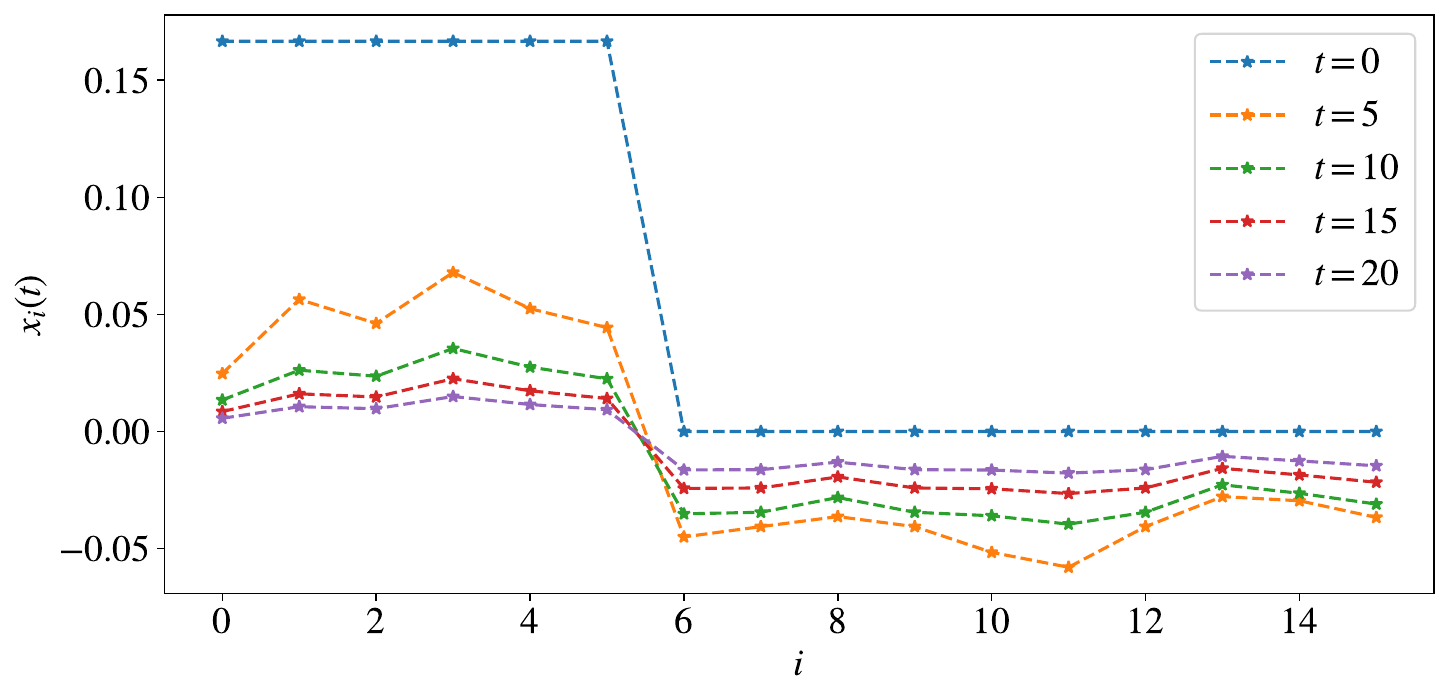} \\
		\includegraphics[width=.48\textwidth]{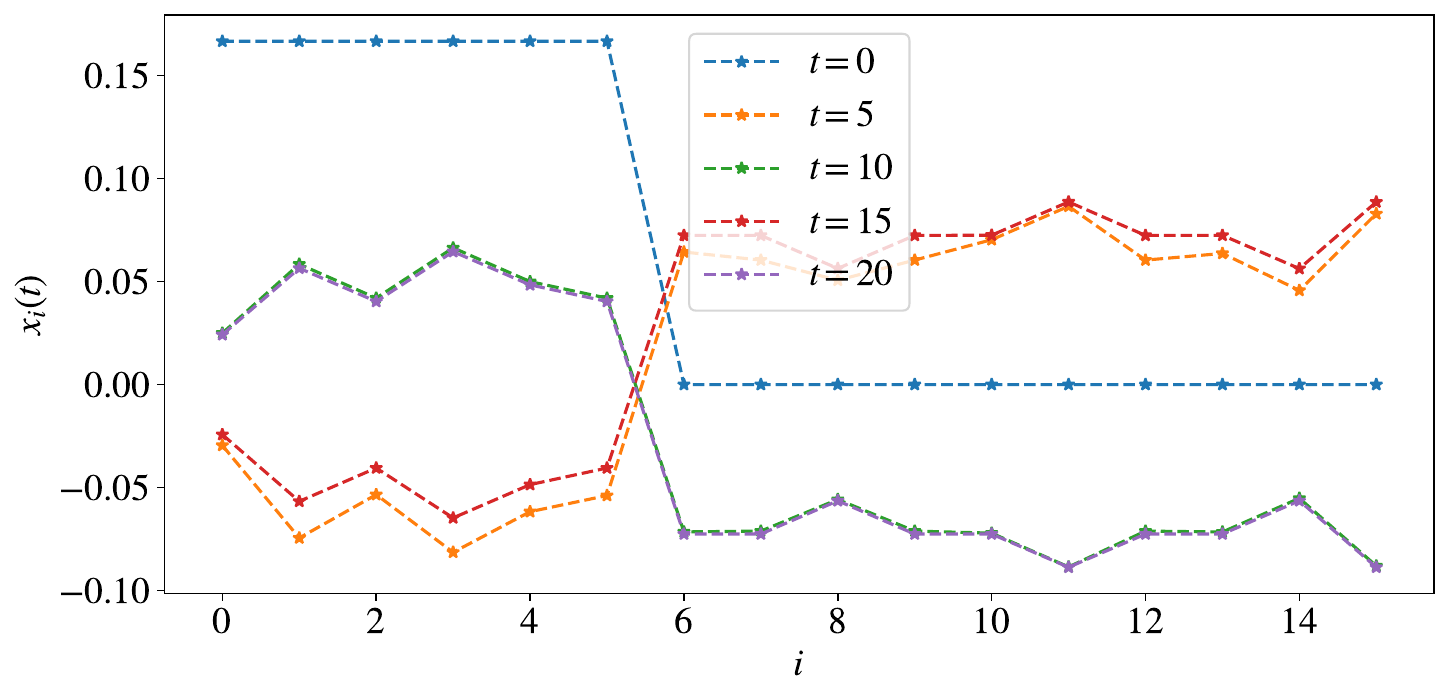} & \includegraphics[width=.48\textwidth]{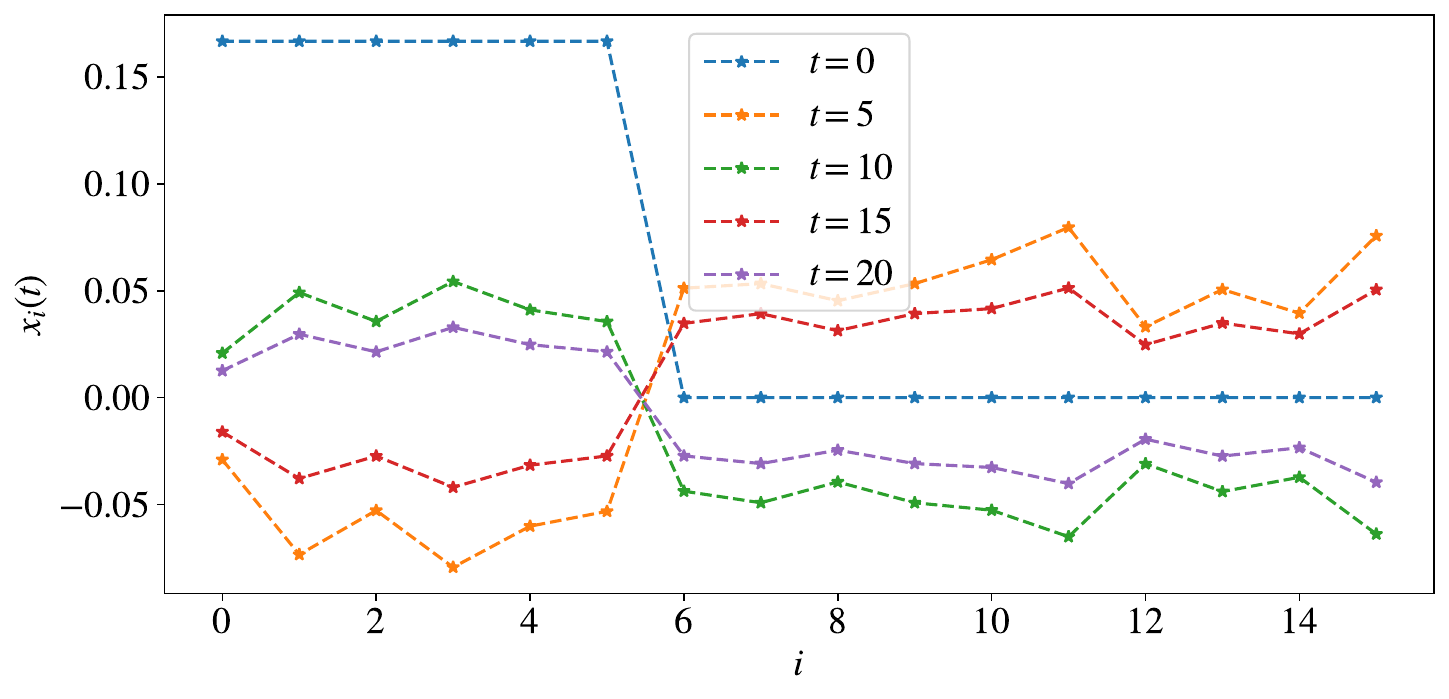}
	\end{tabular}
	\caption{Evolution of the state values from the signed random walks on the networks in Figure \ref{fig:exp-ssbms} that are balanced ($\eta = 0$, upper left), close to being balanced ($\eta = 0.05$, upper right), antibalanced ($\eta=1$, bottom left), and close to being antibalanced ($\eta = 0.95$, bottom right), where the difference between the results in the same row is expected to be larger in this case.}
	\label{fig:ssbm-rw}
\end{figure}

For the ELT model, even though being very different from the previous linear adjacency dynamics, we observe similar signed patterns, where the state values are positive in one node subset of the bipartition and negative in the other in the balanced network, while alternate their signs in the antibalanced one; see Figure \ref{fig:ssbm-LT}. The evolution of state values of the signed network close to being balanced (antibalanced) is, again, very similar to the balanced (antibalanced) one, while we can also observe that some nodes in the strictly unbalanced networks already have state values of smaller magnitudes in the first few steps; see again node $v_{15}$ in the upper right of Figure \ref{fig:ssbm-LT}.
\begin{figure}[ht]
	\centering
	\hspace*{-2em}
	\begin{tabular}{cc}
		\includegraphics[width=.48\textwidth]{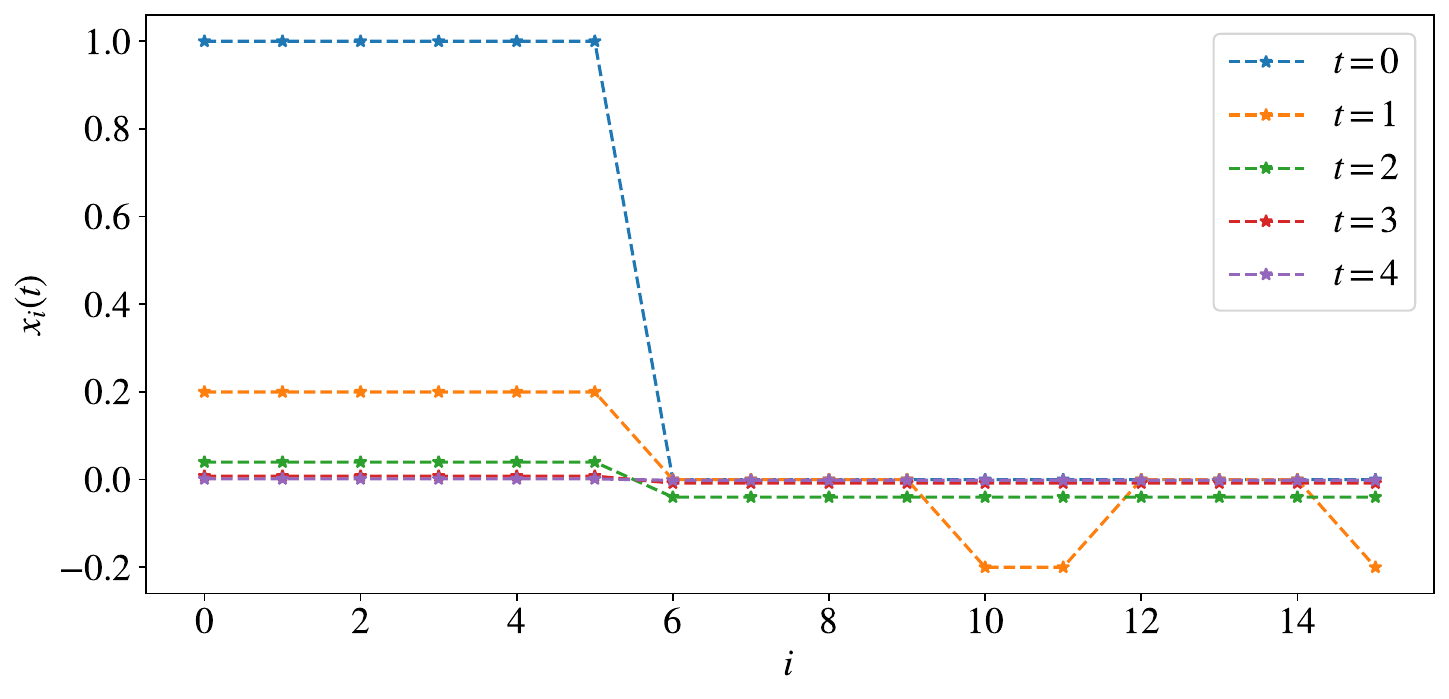} & \includegraphics[width=.48\textwidth]{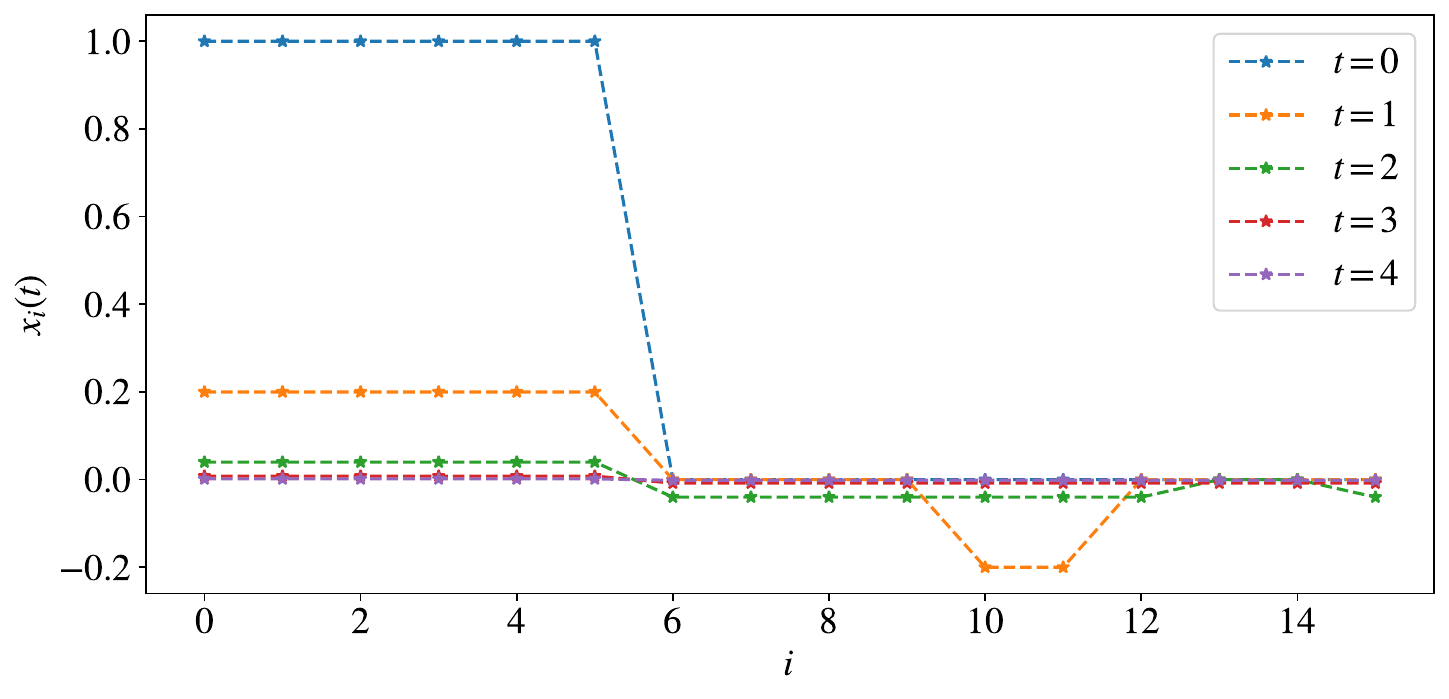} \\
		\includegraphics[width=.48\textwidth]{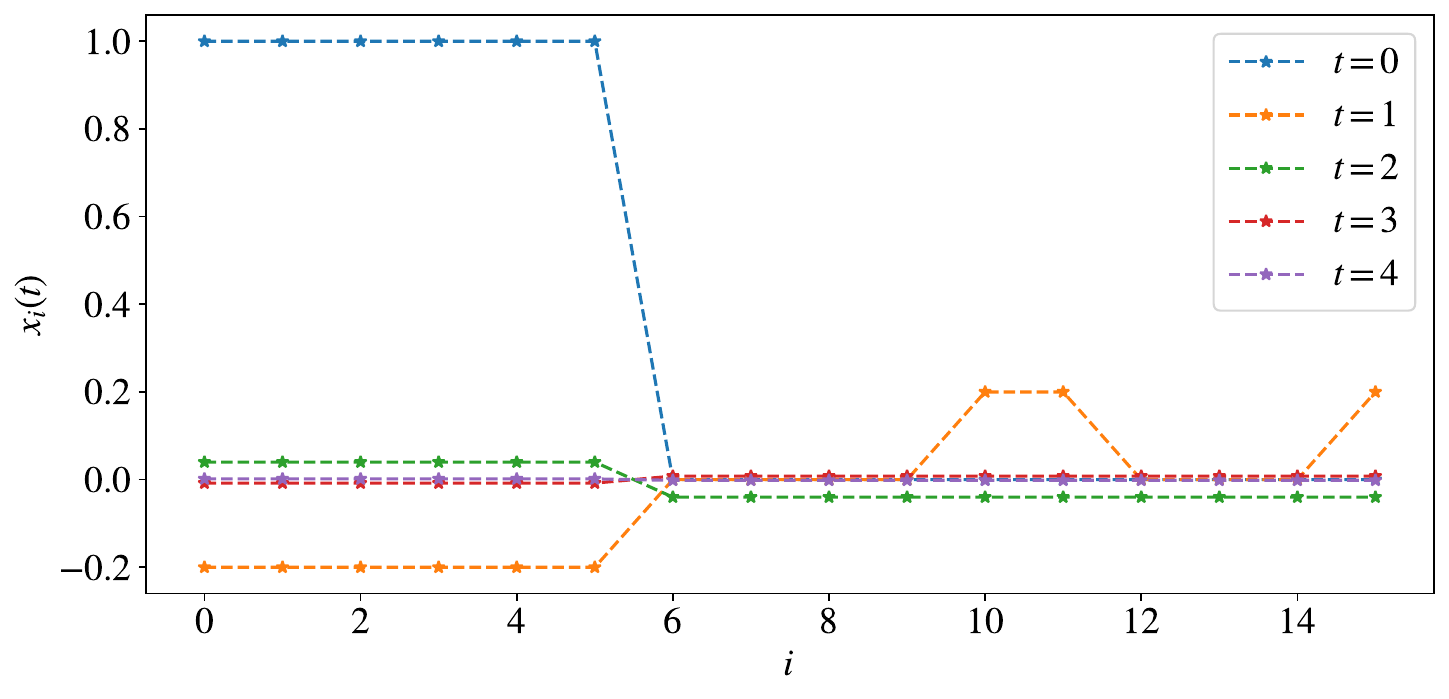} & \includegraphics[width=.48\textwidth]{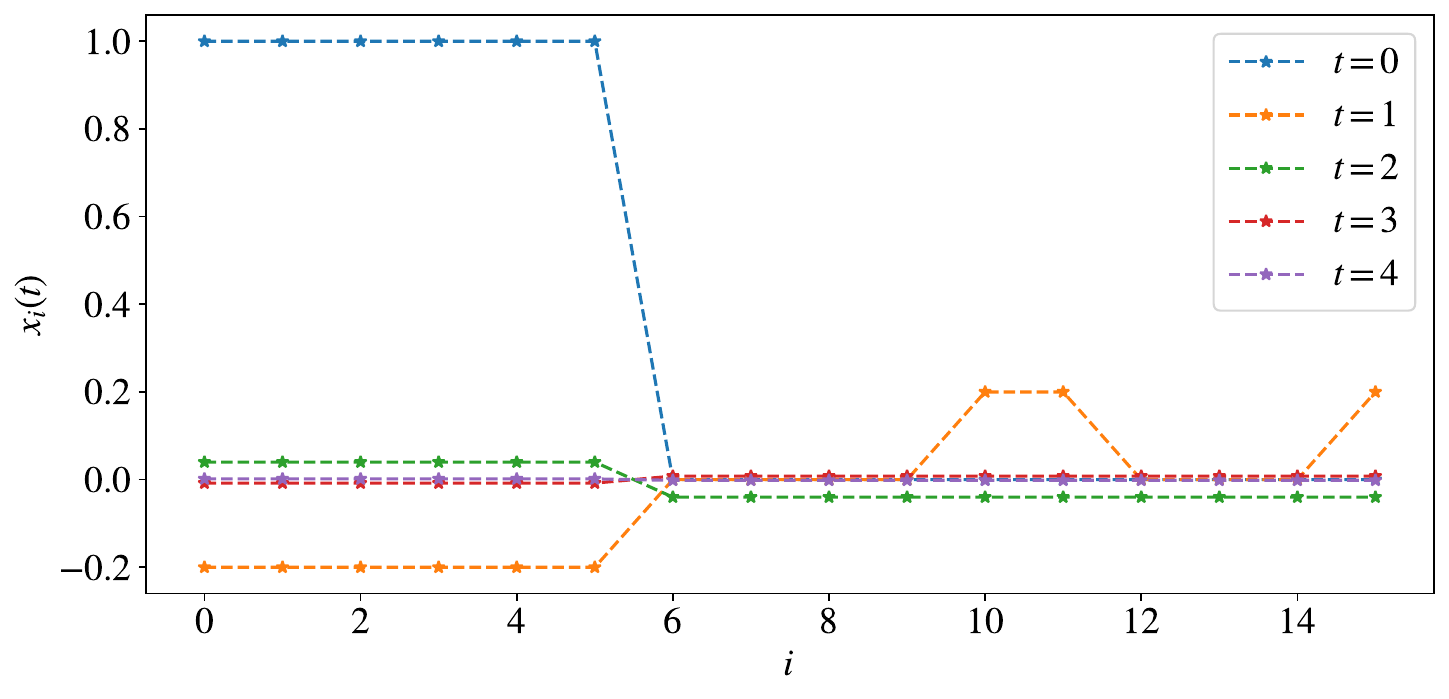}
	\end{tabular}
	\caption{Evolution of the state values from the ELT model on the networks in Figure \ref{fig:exp-ssbms} that are balanced ($\eta = 0$, upper left), close to being balanced ($\eta = 0.05$, upper right), antibalanced ($\eta=1$, bottom left), and close to being antibalanced ($\eta = 0.95$, bottom right).}
	\label{fig:ssbm-LT}
\end{figure}

\paragraph{Real signed networks.} We now consider a classic example of real signed networks, the Highland tribes network. The network contains $n=16$ tribes as nodes, and a positive edge indicates friendship while a negative edge indicates enmity. Hage and Harary \cite{Hage_model_1983} used the Gahuku-Gama system of the Eastern Central Highlands of New Guinea, described by Read \cite{Read_NewG_1954}, to illustrate a clusterable signed network. This network is known to be weakly balanced, where the enemy of an enemy can be either a friend or an enemy, thus is neither balanced nor antibalanced. We find $d_b(G) = 0.155$ while $d_a(G) = 0.529$, thus the network is relatively closer to being balanced; see Figure \ref{fig:exp-tribes}. Here, we assign a uniform magnitude $\alpha=0.1$ to the edge weights.
\begin{figure}[ht]
	\centering
	\includegraphics[width=.48\textwidth]{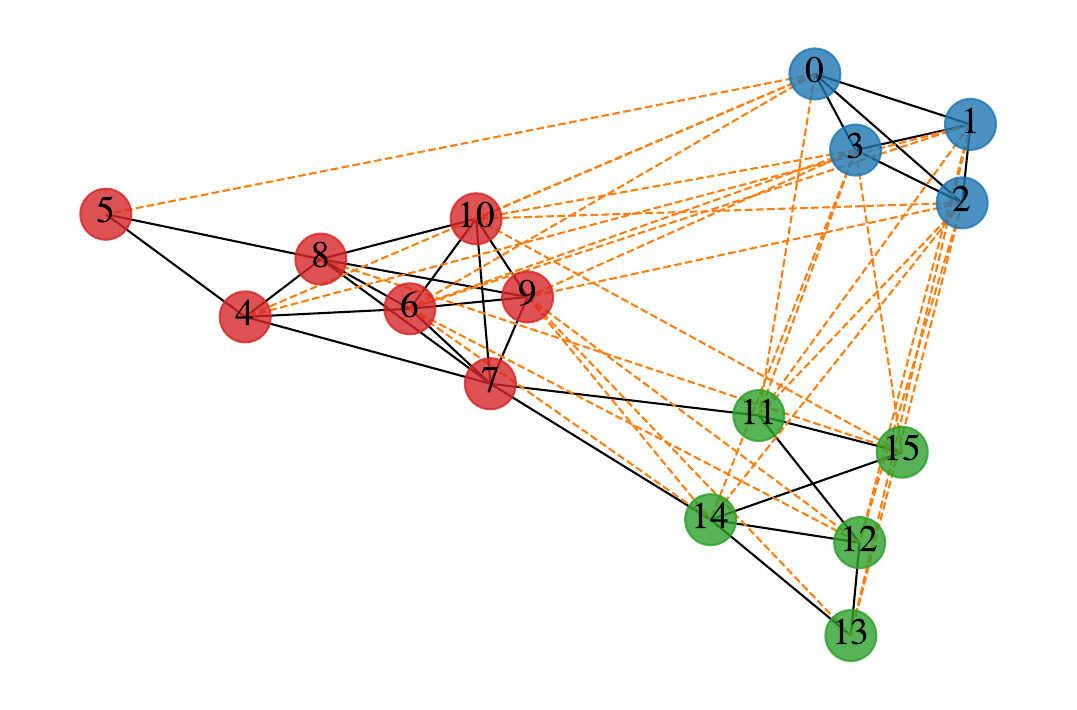}
	\caption{Highland tribes network with $d_b(G) = 0.155$ and $d_a(G) = 0.529$, where the node colour indicates the partition of the network with only positive edges inside, since it is closer to being balanced than antibalanced, and the edge colour indicates the sign (black: positive; orange: negative).}
	\label{fig:exp-tribes}
\end{figure}
%$\lambda_{\max}(\mathbf{P}) = 0.845, \lambda_{\min}(\mathbf{P}) = -0.471$

We then explore the dynamics, both the linear one from random walks and the nonlinear one from the ELT model, when activating the nodes in the red part (i.e., node $\{v_4,\dots, v_{10}\}$) in Figure \ref{fig:exp-tribes}. We find consistent signed patterns in these two dynamics, where nodes in the red part remain positive states and nodes in the blue part remain negative states. While for nodes in the green part, there is more variance, since there are both positive and negative edges connecting them and the red part, where node $v_{11}$ has positive states, nodes $v_{12},\, v_{15}$ have negative states, while for the remaining nodes, their state values quickly approach $0$; see Figure \ref{fig:tribe-rwlt}. We still observe that the state values in random walks reach zero at stationary. 
\begin{figure}[ht]
	\centering
	\hspace*{-2em}
	\begin{tabular}{cc}
		\includegraphics[width=.48\textwidth]{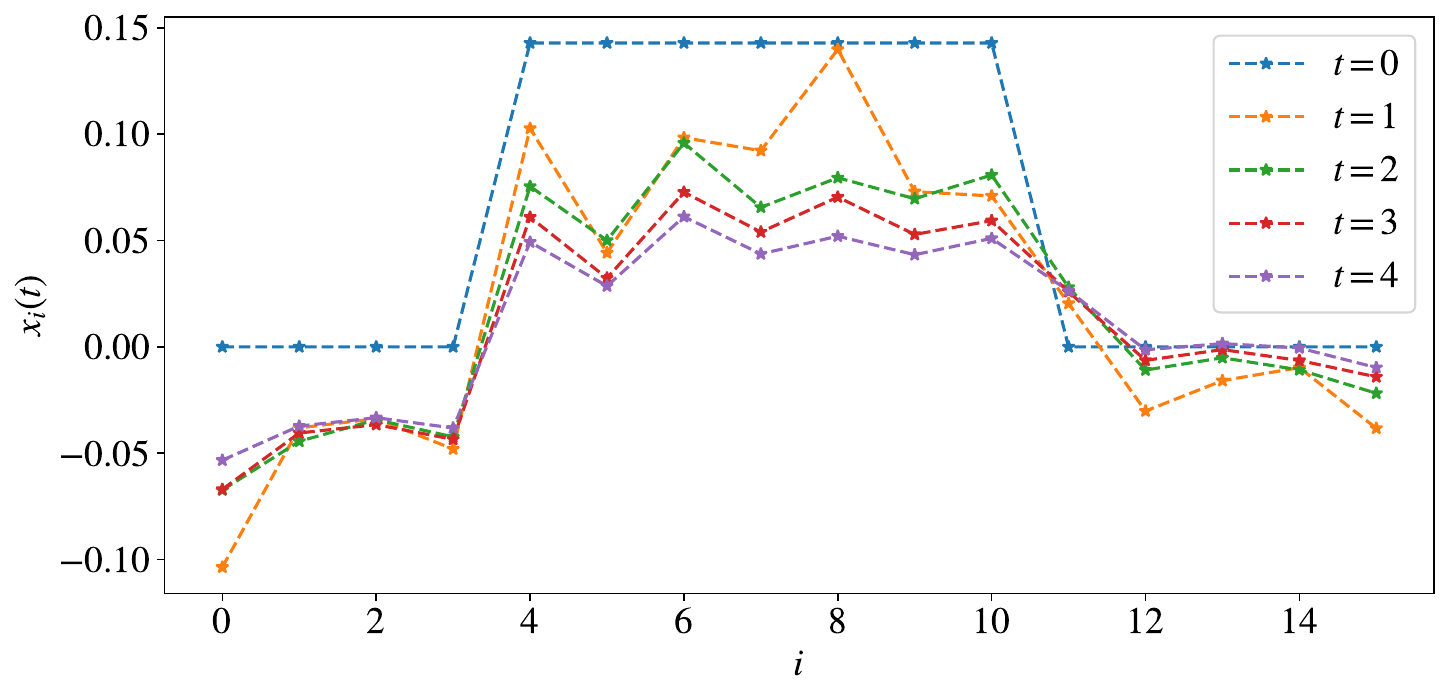} & \includegraphics[width=.48\textwidth]{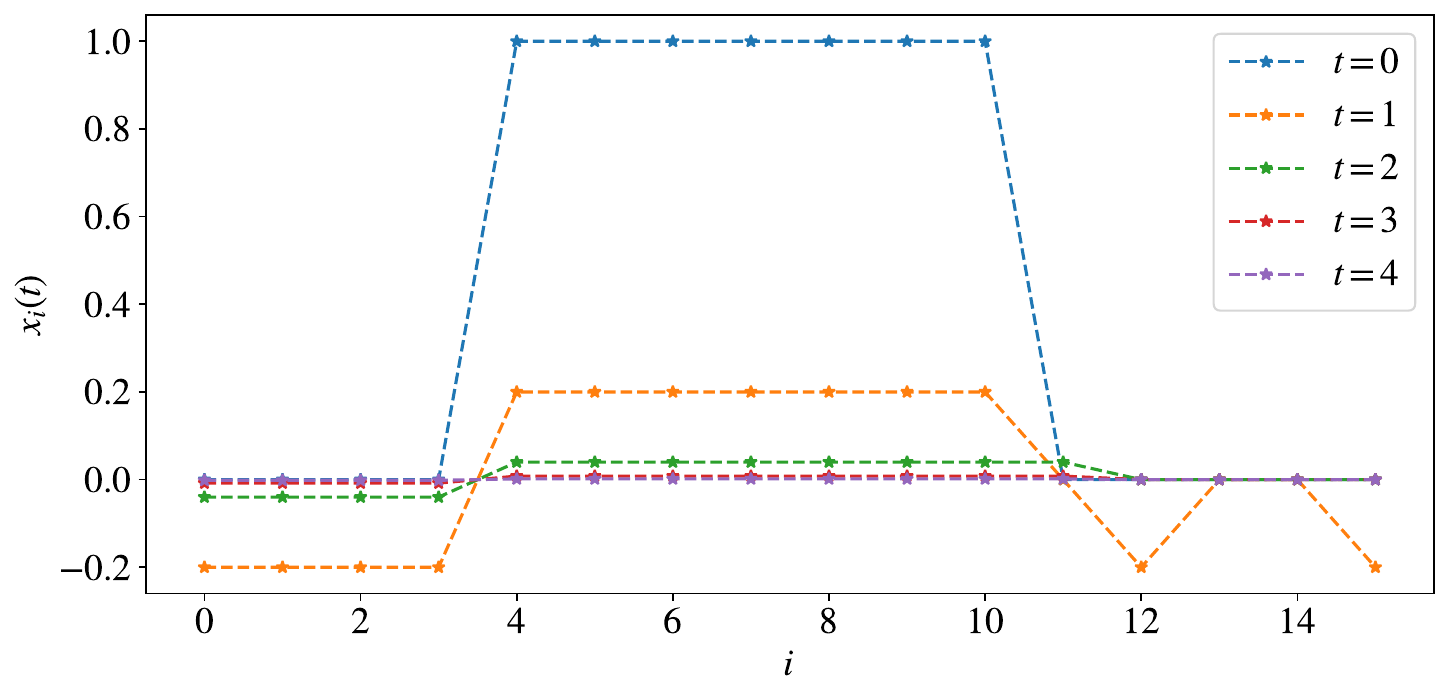} 
	\end{tabular}
	\caption{Evolution of the state values from the signed random walks (left) and the ELT model (right) on the Highland tribes network.}
	\label{fig:tribe-rwlt}
\end{figure}

\section{Conclusions}
\label{sec:conclusions}
% \begin{itemize}
	%     \item Signed networks are interesting. 
	%     \item Structural balance can help us understand the spreading on signed networks.
	%     \item Random walks on signed networks are interesting, and could be used for designing partition algorithms. 
	% \end{itemize}
Signed networks have gained increasing popularity over recent years, due to their extensive applications in various domains. Most of existing results  are obtained for unweighted signed networks, and the dynamics are usually characterized exclusively on structurally balanced networks, while signed networks can be weighted and unbalanced. In this paper, our focus is on the whole range of signed networks. We first classify  signed networks, where one type corresponds to structurally balanced ones on which most literature focused, while there are also two more types to further specify the unbalanced networks, i.e., structurally antibalanced and strictly unbalanced ones. Then to characterize each type, we consider the spectral properties of the weighted adjacency matrix. In particular, we show that the spectral radius contracts with the presence of signs if and only if the signed network is strictly unbalanced. Based on this classification  and  on  further structural analysis, we show that linear models, such as the linear adjacency dynamics, and nonlinear models, such as the ELT model, can have similar patterns over time. Specifically, subject to appropriate initialization, for balanced networks, the state values maintain the same sign over time, while for antibalanced networks, the state values alternate the sign in every step, and for strictly unbalanced networks, the state values can have smaller magnitude than those obtained on the network ignoring the sign. As an example, we show that signed random works have a zero vector as the steady state if and only if the signed network is strictly unbalanced. The numerical results from the synthetic networks generated from SSBM, and the real Highland tribes network further confirm our findings. 

In this work, we have provided a full description of the eigenpairs of the weighted adjacency matrix when the signed networks are balanced or antibalanced, while for the strictly unbalanced networks, we can only characterize the spectral radius. We note that structurally balanced and antibalanced networks only consist of two switching equivalence classes of signed networks \cite{Atay_signedCheeger_2020}, and there are potentially more signed networks that lie in the broad category of strictly unbalanced networks. As a next step, it would be interesting to further characterize the signed networks that are strictly unbalanced in terms of switching equivalence classes, and then provide finer characteristics of their spectrum and accordingly the dynamics. 

\appendix
\section{Signed bipartite graphs}
\label{sec:app_spect-bi}
We now show more results regarding signed bipartite graphs. We know that every cycles in bipartite graphs have even length, and that if $\lambda$ is an eigenvalue of the (weighted) adjacency matrix, so is $-\lambda$. We show in Proposition \ref{pro:transition-spect-rho-bi} further characteristics of the spectral properties in the signed case. 
\begin{proposition}
	Suppose $\bar{G}$ is bipartite, or has period $2$, and we maintain the same notations of eigenvalues, eigenvectors and spectral radius as in Proposition \ref{pro:transition-spect-rho}. Let $V_{p1}, V_{p2}$ denote the corresponding bipartition for the bipartite structure, where $V_{p1}\cup V_{p2} = V,\, V_{p1}\cap V_{p2} = \emptyset$ and $E\subset V_{p1}\times V_{p2}$. If $G$ is balanced with the bipartition $V_{b1}, V_{b2}$, 
	\begin{enumerate}
		\item $G$ is also antibalanced with the bipartition $V_{a1}, V_{a2}$, where $V_{a1} = (V_{p1}\cap V_{b1})\cup(V_{p2}\cap V_{b2})$ and $V_{a2} = (V_{p1}\cap V_{b2})\cup(V_{p2}\cap V_{b1})$;
		\item $\lambda_1 = \rho(\mathbf{W}) > 0$, $\lambda_n = -\rho(\mathbf{W}) < 0$, and each eigenvalue is simple;
		\item $\mathbf{u}_1$ is the only eigenvector that has positive values in one node subset in the bipartition for the balanced structure (e.g., $V_{b1}$) and negative values in the other (e.g., $V_{b2}$), while $\mathbf{u}_n$ is the only eigenvector that has positive values in one node subset in the bipartition for the antibalanced structure (e.g., $V_{a1}$) and negative values in the other (e.g., $V_{a2}$).
	\end{enumerate}
	\label{pro:transition-spect-rho-bi}
\end{proposition}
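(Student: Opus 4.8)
The plan is to transport everything through Theorem~\ref{the:transition-spect} together with the Perron--Frobenius theorem, exploiting the extra symmetry that a connected bipartite graph carries. Throughout I would write $\mathbf{I}_b,\mathbf{I}_p,\mathbf{I}_a$ for the $\pm1$ diagonal signature matrices attached to the bipartitions $\{V_{b1},V_{b2}\}$, $\{V_{p1},V_{p2}\}$, $\{V_{a1},V_{a2}\}$ (playing the role of $\mathbf{I}_1$ in Theorem~\ref{the:transition-spect}). Two identities set everything up: bipartiteness of $\bar G$ forces $\mathbf{I}_p\bar{\mathbf{W}}\mathbf{I}_p=-\bar{\mathbf{W}}$, since every edge joins the two parts; and balance of $G$ gives $\mathbf{W}=\mathbf{I}_b\bar{\mathbf{W}}\mathbf{I}_b$ exactly as in the proof of Theorem~\ref{the:transition-spect}. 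I would also record the purely diagonal identity $\mathbf{I}_b\mathbf{I}_p=\mathbf{I}_a$: the $(i,i)$ entry of $\mathbf{I}_b\mathbf{I}_p$ is $+1$ precisely on $(V_{p1}\cap V_{b1})\cup(V_{p2}\cap V_{b2})=V_{a1}$.

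For part~1 I would simply combine these: $\mathbf{W}=\mathbf{I}_b\bar{\mathbf{W}}\mathbf{I}_b=-\mathbf{I}_b\mathbf{I}_p\bar{\mathbf{W}}\mathbf{I}_p\mathbf{I}_b=-\mathbf{I}_a\bar{\mathbf{W}}\mathbf{I}_a$, which is exactly the matrix characterization (again from the proof of Theorem~\ref{the:transition-spect}) of ``$G$ is antibalanced with bipartition $\{V_{a1},V_{a2}\}$''. One could instead argue combinatorially---an edge of $\bar G$ runs from $V_{p1}$ to $V_{p2}$, and balance determines its sign from the $V_b$-classes of its endpoints, so a short four-case check shows an edge is negative precisely when its endpoints lie in the same $V_a$-class---but the signature-matrix route is cleaner; consistency with Proposition~\ref{pro:balanced-anti-bipart} is automatic since $\bar G$ is bipartite.

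For part~2, connectedness makes $\bar{\mathbf{W}}\ge 0$ irreducible, so Perron--Frobenius gives $\bar\lambda_1=\rho(\bar{\mathbf{W}})$ simple with an entrywise-positive eigenvector $\bar{\mathbf{u}}_1$; the similarity $-\bar{\mathbf{W}}=\mathbf{I}_p\bar{\mathbf{W}}\mathbf{I}_p$ makes the spectrum symmetric about $0$, so $-\rho(\bar{\mathbf{W}})=\bar\lambda_n$ is also simple, with eigenvector $\bar{\mathbf{u}}_n=\mathbf{I}_p\bar{\mathbf{u}}_1$ (indeed $\bar{\mathbf{W}}(\mathbf{I}_p\bar{\mathbf{u}}_1)=\mathbf{I}_p(\mathbf{I}_p\bar{\mathbf{W}}\mathbf{I}_p)\bar{\mathbf{u}}_1=-\bar\lambda_1\mathbf{I}_p\bar{\mathbf{u}}_1$). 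Balance and Theorem~\ref{the:transition-spect} then transfer the whole spectrum to $\mathbf{W}$, giving $\lambda_1=\rho(\mathbf{W})>0$ and $\lambda_n=-\rho(\mathbf{W})<0$, each simple. For part~3 I would use $\mathbf{U}=\mathbf{I}_b\bar{\mathbf{U}}$: then $\mathbf{u}_1=\mathbf{I}_b\bar{\mathbf{u}}_1$ is positive on $V_{b1}$ and negative on $V_{b2}$, and it is the only eigenvector with that pattern because $\mathbf{u}_k$ has it iff $\mathbf{I}_b\mathbf{u}_k=\bar{\mathbf{u}}_k$ is of one sign, which by Perron--Frobenius forces $k=1$; symmetrically $\mathbf{u}_n=\mathbf{I}_b\bar{\mathbf{u}}_n=\mathbf{I}_b\mathbf{I}_p\bar{\mathbf{u}}_1=\mathbf{I}_a\bar{\mathbf{u}}_1$ is positive on $V_{a1}$ and negative on $V_{a2}$, and $\mathbf{u}_k$ has that pattern iff $\mathbf{I}_a\mathbf{u}_k=\mathbf{I}_p\bar{\mathbf{u}}_k$ is of one sign, which (since $\bar{\mathbf{W}}(\mathbf{I}_p\bar{\mathbf{u}}_k)=-\bar\lambda_k\,\mathbf{I}_p\bar{\mathbf{u}}_k$) forces $-\bar\lambda_k=\rho(\bar{\mathbf{W}})$, i.e.\ $k=n$.

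I do not expect a genuine obstacle here: the statement is essentially an assembly of Theorem~\ref{the:transition-spect}, Perron--Frobenius, and the bipartite symmetry $\mathbf{I}_p\bar{\mathbf{W}}\mathbf{I}_p=-\bar{\mathbf{W}}$. The one place that could get messy is simultaneously pinning down the correct antibalanced bipartition and the simplicity of $\lambda_n$; the identity $\mathbf{I}_b\mathbf{I}_p=\mathbf{I}_a$ and the similarity via $\mathbf{I}_p$ neutralize both. It is worth noting that bipartiteness of $\bar G$ plays here the role that the non-bipartite/aperiodic hypothesis played in Proposition~\ref{pro:transition-spect-rho}, which is why we gain a second Perron-type eigenvalue $\lambda_n$ but lose the claim that $\rho(\mathbf{W})$ is the unique eigenvalue of maximal modulus.
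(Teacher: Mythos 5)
Your proof is correct and follows essentially the same route as the paper's: establish the antibalanced bipartition $V_{a1},V_{a2}$, then apply Theorem~\ref{the:transition-spect} to both the balanced and the antibalanced structures together with Perron--Frobenius for the irreducible, period-$2$ matrix $\bar{\mathbf{W}}$. Your signature-matrix identity $\mathbf{I}_b\mathbf{I}_p=\mathbf{I}_a$ is a tidier packaging of the paper's four-case edge check in part~1, and deriving the $\pm\rho(\bar{\mathbf{W}})$ symmetry from the similarity $\mathbf{I}_p\bar{\mathbf{W}}\mathbf{I}_p=-\bar{\mathbf{W}}$ rather than citing the period-$2$ form of Perron--Frobenius is marginally more self-contained, but these are cosmetic differences.
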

\begin{proof}
	1. Since $G$ is balanced with the bipartition $V_{b1}, V_{b2}$, then if $W_{ij} > 0$, we have $v_i,v_j\in V_{bl},\, l\in\{1,2\}$, while if $W_{ij} < 0$, we have $v_i\in V_{b1},\, v_j\in V_{b2}$ or $v_i\in V_{b2},\, v_j\in V_{b1}$. Since $G$ is bipartite with the bipartition $V_{p1}, V_{p2}$, then for each edge $(v_i,v_j)$ with $v_i,v_j\in V_{a1} = (V_{p1}\cap V_{b1})\cup(V_{p2}\cap V_{b2})$,
	\begin{align*}
		(v_i,v_j) \in V_{a1}\times V_{a1} \Leftrightarrow (v_i,v_j) \in (V_{p1}\cap V_{b1})\times(V_{p2}\cap V_{b2}) \subset V_{b1}\times V_{b2},
	\end{align*}
	thus $W_{ij} < 0$. Similarly, we can show that for each edge $(v_i,v_j)$, (i) if $v_i,v_j\in V_{a2} = (V_{p1}\cap V_{b2})\cup(V_{p2}\cap V_{b1})$, $W_{ij} < 0$, while (ii) if $v_i\in V_{a1},\, v_j\in V_{a2}$ or $v_i\in V_{a2},\, v_j\in V_{a1}$, $W_{ij} > 0$. Hence, $G$ is also antibalanced with the bipartition $V_{a1}, V_{a2}$.
	
	2 and 3. Since $\bar{\mathbf{W}}$ is an non-negative matrix, and $\bar{G}$ is irreducible and has period $2$, then by Perron-Frobenius theorem, (i) $\rho(\bar{\mathbf{W}})$ is real positive and an eigenvalue of $\bar{\mathbf{W}}$, i.e., $\bar{\lambda}_1 = \rho(\bar{\mathbf{W}})$, (ii) this eigenvalue is simple s.t.~the associated eigenspace is one-dimensional, (iii) the associated eigenvector, i.e., $\bar{\mathbf{u}}_1$, has all positive entries and is the only one of this pattern, and (iv) $\bar{\mathbf{W}}$ has $2$ eigenvalues of the magnitude $\rho(\bar{\mathbf{W}})$. For bipartite graphs, we know that the other eigenvalue of magnitude $\rho(\bar{\mathbf{W}})$ is $\bar{\lambda}_n = -\rho(\bar{\mathbf{W}})$.
	
	Then, since $G$ is balanced, from Theorem \ref{the:transition-spect}, (i) $\mathbf{W}$ and $\bar{\mathbf{W}}$ share the same spectrum, and (ii) $\mathbf{U} = \mathbf{S}_{b1}\bar{\mathbf{U}}$, where $\mathbf{U} = [\mathbf{u}_1, \mathbf{u}_2, \dots, \mathbf{u}_n]$ and $\bar{\mathbf{U}} = [\bar{\mathbf{u}}_1, \bar{\mathbf{u}}_2, \dots, \bar{\mathbf{u}}_n]$ containing all the eigenvectors, and $\mathbf{S}_{b1}$ is the diagonal matrix whose $(i,i)$ element is $1$ if $v_i\in V_{b1}$ and $-1$ otherwise. Hence, $\lambda_1 = \bar{\lambda}_1 = \rho(\bar{\mathbf{W}}) = \rho(\mathbf{W})$, and this eigenvalue is simple. Meanwhile, $\mathbf{u}_1 = \mathbf{S}_{b1}\bar{\mathbf{u}}_1$, thus it has the pattern as described and is the only one of this pattern.
	
	Finally, since $G$ is also antibalanced, but with another bipartition $V_{a1}, V_{a2}$, from Theorem \ref{the:transition-spect}, (i) the spectrum of $\mathbf{W}$ is the same as negating that of $\bar{\mathbf{W}}$, and (ii) $\mathbf{U} = \mathbf{S}_{a1}\bar{\mathbf{U}}$, where $\mathbf{S}_{a1}$ is the diagonal matrix whose $(i,i)$ element is $1$ if $v_i\in V_{a1}$ and $-1$ otherwise. Hence similarly, $\lambda_n = -\bar{\lambda}_1 = -\rho(\bar{\mathbf{W}}) = -\rho(\mathbf{W})$, and this eigenvalue is simple. Meanwhile, $\mathbf{u}_n = \mathbf{S}_{a1}\bar{\mathbf{u}}_1$, thus it has the pattern as described and is the only one of this pattern.
\end{proof}

\section{Further details}
\label{sec:app_proofs}
In this section, we give the detailed proofs of some statements in the main text, and more results in the experiments in Figure \ref{fig:ssbm-linear-t}.
\begin{proof}[Constructive proof of Proposition \ref{pro:balanced-anti-bipart}]
	We consider an arbitrary signed tree graph, $T = (V, E, \mathbf{W})$. We first show that $T$ is balanced. For a node $v_i\in V$, there is only one path from $v_i$ to other nodes in $V$. Hence, we can partition each node $v_j\in V\backslash\{v_i\}$ according to the sign of the path from $v_i$ to $v_j$, where $V_1$ contains $v_i$ and the nodes of positive paths, while $V_2$ contains the nodes of negative paths. 
	
	We can show that $V_1, V_2$ is the bipartition corresponding to the balanced structure. Suppose there is an edge $(v_h,v_l)$ in $V_1$, then there are two paths of positive sign from $v_i$ to $v_h$ and $v_l$, and one of them does not go through edge $(v_h,v_l)$. WLOG, the path to $v_h$, denoted $P_h$, does not go through $(v_h,v_l)$. Then $P_h + (v_h,v_l)$ is a path from $v_i$ to $v_l$, thus positive. Hence, edge $(v_h,v_l)$ is positive. Similarly, we can show that each edge in $V_2$ is positive, while each edge between $S$ and $\bar{S}$ is negative. Hence, each signed tree is balanced. 
	
	We now show that $T$ is also antibalanced. We first construct another tree by negating the edge sign, $T' = (V, E, -\mathbf{W})$, and then following the above procedure, we can show that $T'$ is balanced. Hence, $T$ is antibalanced. 
\end{proof}

\begin{proof}[Detailed proof of Proposition \ref{pro:balanced-anti-bipart}]
	The balanced graphs can be equivalently defined as that all cycles have an even number of negative edges, and the antibalanced graphs can be equivalently defined as that all cycles have an even number of positive edges. Hence, a non-tree signed graph $G$ is both balanced and antibalanced if and only if every cycle has both an even number of positive edges and an even number of negative edges. This can only happen when there is no odd cycle, thus $G$ is bipartite. 
\end{proof}

\begin{proof}[Transition-matrix focused proof of Corollary \ref{cor:balance-lambda}]
	When $G$ is balanced, by Theorem \ref{the:transition-spect}, $\mathbf{P}_{sym}$ shares the same spectrum as $\bar{\mathbf{P}}_{sys}$, then $\lambda_1 = \bar{\lambda}_1 = 1$, and $\mathbf{P}$ also has eigenvalue $1$.
	
	We then consider the case when $\mathbf{P}$ has eigenvalue $1$, and show that $G$ is balanced. By Gershgorin circle theorem, we know that the spectral radius of $\mathbf{P}$ satisfies $\rho(\mathbf{P}) \le 1$, hence $1$ is the largest eigenvalue of $\mathbf{P}$, i.e., $\lambda_1 = 1$ which is shared by $\mathbf{P}_{sys}$. Then
	\begin{align*}
		&\mathbf{P}_{sys}\left(\mathbf{D}^{1/2}\mathbf{u}_1\right) = \mathbf{D}^{1/2}\mathbf{u}_1\\
		&\Leftrightarrow \left(\mathbf{D}^{1/2}\mathbf{u}_1\right)^T\mathbf{P}_{sys}\left(\mathbf{D}^{1/2}\mathbf{u}_1\right) = \left(\mathbf{D}^{1/2}\mathbf{u}_1\right)^T\mathbf{D}^{1/2}\mathbf{u}_1\\
		&\Leftrightarrow 2\sum_{(v_i,v_j)\in E}W_{ij}u_{1i}u_{1j} = \sum_{i}d_iu_{1i}^2 = \sum_i\left(\sum_{j}\abs{W_{ij}}\right)u_{1i}^2\\
		&\Leftrightarrow \sum_{(v_i,v_j)\in E}\abs{W_{ij}}\left(u_{1i} - sign(W_{ij})u_{1j}\right)^2 = 0\\
		&\Leftrightarrow 
		\begin{cases}
			u_{1i} = u_{1j},\quad W_{ij} > 0,\\
			u_{1i} = -u_{1j},\quad W_{ij} < 0,
		\end{cases}
	\end{align*}
	where $\mathbf{u}_1 = (u_{1i})$ and $sign(\cdot)$ returns the sign of the value. Then we note that such a vector exists if and only if we can find a bipartition $V_1, V_2$ of $V$ s.t.~all edges inside $V_1$ or $V_2$ are positive, while those between the two are negative, i.e., $G$ is balanced.
\end{proof}

\begin{figure}[ht]
	\centering
	\hspace*{-2em}
	\begin{tabular}{cc}
		\includegraphics[width=.48\textwidth]{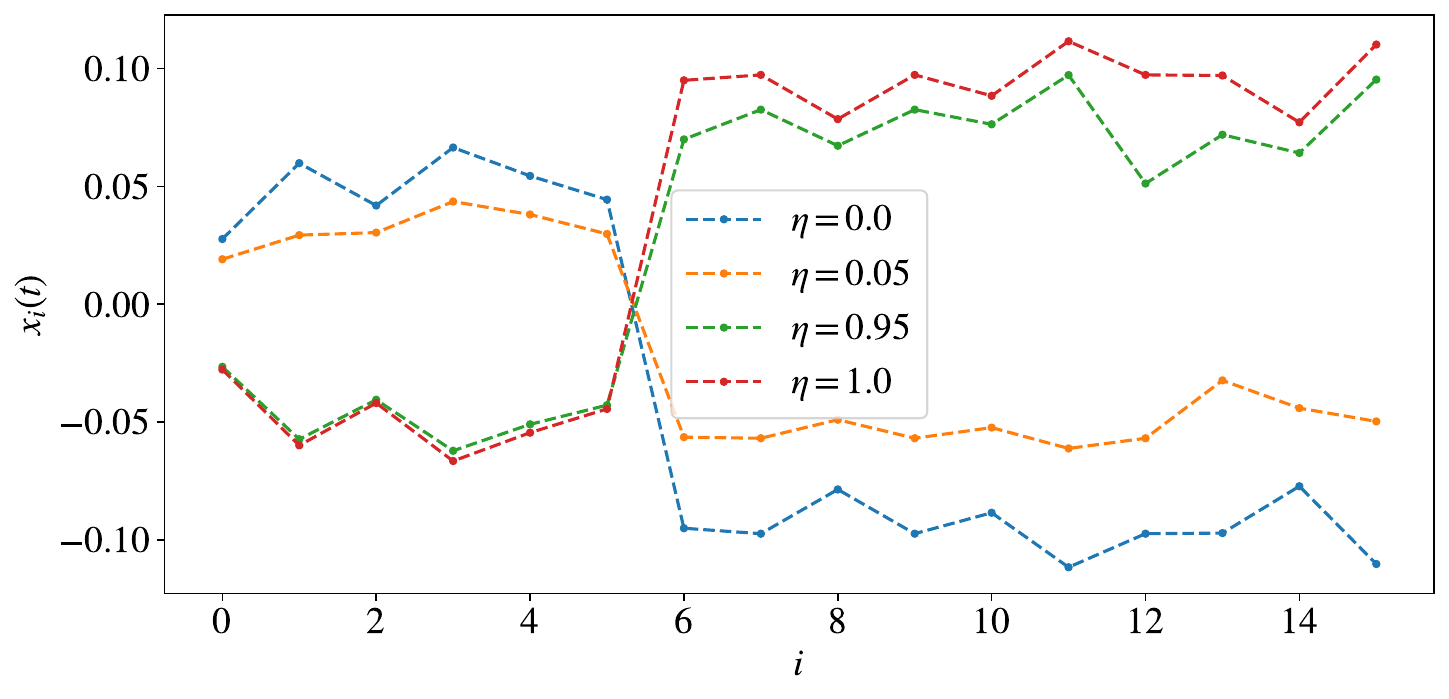} & \includegraphics[width=.48\textwidth]{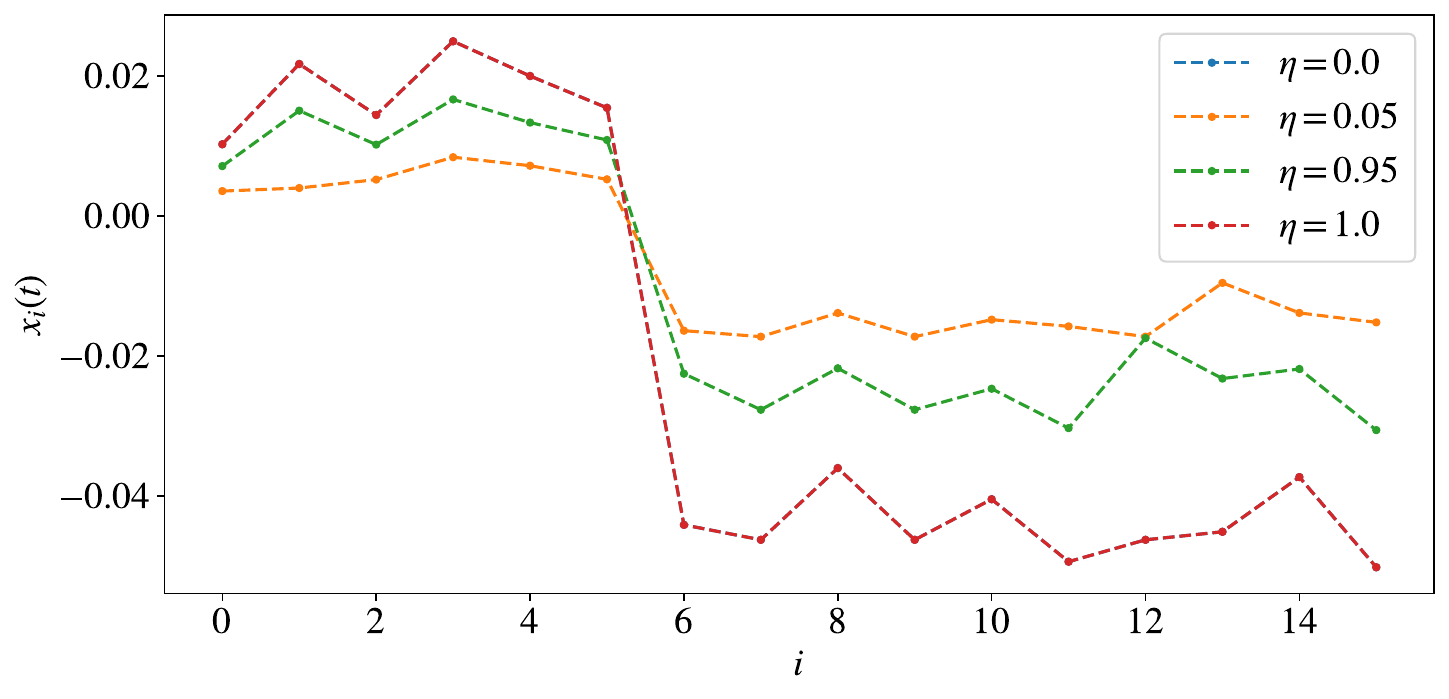} \\
		\includegraphics[width=.48\textwidth]{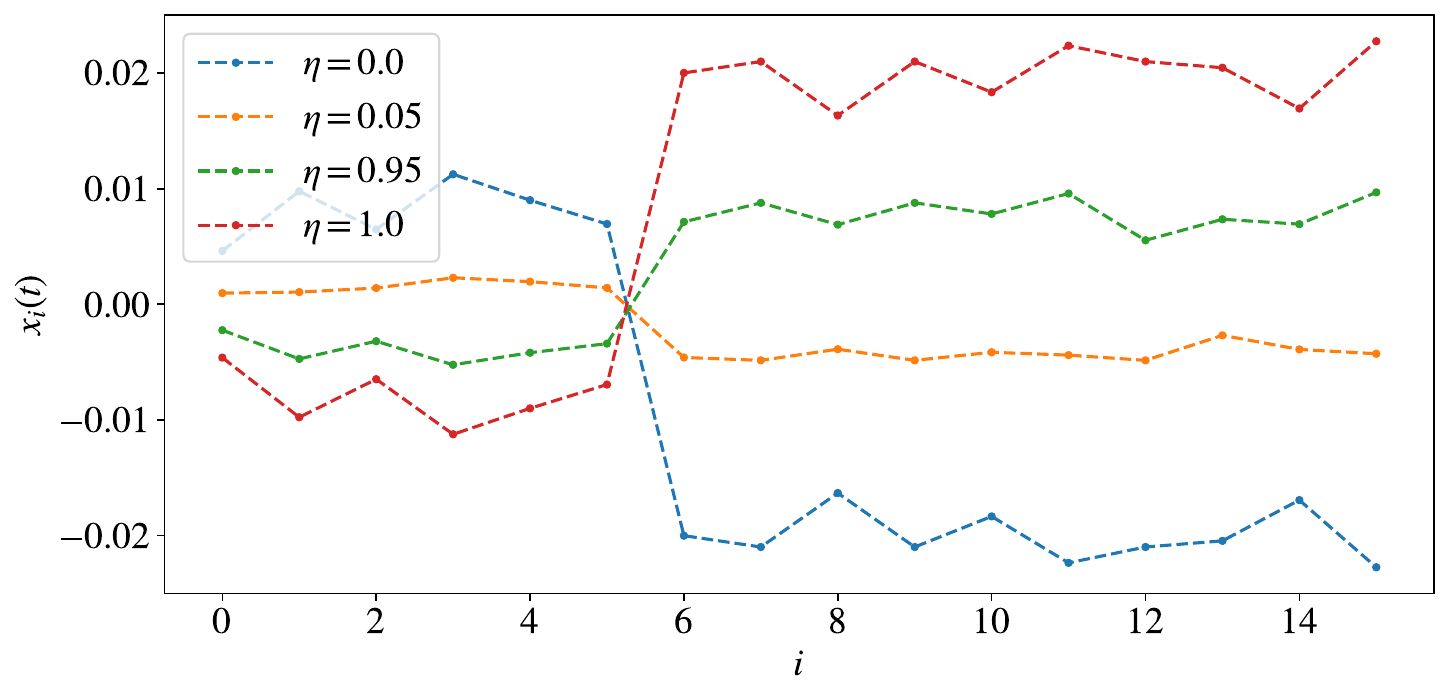} & \includegraphics[width=.48\textwidth]{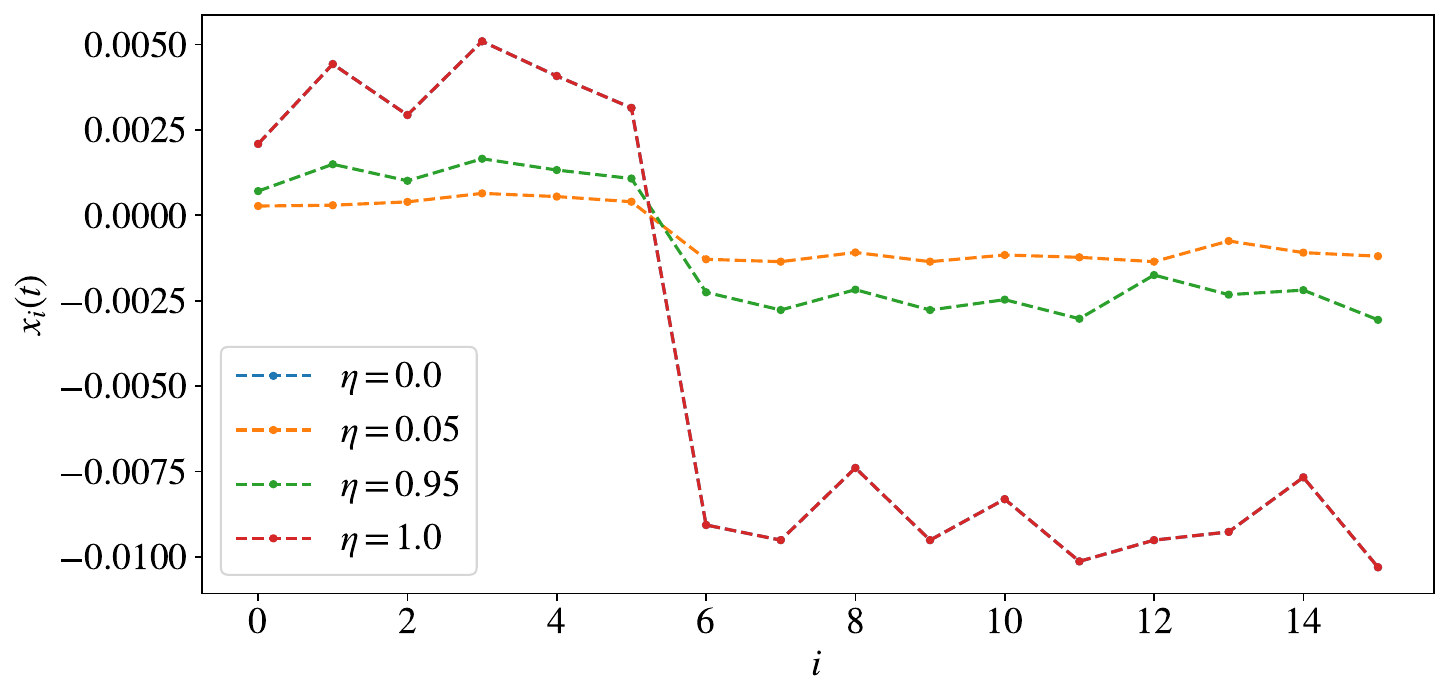}
	\end{tabular}
	\caption{State values from the linear adjacency dynamics on the networks in Figure \ref{fig:exp-ssbms} that are balanced ($\eta = 0$, blue), close to being balanced ($\eta = 0.05$, orange), antibalanced ($\eta=1$, red), and close to being antibalanced ($\eta = 0.95$, green), when $t=5$ (upper left), $10$ (upper right), $15$ (bottom left) and $20$ (bottom right).}
	\label{fig:ssbm-linear-t}
\end{figure}

\section*{Acknowledgments}
%The work was partially done while Y.T.~was at Oxford, where she was funded by the EPSRC Centre for Doctoral Training in Industrially Focused Mathematical Modelling (EP/L015803/1) in collaboration with Tesco PLC.
We thank Gesine Reinert and Ginestra Bianconi for useful discussions and suggestions. Y.T.~is funded by the Wallenberg Initiative on Networks and Quantum Information (WINQ). The work was partially done when Y.T.~was at Mathematical Institute, University of Oxford, where she was funded by the EPSRC Centre for Doctoral Training in Industrially Focused Mathematical Modelling (EP/L015803/1) in collaboration with Tesco PLC. R.L.~acknowledges support from the EPSRC Grants EP/V013068/1 and EP/V03474X/1. 

\bibliographystyle{plain}
\bibliography{references}

%%%%%%%%%%%%%%%%%%
%\clearpage
%\appendix

\end{document}